\numberwithin{equation}{section}
\newtheorem{thm}{Theorem}[section]
\newtheorem{prop}{Proposition}[section]
\theoremstyle{definition}
\newtheorem{g-list}{Properties}[section]
\theoremstyle{remark}
\newtheorem{rem}{Remark}
\newtheoremstyle{RHP}  		
  {3pt}					
  {3pt}					
  {\itshape}				
  {}						
  {\bfseries}				
  {}						
  {.5em}					
  {Riemann-Hilbert Problem #2 #3} 
\theoremstyle{RHP}
\newtheorem{rhp}{}[section]
\newcommand{\R}{\mathbb{R}}
\newcommand{\C}{\mathbb{C}}
\newcommand{\N}{\mathbb{N}}
\newcommand{\Z}{\mathbb{Z}}
\newcommand{\Poly}{\mathbb{P}}
\newcommand{\one}{\mathbbm{1}}
\newcommand{\I}{\mathcal{I}}
\newcommand{\ie}{i.e.}
\newcommand{\cf}{cf.}
\newcommand{\etal}{et al.}
\newcommand{\U}{\mathcal{U}}
\newcommand{\Scal}{\mathcal{S}}
\newcommand{\Acal}{\mathcal{A}}
\newcommand{\Bcal}{\mathcal{B}}
\newcommand{\nin}{\not\in}
\newcommand{\lp}{\left(}
\newcommand{\rp}{\right)}
\newcommand{\lb}{\left[}
\newcommand{\rb}{\right]}
\newcommand{\bigo}[1]{\mathcal{O} \left( #1 \right)}
\newcommand{\pd}[3][ ]{\frac{\partial^{#1} #2}{\partial #3^{#1} } }
\newcommand{\od}[3][ ]{\frac{\mathrm{d}^{#1} #2}{\mathrm{d} #3^{#1} } }
\newcommand{\vect}[1]{\bm{#1}}
\newcommand{\sig}{ {\sigma_3} }
\newcommand{\triu}[2][1]{\begin{pmatrix} #1 & #2 \\ 0 & #1 \end{pmatrix}}
\newcommand{\tril}[2][1]{\begin{pmatrix} #1 & 0 \\ #2 & #1 \end{pmatrix}}
\newcommand{\diag}[2]{\begin{pmatrix} #1 & 0 \\ 0 & #2 \end{pmatrix}}
\newcommand{\offdiag}[2]{\begin{pmatrix} 0 & #1 \\ #2 & 0 \end{pmatrix}}
\newcommand{\eps}{\epsilon}
\newcommand{\lam}{\lambda}
\newcommand{\Abel}{\mathcal{A}}
\newcommand{\RR}{\mathcal{R}}
\newcommand{\SSS}{\mathcal{S}}
\newcommand{\Ecal}{\mathcal{E}}
\DeclareMathOperator{\re}{Re}
\DeclareMathOperator{\imag}{Im}
\DeclareMathOperator{\sgn}{sgn}
\DeclareMathOperator*{\res}{Res}
\DeclareMathOperator{\cn}{cn}
\DeclareMathOperator{\dn}{dn}
\DeclareMathOperator{\sd}{sd}
\DeclareMathOperator{\jacobiZ}{Z}
\title[sharp shock regularization]{Regularization of a sharp shock by the defocusing nonlinear Schr\"odinger equation}
\author{Robert Jenkins}
\address{SISSA, Via Bonomea 265, 34136 Trieste, Italy}
\email{\texttt{rjenkins@sissa.it}}
\date{\today}
\begin{document}

\maketitle

\begin{abstract}
The defocusing nonlinear Schr\"odinger (NLS) equation is studied for a family of step-like initial data with piecewise constant amplitude and phase velocity with a single jump discontinuity at the origin. Riemann-Hilbert and steepest descent techniques are used to study the long time/zero-dispersion limit of the solution to NLS associated to this family of initial data. We show that the initial discontinuity is regularized in the long time/zero-dispersion limit by the emergence of five distinct regions in the $(x,t)$ half-plane. These are left, right, and central plane waves separated by a rarefaction wave on the left and a slowly modulated elliptic wave on the right. Rigorous derivations of the leading order asymptotic behavior and error bounds are presented.
\end{abstract}

\section{Introduction}
In this paper we study the defocusing nonlinear Schr\"odinger equation (NLS), given here with the normalization 
\begin{equation}\label{0.1}
	i \eps \psi_t + \frac{\eps^2}{2} \psi_{xx} - |\psi^2| \psi = 0,
\end{equation}
for a fixed class of piecewise constant, steplike, initial data (\cf \eqref{0.2}).
The NLS equation is a canonical model of dispersive wave dynamics, and has been shown to be an excellent model for a wide variety of disparate physical systems, including water waves \cite{Lake77}; plasmas \cite{Taylor70}, \cite{Zak}; nonlinear optics \cite{Ania06}; and Bose-Einstein condensates \cite{Hoeffer06}. Of particular interest is the case in which the dispersion parameter $\eps \ll 1$, which is the natural scaling in both BECs and nonlinear optics\cite{Hoeffer06}, \cite{Kodama99}. The NLS equation is also of intrinsic mathematical interest as one of the principal examples of a completely integrable nonlinear evolution equation.

The zero dispersion limit, \ie\  $\eps \to 0$, of the NLS equation \eqref{0.1} is better understood by introducing the Madelung variables \cite{Madelung},
\begin{equation}\label{fluid vars}
	\rho(x,t) = | \psi(x,t) |^2 \qquad 
	u(x,t) =\eps \imag \lb \partial_x \log \lp  \psi(x,t) \rp \rb
\end{equation}
which transforming the NLS equation into the system of conservation laws
\begin{subequations}\label{conservation}
\begin{align}
\label{mass}	&\pd{\rho}{t} + \pd{}{x}(\rho u) = 0, \\
\label{momentum}	&\pd{}{t}(\rho u) + \pd{}{x} \lp \rho u^2 + \frac{1}{2} \rho^2 \rp 
		= \frac{\eps^2}{4}\pd{}{x} \lp \rho \pd[2]{}{x} \lp \log \rho \rp \rp.
\end{align}
\end{subequations}
When $\eps=0$ these are the Euler equations for an ideal compressible fluid (gas) with local fluid density $\rho$, velocity $u$, and positive pressure $P = \frac{1}{2} \rho^2$. 
It is well known that the Euler system admits solutions which develop gradient catastrophes (infinite derivatives) in finite time. 
However, for $\eps>0$, as the wave steepens the right hand side of the momentum conservation law \eqref{momentum} cannot be treated as a perturbative term and shock formation is avoided by the emergence of expanding regions of $(i)$ rarefaction waves and/or $(ii)$ the onset of slowly modulating $\bigo{\eps}$ wavelength oscillations with $\bigo{1}$ amplitude known as \textit{dispersive} (sometimes \textit{collisionless}) \textit{shock waves} (DSWs). Clearly, when DSWs emerge, a zero dispersion limit cannot exist in the classical sense. Nevertheless, a weak limit does exist for NLS as was shown in \cite{JLM} following the work of \cite{LL83}, \cite{Venakides90} on the Korteweg de Vries (KdV) equation. This weak limit can be understood in terms of the unique minimizer of a certain minimization problem with constraints. The minimizer itself is characterized by its support, which typically is a union of disjoint intervals. The endpoints of these intervals satisfy a system of quasilinear hyperbolic equations
\begin{equation}\label{intro Whitham equations}
	\pd{\lam_j}{t} + v_j(\vect \lam) \pd{ \lam_j}{x} = 0, \quad j = 1,2,\dots, 2G+2  
\end{equation}
where $\vect \lam \in \R^{2G+2}$ and $\lambda_j > \lambda_k$ for $j<k$, called the Whitham equations after their first discoverer \cite{Whit}.

The dynamics of the DSWs themselves can be described as slowly modulating single or multiphase waves, whose modulations are also governed by the Whitham equations \cite{GP87, GKE92}. The modulation theory was worked out for $G=1$ in \cite{Whitham} and for $G \geq 2$ in \cite{FFM} in the context of KdV. The Whitham modulation theory for NLS was worked out in \cite{FL}. In the years following, Whitham theory has been used in the optics and fluid dynamic communities to investigate increasingly complicated structures: including the initial data problem for piecewise constant data (the type considered in this paper) \cite{Kodama99}, \cite{BK06}; the interaction of two DSWs \cite{AH07}; and in \cite{El} a classification of the types of solutions of the Whitham-NLS system for initial data with a discontinuity of the form \eqref{0.2} was given, to name but a few. 

At the same time, the development of the inverse scattering technique for studying completely integrable nonlinear evolution equations has resulted in a huge amount of work on the NLS equation. In particular, the 
nonlinear steepest descent method of Deift and Zhou \cite{DZ93} \cite{DZ94} allows one to make completely rigorous arguments to obtain, in principal, full asymptotic expansions of the solutions of integrable systems in various asymptotic limits. The bulk of the work being done in the integrable systems community has focused on rapidly decreasing initial data $\psi_0(x)$, which decays to zero sufficiently fast as $|x| \to \infty$, \cite{DVZ98}, \cite{DiFM05}, \cite{KMM03}, \cite{TVZ04}, \cite{CG10}, \cite{BT10}, \cite{JM11},. Comparatively, much less time has been devoted to families of non-vanishing initial data. The family of so called finite density initial data $\psi_0(x)$ satisfying $\psi_0(x) \to \varrho e^{i \varphi_\pm}$ as $x \to \pm \infty$ for constants $\varrho > 0$ and $\varphi_\pm \in [0, 2\pi)$ is probably the best understood of these non-vanishing families. As was shown in \cite{BP82, CK85, FT07, DP13}, the scattering theory for non-vanishing data must be constructed on multi-sheet Riemann surfaces, a complication which is not necessary for vanishing data. Nevertheless, results for long time asymptotics for \eqref{0.1} with finite density data were worked out first by Its \etal \ in \cite{Its1}, \cite{Its2} and recently Vartanian \cite{Vartanian1} \cite{Vartanian2} has found very detailed asymptotic formulae for the long time asymptotic behavior of finite density data with and without the presence of (dark) solitons. Another family of nonvanishing data are ``step-like" initial data 
which asymptotically approaches different plane wave states as $x$ approaches either infinity, both in the context of the NLS equation \cite{BKS11, BP82, BV07, CX13} and other important integrable evolution equations \cite{EGK13, KA12, BE00}.  

In this paper it is our goal to make a completely rigorous study of the long-time/zero-dispersion behavior (for the scale invariant data \eqref{0.2} we consider these are the same limit as we will make clear later) of the solution of the NLS equation \eqref{0.1} for the family of sharp step initial data 
\begin{equation}\label{0.2}
	\psi(x,t=0) = \psi_0(x) := \begin{cases}
		1  & x < 0 \\
		A \exp \lp -2 i \mu x/ \eps \rp & x > 0. 
	\end{cases}
\end{equation}
for real constants $A>0$ and $\mu$ using the machinery of inverse scattering and nonlinear steepest descent. 

The hyperbolic nature of the NLS equation suggests that for large $x$ the solution for initial data \eqref{0.2} should resemble a plane wave (zero phase oscillation) with Riemann invariants $\lam_{1,2} = -u/2 \pm \sqrt{\rho}$ (cf. Section~\ref{sec:zero phase}) whose values as $x\to -\infty$ approach $\pm 1$ and approach  
\begin{equation}\label{invariants1}
\lambda_\pm = \mu \pm A
\end{equation}
as $x \to +\infty$. In \cite{El}, using Whitham theory, the authors enumerate six possible long-time behaviors for the data \eqref{0.2} depending upon the relative ordering of these constants $\{-1,1, \lam_-, \lam_+\}$. In each case the discontinuity is regularized by the emergence of two zones in which either DSWs or fan like rarefactions connect three constant states, see Figure~\ref{fig:phasediagram}. 

Our results, which follow below, provide a completely rigorous proof that the leading order asymptotic behavior of the density $\rho$ and velocity $u$ are as predicted by the Whitham theory, and gives bounds on the error. Moreover, our methods provide a superior descriptions of the solution $\psi(x,t)$ as we are able to compute the leading order phase of the solution $\psi(x,t)$. This include terms which are lost in the Whitham averaging process, but nonetheless make $\bigo{1}$ contributions to the solution $\psi$ of \eqref{0.1}. Our paper provides all the tools necessary to easily deal with all six cases identifies in \cite{El}. However, for the sake of brevity, we will provide full details for only one case: $-1 < \lam_- < \lam_+ < 1$ (case $i.$ in \cite{El}), in which both a DSW and rarefaction waves emerge, see Figure~\ref{fig:phasediagram}. 

\begin{figure}[htb]
	\begin{center}
		\includegraphics[width=.7\textwidth]{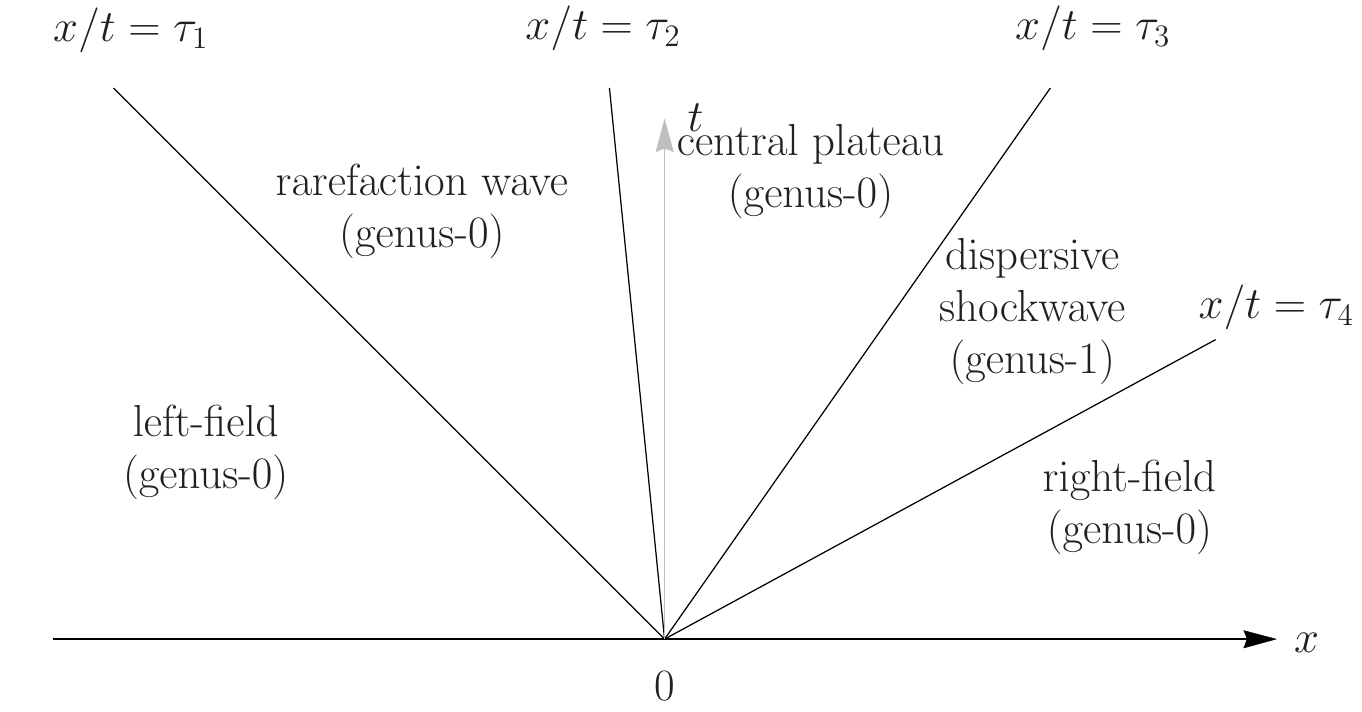}
		\caption{
		The leading order asymptotic behavior of the solution of \eqref{0.1} with initial data \eqref{0.2} in the five different regions of the $(x,t)$-half-plane in the situation when the right Riemann invariants are inside the left invariants: $-1 < \lam_- < \lam_+ < 1$. The transition speeds $\tau_j$, for this ordering of the invariants, are given in Theorem~\ref{thm:main}. 
		\label{fig:phasediagram}
		}
	\end{center}
\end{figure}

In order to compute the phase of the solution we need the reflection coefficient which is part of the scattering data computed in the inverse scattering procedure. For the initial data \eqref{0.2}, and $\lam_\pm$ defined by \eqref{invariants1}, the reflection coefficient generated from \eqref{0.2} is 
$$
	r(z) = -i \frac{\sqrt{z-1}\sqrt{z-\lam_-} - \sqrt{z-\lam_+}\sqrt{z+1}}
	{\sqrt{z-1}\sqrt{z-\lam_-} + \sqrt{z-\lam_+}\sqrt{z+1}},
$$
where each of the roots is principally branched. When $(\lam_-, \lam_+) \subset (-1,1)$, which is the setting or our result, it is easy to check that $r(z)$ is branched on $ (-1,1) \backslash (\lam_-, \lam_+)$, with unit modulus on either side of the branch, \ie,  $|r(z\pm i 0)| = 1$ for $z \in (-1,1) \backslash (\lam_-, \lam_+)$, and $r(z) \sim z^{-1}$ as $z\to \infty$. 
 
\begin{figure}[htb]
\begin{center}
\includegraphics[width=.4\textwidth]{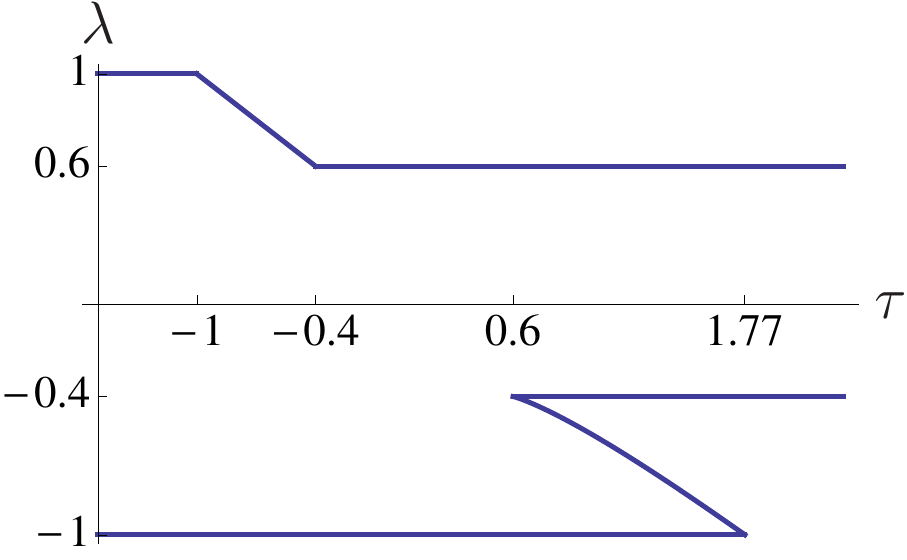}  
\caption{The self-similar evolution of the Riemann invariants $\lam_i(\tau)$ with respect to the similarity variable $\tau = x/t$. In the figures the constants were taken as $\eps=0.001$, $A =0.5$, $\mu = 0.1$ (\ie\ the Riemann invariants of the right side $\lambda_\pm = -u/2 \pm \sqrt{\rho}$ are $-0.4$ and $0.6$ respectively, which lie between the left invariants $\pm 1$).
\label{fig:invariants}
}
\includegraphics[width=.4\textwidth]{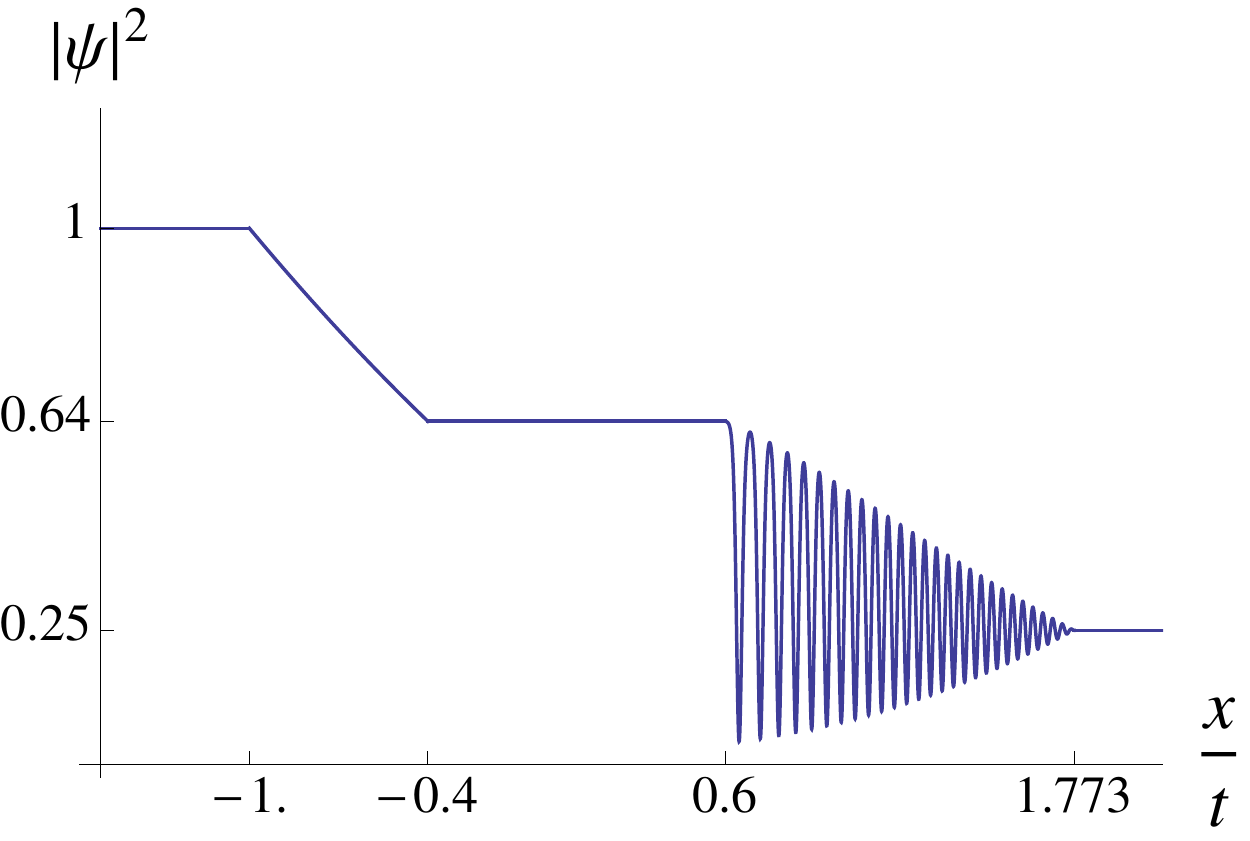}
\hspace{.015\textwidth}
\includegraphics[width=.4\textwidth]{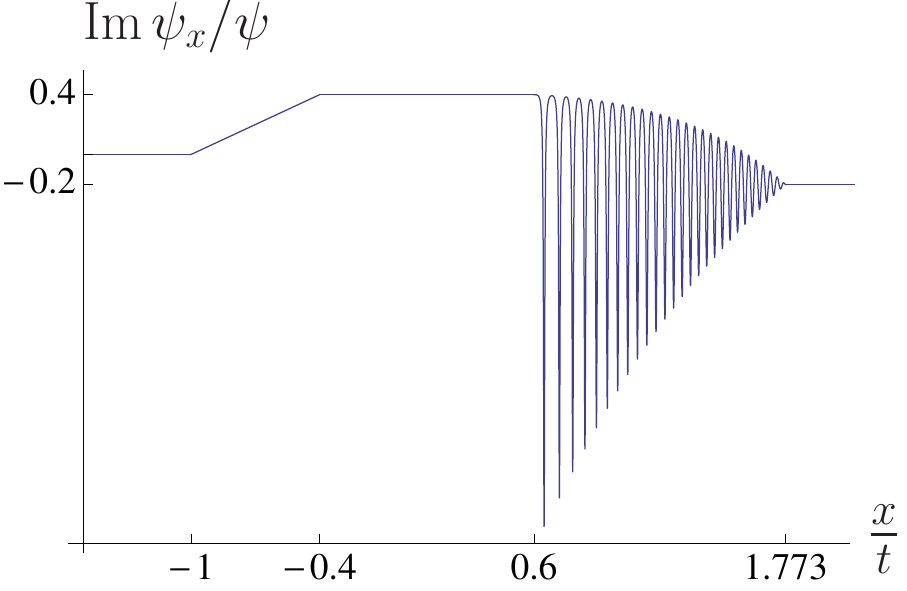}
\caption{\emph{Left:} The leading order asymptotic behavior of the density $\rho = |\psi|^2$ and \emph{Right:} the leading order asymptotic behavior of the velocity $u = \imag \psi_x/\psi$ 
related to the hydrodynamic interpretation \eqref{conservation} of the solution $\psi$ of NLS \eqref{0.1} 
in the small-dispersion/long-time limit for pure-step initial data \eqref{0.2}. 
The parameters used to generate the figures are the same used in Figure~\ref{fig:invariants}.
The initial discontinuity at the origin smooths itself by the emergence of a rarefaction zone on the left, and a modulated elliptic wave front (a DSW) on the right connected by a constant central plateau. 
\label{fig:density}
}
\end{center}
\end{figure}  
 
\begin{thm}\label{thm:main}
Given initial data \eqref{0.2}, if the Riemann invariants $\lambda_\pm = \mu \pm A$ satisfy $-1 < \lambda_- < \lambda_+ < 1$, then the long-time/small-dispersion asymptotic behavior of the solution $\psi(x,t)$ of the NLS equation \eqref{0.1} is given by one of the five following formulae depending on the value of the similarity variable $\tau =x/t$ relative to the transition speeds $\tau_j$ identified as:
\begin{samepage}
\begin{enumerate}[1.]
\item $\tau_1 =-1$ 
\item $\tau_2 = -\frac{1}{2} \lp -1 + 3\lam_+ \rp$ 
\item $\tau_3 =-\frac{1}{2} \lp -1 +  2 \lambda_- +\lambda_+ \rp $ 
\item $\tau_4 =-\frac{1}{2} \lp \lambda_+  + \lambda_- - 2 \rp + 
	\frac{2(1+\lambda_-)(1+\lambda_+)} {\lambda_+ + \lambda_- + 2} $ 
\end{enumerate}
\end{samepage}

\ \\
\begin{enumerate}[1.]
\item For $\tau < \tau_1$, the solution is asymptotically a plane wave with constant amplitude. 
\begin{gather}
	\psi(x,t) = e^{-i t /\eps} e^{-i \phi(x/t) } + \bigo{ \sqrt{ \frac{\eps}{t}}\log \frac{\eps}{t} } \\
	\phi(\tau) = \frac{1}{\pi}  \Bigg(
	 \int\limits_{-\infty}^{-1} + \int\limits_1^{\xi_+(\tau)} \Bigg) \frac{ \log( 1 - |r(z)|^2)}{\sqrt{z^2-1}} dz
	 + \frac{1}{\pi} \Bigg( \int\limits_{-1}^{\lambda_-} + \int\limits_{\lambda_+}^{1} \Bigg) 
	 \frac{  \arg( r_+(z)) }{\sqrt{1-z^2}} d z \nonumber \\
	 \xi_+(\tau) = \frac{1}{4} \lb \sqrt{\tau^2 +8} - \tau \rb \nonumber
\end{gather}
\item For $\tau_1 < \tau < \tau_2$, the solution is described by the rarefaction 
\begin{gather}
	\psi(x,t) = \lp \frac{2t-x}{3t} \rp  e^{(-i/3\eps)(2 t - 2 x  - x^2/t)}  e^{ - i \phi(x/t) } + \bigo{ \lp \frac{\eps}{t} \rp^{2/3} } \\
	\nonumber 
	\phi(\tau) = 
	\frac{1}{\pi} \lp \int_{-\infty}^{-1}  
	\frac{\log(1 -|r(z)|^2)}{\sqrt{(z+1)(z-\lam_s(\tau))}} dz +
	\Bigg( \int_{-1}^{\lambda_-} + \int_{\lambda_+}^{\lambda_s(\tau)} \Bigg) 
	\frac{ \arg( r_+(z)) }{\sqrt{(z+1)(\lam_s(\tau)-z)}} dz \rp \\
	\lam_s(\tau) = \frac{(1-2\tau)}{3} \nonumber
\end{gather}
\item For $\tau_2 < \tau < \tau_3$ the solution is asymptotically described by the (unmodulated) plane wave
\begin{gather}
	\psi(x,t) = \sqrt{\rho}  e^{i (k x - \omega t)/\eps}  e^{ - i \phi_0 } + \bigo{ e^{-ct/\eps} } \\
	\nonumber
	\rho = \lp \frac{\lambda_+ + 1}{2}\rp^2  \qquad k =-\lp  \lambda_+ -1 \rp 
	\qquad \omega = \frac{1}{2} k^2 + \rho \\
	\nonumber
	\phi_0 =  \frac{1}{\pi} \lp \int_{-\infty}^{-1}  
	\frac{\log(1 -|r(z)|^2 )}{\sqrt{(z+1)(z-\lambda_+)}} dz +
	\int_{-1}^{\lambda_-} \frac{ \arg( r_+(z)) }{\sqrt{(\lambda_+ - z)(z+1)}} dz \rp
\end{gather}
\item For $\tau_3 < \tau < \tau_4$ the asymptotic behavior of the solution is described by a slowly modulated one-phase (elliptic) wave, a dispersive shock wave, 
$$
	\psi(x,t) = \sqrt{\rho(x,t)} e^{i S(x,t)} + \bigo{ \lp\frac{\eps}{t}\rp^{2/3}}, 
$$
whose amplitude and phase are given by 
\begin{equation}
	\begin{gathered}
	\rho(x,t) = 
	a_1^2 - (a_1^2 -a_3^2) \dn^2 \lp \sqrt{a_1^2 - a_3^2} \lp \frac{x-Vt}{\eps} +\phi \rp -K(m), m \rp \\
	\begin{aligned}
	S(x,t) =&  \frac{Vx -  ( a_1^2 + a_2^2 + a_3^2-V^2)t  + (x-2Vt)\eta }{\eps} 
		+ 2 (V+\eta)\phi \lp \frac{x}{t} \rp \\
	+& \arg \left\{ \theta_3 \lb \frac{\pi}{2K(m)} \sqrt{a_1^2-a_3^2} \lp \frac{x-Vt}{\eps} + \phi \lp \frac{x}{t} \rp \rp
		-i\pi \frac{F(\varphi, 1-m)}{K(m)}\rb \right\} 
	\end{aligned} \\
	u(x,t) = \eps \pd{S}{x} = \frac{a_1 a_2 a_3}{\rho(x,t)} + V + \bigo{\eps}
	\end{gathered}
\end{equation}	
	The parameters $a_1$, $a_2$, $a_3$, $V$, and the elliptic modulus $m$ are rational functions of 
	the Riemann invariants $\vect \lam = (\lambda_1, \lambda_2, \lambda_3, \lambda_4) = (\lam_+, \lam_-, \lam_s(x/t), -1)$ given by \eqref{one phase parameters} where $\lam_s(\tau)$ satisfies the self-similar system of genus one Whitham equations \eqref{3.26}. The other parameters are
	\begin{equation}
	\begin{gathered}
	\phi(\tau) = \frac{1}{2\pi} \int_{-\infty}^{-1} \frac{ \lp z + V \rp \log( 1 - |r(z)|^2) }
	{\sqrt{\prod_{j=1}^4 (z- \lam_j)}} dz 
	 - \frac{1}{2\pi} \int_{-1}^{\lam_s(\tau)} \frac{ \lp z + V \rp \arg r_+(z)}
	 {\sqrt{-\prod_{j=1}^4 (z- \lam_j)}} dz, \\
	 \eta = \lam_1 - (\lam_1 - \lam_4) Z \lp -\frac{\lam_3-\lam_4}{\lam_1-\lam_4}, n \rp, \\
	 n = - \lp \frac{\lam_3-\lam_4}{\lam_1-\lam_4}\rp,  \qquad 
	 \varphi = \arcsin \sqrt{\frac{\lam_2-\lam_4}{\lam_1-\lam_4}}.
	\end{gathered}
	\end{equation}
	 Here $K$ and $E$ are the complete elliptic integrals of the first and second kind respectively, $F$ is the incomplete elliptic integral of the first kind, and $Z$ is the Jacobi zeta function.

\item For $\tau > \tau_4$ the leading order behavior of the solution is given by a plane wave which, up to the phase $e^{-i\chi(x/t)}$, is the time-evolution of the right half of the initial data:
\begin{gather}
	\psi(x,t) = A e^{ -i( 2\mu x + (A^2+2\mu^2) t)/\eps}  e^{ - i \phi(x/t) } 
	+ \bigo{ \sqrt{\frac{\eps}{t}} \log \frac{\eps}{t}  } 
	\\ \nonumber
	\phi(\tau) = 
	\frac{1}{\pi}  \int_{-\infty}^{\xi_-(\tau)}  \frac{\log(1 -|r(z)|^2 )}{\sqrt{(z-\lambda_+)(z-\lambda_-)}} dz, 
	\\ \nonumber
	\xi_-(\tau) = \frac{2\mu - \tau}{4} - \frac{\sqrt{(2\mu+\tau)^2+8A^2}}{4}.   
\end{gather}
\end{enumerate}	
\end{thm}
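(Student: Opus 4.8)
The plan is to recast the solution $\psi(x,t)$ of \eqref{0.1} as a $2\times 2$ matrix Riemann-Hilbert problem $\bfY$ whose jump is built from the reflection coefficient $r(z)$ displayed above, and then to carry out a Deift-Zhou nonlinear steepest descent analysis with large parameter $t/\eps$; the scale invariance of the data \eqref{0.2} makes the long-time and zero-dispersion limits coincide, so that all quantities depend on $(x,t)$ only through $\tau=x/t$. Since the data is step-like, $r(z)$ carries branch cuts on $(-1,1)\setminus(\lam_-,\lam_+)$ with $|r(z\pm i0)|=1$ there. The first task is to regularize the oscillatory factor $e^{2it\theta(z)/\eps}$ by conjugating with a scalar $g$-function, which I would take to be the Abelian integral on the hyperelliptic curve whose branch points are the relevant Riemann invariants among $\{-1,1,\lam_-,\lam_+\}$ together with the moving endpoint $\lam_s$. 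I would normalize $g$ so that $g-\theta$ is analytic at infinity and so that after conjugation the jump is piecewise constant on the bands and the off-diagonal entries are exponentially subdominant on the gaps.

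First I would determine, region by region in $\tau$, the correct genus and branch structure of $g$. For $\tau<\tau_1$, for $\tau_2<\tau<\tau_3$, and for $\tau>\tau_4$ the curve has genus zero and $g$ is elementary; for $\tau_1<\tau<\tau_2$ the moving branch point $\lam_s(\tau)=(1-2\tau)/3$ sweeps out a rarefaction fan; and for $\tau_3<\tau<\tau_4$ the curve has genus one with $\lam_s(x/t)$ evolving by the self-similar genus-one Whitham system \eqref{3.26}. The decisive analytic input is a family of sign inequalities: on each gap $\re(ig)$ must have the sign that forces the off-diagonal jump to decay, while on each band the relevant real part must be constant. Verifying that these inequalities hold throughout each $\tau$-interval, and that they degenerate precisely at the transition speeds $\tau_j$ so as to trigger a change of genus, is what pins down the five regions; because it requires global control of a multivalued function of four moving parameters, this is where I expect the main difficulty to lie.

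Once the sign structure is secured, the deformation is standard in outline. I would open lenses off the real axis, show the jump is uniformly exponentially close to $\Ii$ away from the band endpoints and the stationary phase points, and replace the problem by an explicitly solvable outer parametrix $\bfP$. In the genus-zero regions $\bfP$ is built from fourth-root functions of $z$, producing the plane-wave and rarefaction formulae; in the genus-one region it is built from Riemann theta functions and Jacobi elliptic functions attached to the curve, which is exactly what yields the $\dn^2$ amplitude and the $\arg\theta_3$ phase in the statement. The outer parametrix must then be corrected by local parametrices: an Airy parametrix at the soft band endpoints, giving the $\bigo{(\eps/t)^{2/3}}$ errors of the rarefaction and the dispersive shock; a parabolic-cylinder model at the interior stationary point $\xi_\pm(\tau)$ sitting on a band in regions $1$ and $5$, giving the $\bigo{\sqrt{\eps/t}\,\log(\eps/t)}$ plane-wave errors; and no correction at all on the central plateau of region $3$, where the absence of a stationary point leaves an exponentially small $\bigo{e^{-ct/\eps}}$ residue.

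Finally I would assemble these into a parametrix and study the ratio $\mathbf{R}=\bfY\bfP^{-1}$, which solves a small-norm Riemann-Hilbert problem whose jump differs from $\Ii$ by the stated order in each region; standard small-norm theory then gives $\mathbf{R}=\Ii+\bigo{\cdot}$ with the advertised bounds. Reading off the appropriate entries of the reconstruction formula as $z\to\infty$ recovers $\psi$, its leading amplitude $\sqrt{\rho}$, and its full leading phase. The phase integrals $\phi(\tau)$ emerge as the constant-in-$z$ normalization of the scalar $g$-function together with the contribution of $\log(1-|r|^2)$ on the gaps and $\arg r_+$ on the bands, entering through the solution of the associated scalar jump problem. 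Capturing this $\bigo{1}$ phase, rather than only the Whitham-averaged density and velocity, is exactly the information lost in the averaging procedure and recovered by the steepest descent analysis.
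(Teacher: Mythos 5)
Your proposal follows essentially the same route as the paper: a Riemann--Hilbert formulation with the explicit reflection coefficient, self-similar $g$-functions whose genus and soft-edge motion (governed by the Whitham equations) change precisely at the $\tau_j$, lens openings with plane-wave/theta-function outer parametrices, Airy and parabolic-cylinder local models yielding exactly the stated error rates, and small-norm theory plus a scalar jump (``$D$-function'') problem to capture the $\bigo{1}$ phase. The only slip worth flagging is that in regions $1$ and $5$ the stationary point $\xi_\pm(\tau)$ lies in the gap (where the jump is $(1-rr^*)^{\sigma_3}$), not on a band, but this does not affect the structure or outcome of the argument.
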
	

\begin{rem}
 The convergence of the solution $\psi(x,t)$  as $\eps \to 0$ to the given leading order formulae is uniform in any sector $\{ (x,t) \in \R \times (T, \infty) \, :\,  \frac{x}{t} \in [a,b] \}$ which avoids the transition speeds, \ie\ $\tau_j \nin [a,b]$, $j=1,2,3,4$. Moreover, though perhaps not immediately obvious from the formulae, the leading order behavior is continuous across each of the four transitions as can be checked by hand, or as seen in Figure~\ref{fig:density}. 
\end{rem}  

\begin{rem}
The leading order hydrodynamic density $\rho = |\psi(x,t)|^2$ and velocity $u(x,t) = \eps \imag \lb \partial_x \log \psi(x,t) \rb$ computed from the formulae in Theorem~\ref{thm:main} agree with the results predicted by Whitham theory techniques in \cite{El}. The new contribution of this paper is the computation of the complex phase of $\psi(x,t)$ and the explicit bounds on the error. Specifically, the slowly evolving phase term $\phi$ in each of the five formulae is new and does not appear in the Whitham theory as it constitutes a perturbative term in the computation of the velocity $u$ but nonetheless contributes an $\bigo{1}$ correction to the complex phase of the solution $\psi(x,t)$. 
\end{rem}

\begin{rem}
Though we consider only the case $-1 < \lam_- < \lam_+ < 1$ the other five possible cases (\ie\ orderings of $-1,1,\lam_-$, and $\lam_+$) regularize the initial discontinuity in a similar way, and we provide all the necessary tools to complete these computations. In each case, five sectors emerge in the $(x,t)$ half-plane as in Figure~\ref{fig:phasediagram}: the far left and right fields exhibit plane wave (genus zero) oscillations which match the initial data for $t=0$, while the three middle zones consist of either rarefaction and/or dispersive shock waves separated by a central plateau that is either a plane wave or, when $1 < \lam_- <\lam_+$, a standing (unmodulated) elliptic wave. 
\end{rem}

\begin{rem}
The choice to normalize the left half of the initial data \eqref{0.2} to have $\rho=1$ and $\mu =0$ is not a restriction, any sharp step of the form 
\begin{equation*}
	\widetilde{\psi}(x,0) = \widetilde{\psi}_0(x) := \begin{cases}
		A_L \exp \lp -2 i \mu_L x/ \eps \rp & x < 0 \\
		A_R \exp \lp -2 i \mu_R x/ \eps \rp & x > 0. 
	\end{cases}
\end{equation*}
with $A_L$ and $A_R$ not both zero can be reduce to our normalized data; in the case that $A_L \neq 0$, the change of variables 
$$
	\widetilde \psi(x,t) = A_L \psi(A_L(x-2\mu_L t), A_L^2 t) e^{-2i\mu_L(x+\mu t)/\eps}
$$
results in a new unknown $\psi(x',t')$ solves \eqref{0.1} with initial data \eqref{0.2} in the new coordinate frame.	
\end{rem}

\subsection{Organization of the rest of the paper}
In Section~\ref{sec:fluid} we briefly review the NLS-Whitham equations for zero and one phase waves and discuss their self-similar solutions. In Section~\ref{sec:forward scattering} we discuss the integrable structure of the NLS equation, compute the scattering data for the step initial data \eqref{0.2}, and state the Riemann-Hilbert problem satisfied by the solution of \eqref{0.1}-\eqref{0.2} in full detail. In Section~\ref{sec:gfunctions} we construct the so called $g$-functions that are needed in the inverse scattering analysis and show that their evolution is governed by the NLS-Whitham equations. Finally in Section~\ref{sec:inverse} we use the Deift-Zhou steepest descent procedure to derive the asymptotic behavior to the solution of Riemann-Hilbert problem \ref{rhp:1} for every real value of $\tau =x/t$, which proves the results of Theorem~\ref{thm:main}.

Before proceeding we comment on notation. Throughout the paper we make use of the Pauli matrices
$$
	\sigma_1 = \offdiag{1}{1}, \quad \sigma_2 = \offdiag{i}{-i}, \quad \sigma_3 = \diag{1}{-1}.
$$
In particular we use the matrix power notation $f^\sig = \lp \begin{smallmatrix} f & 0 \\ 0 & f^{-1} \end{smallmatrix} \rp$ for any scalar $f$. 

Regarding complex variable notation, $z^*$ denotes the complex conjugate of a complex number $z$; for a scalar function $f$, $f^*(z)$, or compactly just $f^*$, denotes the Schwarz reflection through the real axis $f^*(z) = f(z^*)^*$. Given a piecewise smooth oriented contour $\gamma \in \C$ and a function $f$ analytic in $\C \backslash \gamma$, for $z \in \gamma$, $f_\pm(z)$ is defined as the non-tangential limit of $f(w)$ as $w$ approaches 
$z$ from the left/right with respect to the orientation of $\gamma$. 
Finally, given a pair of real numbers $a,b$ or a vector $\vect \lam \in \R^4$ we define
$$
	\RR(z;a,b) = \sqrt{(z-a)(z-b)} \qquad \RR(z; \vect \lam) = \sqrt{\prod_{j=1}^4 (z-\lam_j)} 
$$
to be finitely branched along the real axis such that $\RR(z;a,b) \sim z$ and $\RR(z;\vect \lam) \sim z^2$ as $z\to \infty$.

%

\section{Hydrodynamic form and modulation theory}\label{sec:fluid}
The Madelung change of variables \eqref{fluid vars} transforms the NLS equation into the system of conservation laws
\begin{subequations}\label{conservation2}
\begin{align}
\label{mass2}	&\pd{\rho}{t} + \pd{}{x}(\rho u) = 0 \\
\label{momentum2}	&\pd{}{t}(\rho u) + \pd{}{x} \lp \rho u^2 + \frac{1}{2} \rho^2 \rp 
		= \frac{\eps^2}{4}\pd{}{x} \lp \rho \pd[2]{}{x} \lp \log \rho \rp \rp,
\end{align}
\end{subequations}
If $\eps$ is formally set to zero, then it is well known that the resulting Euler system exhibits shock formation (infinite gradients) in finite time. For $\eps >0$ the right hand side of \eqref{momentum2} ameliorates the formation of shocks by introducing growing regions of rapid oscillations into the solution. These rapid oscillations are well approximated in terms of slowly modulating one-phase waves, whose modulations satisfy Whitham's averaging equations \cite{Whitham}, \cite{FFM}, \cite{FL}. For general initial data, over the course of the evolution, the number of phases need to describe the wave may change as Riemann invariants are born or merge, though for long times the system will exhibit only single phase oscillations \cite{GT02}.  
For the single shock initial data we consider here \eqref{0.2}, we will see that only elliptic (one phase) oscillations develop. We summarize below the Whitham equations for zero and one phase oscillations only. 

\subsection{Zero-phase oscillations\label{sec:zero phase}}
Before wave breaking occurs, the solution of \eqref{0.1} has bounded derivatives, and the limiting Euler equations for $\rho$ and $u$ should well-approximate the solution. That is, our solution is well described by the slowly modulating periodic wave
\begin{equation}
	\psi_0(x,t) = \sqrt{\omega_0 - k_0^2} e^{i \theta_0}, \quad
	\partial_x \theta_0 = k_0/\eps \quad \partial_t \theta_0 = - \omega_0/\eps
\end{equation}
whose density and velocity
\begin{equation}
	\rho(x,t) = | \omega_0 - k_0^2| \qquad
	u(x,t) = k_0
\end{equation}
satisfy the Euler equations (\eqref{conservation} with $\eps = 0$). The Euler equations can be written in the Riemann invariant form
\begin{equation}\label{0-phase riemann invariants}
	\begin{gathered}
	\pd{\lambda_j}{t} + v_j(\vect \lambda) \pd{\lambda_j}{x} = 0, \qquad j=1,2, \\
	\lambda_1 = -\frac{u}{2} + \sqrt{\rho}, \qquad \lambda_2 = -\frac{u}{2} - \sqrt{\rho} \\
	v_+(\vect \lambda) = -\frac{1}{2} \lp 3\lambda_1 + \lambda_2 \rp \qquad
	v_-(\vect \lambda) = -\frac{1}{2} \lp \lambda_1 + 3 \lambda_2 \rp.
	\end{gathered}
\end{equation}

\subsection{One-phase oscillations}
If instead we suppose that the solution exhibits a single fast phase, then the density $\rho$ and velocity $u$ are instead described asymptotically in terms of a modulating one-phase (elliptic) waves described in terms of four slowly varying Riemann invariants $\vect{\lambda} = \{\lambda_i \}_{i=1}^4$, $\lambda_1 > \lambda_2 > \lambda_3 > \lambda_4$: 
\begin{equation}\label{one phase solution}
	\begin{gathered}
		\rho(x,t; \vect \lambda) = a_1^2 - (a_1^2-a_3^2) 
		\dn^2 \lp \sqrt{a_1^2-a_3^2}\frac{x-V t}{\eps}, m \rp \\
		u(x,t; \vect \lambda) = V - \frac{a_1 a_2 a_3}{\rho(x,t,\vect \lambda)}
	\end{gathered}
\end{equation}	
\begin{equation}\label{one phase parameters}
	\begin{aligned}
	a_1 &= -\frac{1}{2}\lp \lambda_1 + \lambda_2 - \lambda_3 - \lambda_4 \rp \\
	a_2 &= -\frac{1}{2}\lp \lambda_1 - \lambda_2 + \lambda_3 - \lambda_4 \rp \\
	a_3 &= -\frac{1}{2}\lp \lambda_1 - \lambda_2 - \lambda_3 + \lambda_4 \rp \\
	V & = -\frac{1}{2} \lp \lambda_1 + \lambda_2 + \lambda_3 + \lambda_4 \rp \\
	m &= \frac{(\lambda_1 - \lambda_2)(\lambda_3-\lambda_4)}
	{(\lambda_1 - \lambda_3)(\lambda_2- \lambda_4)} 
	\end{aligned}
\end{equation}
The evolution of the Riemann invariants $\lam_i$ is governed by the diagonal first order system: 
\begin{equation}\label{NLS-Whitham}
	\begin{gathered}
		\pd{\lambda_i}{t} + v_i(\vect \lambda) \pd{\lambda_i}{x} = 0, \\
		v_{j}(\vect \lambda) = V(\vect \lambda) + \lp 2 \pd{}{\lambda_j} \log L(\vect \lambda) \rp^{-1} ,\\
	L(\vect \lambda) = \sqrt{2} \int_{\lambda_2}^{\lambda_1} \frac{ d \tau} { \sqrt{-\prod_{j=1}^4 (\tau - \lambda_j)} } 
	= \frac{ 2\sqrt{2} K(m) }{\sqrt{(\lambda_1 - \lambda_3)(\lambda_2 - \lambda_4)}},
	\end{gathered}
\end{equation}	
which can be obtained by averaging the first four conservation laws for NLS over a period of \eqref{one phase solution}.

\subsection{Self-similar evolution}
If we suppose that the Riemann invariants $\lam_j$ depend on $(x,t)$ only through a similarity variable $\tau = x/t$, then the Whitham equations \eqref{intro Whitham equations} are equivalent to 
\begin{equation}\label{self-similar Whitham system}
	 \lp v_j(\vect \lam) -\tau \rp  \pd{\lam_j}{\tau} = 0, \quad j = 1,2,\dots, 2G+2.  
\end{equation}
So that each $\lam_j$ is either constant or its speed satisfies $v_j(\vect \lam) = \tau$. Moreover, since the NLS-Whitham system is strictly hyperbolic \cite{BK06}, \ie, $v_j(\vect \lam) < v_k(\vect \lam)$ for $j<k$ provided $\lam_1 > \lam_2 >\dots> \lam_{2G+2}$, it follows that at most one of the speeds can satisfy $v_j(\vect \lam) = \tau$, and therefore in a self-similar evolution at most one of the Riemann invariants is not constant.

\section{Scattering of the shock initial data}\label{sec:forward scattering}

It is well known that NLS is completely integrable \cite{ZS71} in the sense that it is equivalent to the existence of simultaneous solution $\phi(x,t)$ of the Lax pair
\begin{subequations}\label{0.3}
	\begin{align}
\label{0.3a}		\eps \Phi_x &=  -iz \sig \Phi + \Psi(x,t) \Phi, \\
\label{0.3b}		i \eps \Phi_t &= z^2 \sig \Phi + iz \Psi(x,t) \Phi + \frac{1}{2} \lp \Psi(x,t)^2 + \eps \Psi_x(x,t) \rp \sig \Phi.
	\end{align}
\end{subequations}
given the matrix potential 
\begin{equation*}
	\Psi(x,t) = \offdiag{ \psi(x,t) }{ -\psi(x,t)^*}.
\end{equation*}

If we consider a plane wave solution of \eqref{0.1}, $\psi^p(x,t) =  A e^{i(k x - \omega t)/\eps},\ \omega = \omega(k) = A^2+k^2/2$,  then the exact simultaneous solution of the Lax pair \eqref{0.3} is given by 
\begin{equation}\label{sim sol for plane waves}
	\Phi^p(x,t) = e^{-\frac{i}{2\eps}(kx-\omega t) \sig} \Ecal(z; -k/2-A,-k/2+A) e^{ \frac{i}{\eps} 
	\Lambda(z,-k/2- A, -k/2+A) ( x + (z-k/2) t) \sig},
\end{equation}	   
where
\begin{equation}\label{0.4b}
	\begin{gathered}
	\Lambda(z; \Acal,\Bcal) := \sqrt{z - \Acal}\sqrt{z - \Bcal}, \\
	\beta(z; \Acal,\Bcal) := \lp \frac{z-\Bcal}{z-\Acal} \rp^{1/4}, \\
	\Ecal(z, \Acal, \Bcal) := \begin{pmatrix}
		\frac{ \beta(z; \Acal, \Bcal) + \beta(z; \Acal, \Bcal)^{-1} }{2} &
		- \frac{ \beta(z; \Acal, \Bcal) - \beta(z; \Acal, \Bcal)^{-1} }{2i} \\
		\frac{ \beta(z;\Acal, \Bcal) - \beta(z; \Acal, \Bcal)^{-1} }{2i} &
		\frac{ \beta(z; \Acal, \Bcal) + \beta(z; \Acal, \Bcal)^{-1} }{2}. 
	\end{pmatrix}
	\end{gathered}
\end{equation} 
and $\Lambda(z; \Acal, \Bcal)$ and $\beta(z; \Acal, \Bcal)$ are defined to be branched on $[\Acal, \Bcal]$ and normalized such that 
$$
	\begin{aligned} \Lambda(z; \Acal, \Bcal) = z + \bigo{z^{-1}} \\ 
	\beta(z; \Acal, \Bcal) = 1 + \bigo{z^{-1}} \end{aligned}
	\qquad z \to \infty.
$$	

For initial data $\psi_0(x)$ which is asymptotic to a plane wave for large $x$, i.e., $\psi_0(x) \sim \psi^p(x)$ as $x\to \pm \infty$, it is reasonable to define the Jost function solutions of \eqref{0.3a} to be those whose asymptotic behavior is given by $\Phi^{p}$. For our particular family of initial data \eqref{0.2} this implies that our left and right normalized Jost functions satisfy
\begin{equation}\label{0.4}
	\begin{aligned}
	&\lim_{x \to - \infty}  \Phi_L(x,t) e^{i \Lambda(z; -1,1) x \sig} = \Ecal(z; 1) \\
	&\lim_{x \to  \infty} e^{i \mu x  \sig} \Phi_R(x,t) e^{i \Lambda(z; \lam_-, \lam_+) x \sig} = \Ecal(z; \lam_-, \lam_+)
	\end{aligned}
\end{equation}
 
For brevity we will use the shorthands 
$$
\begin{aligned}
&\beta_L(z) := \beta(z, -1, 1), 
\qquad
&\Lambda_L(z) := \Lambda(z, -1, 1), 
\qquad
&\Ecal_L(z) := \Ecal(z,-1,1), \\ 
&\beta_R(z) := \beta(z, \lam_-, \lam_+), 
\qquad
&\Lambda_R(z) := \Lambda(z, \lam_-, \lam_+), 
\qquad
&\Ecal_R(z) := \Ecal(z,-\lam_-, \lam_+),
\end{aligned}
$$
and we denote the branch cut intervals of these functions:
$$
	I_L=(-1,1) \qquad I_R = (\lam_-, \lam_+)
$$
Note, that these branch points are exactly the Riemann invariants for the (constant) plane wave solutions \eqref{0-phase riemann invariants} corresponding to each half of the initial data \eqref{0.2}.

\subsection{Forward scattering of our pure shock initial data} 
For general step-like initial data one can prove existence and analytic extension (in $z$) theorems for the Jost functions \cite{BP82}, \cite{FT07}. However, for initial data given by \eqref{0.2} the Jost functions are explicit:
\begin{equation}\label{0.5}
\begin{split}
	\Phi_L(x;z) &=
	\begin{cases}
		\Ecal_L(z) e^{-i \Lambda_L(z) x \sig/\eps} & x < 0 \\ 
		e^{-i \mu_R x \sig/\eps} \Ecal_R(z) e^{-i \Lambda_R(z) x \sig/\eps} \Ecal_R^{-1}(z) \Ecal_L(z) &  x > 0
	\end{cases} \\
	\Phi_R(x;z) &=
	\begin{cases}
		\Ecal_L(z) e^{-i \Lambda_L(z) x \sig / \eps} \Ecal_L^{-1}(z) \Ecal_R(z) & x < 0 \\ 
		e^{-i \mu_R x \sig / \eps} \Ecal_R(z) e^{-i \Lambda_R(z) x \sig / \eps}  &   x > 0
	\end{cases} 
\end{split}
\end{equation}

\begin{prop}\label{prop:1}
For $k \in \{ L,R\}$, let $\Phi_k(z;z)$ be defined by \eqref{0.5}. The following properties are easily verified:
\begin{enumerate}[1.]
	\item $\det \Phi_k = 1$.
	\item $\Phi_k(x;z)$ is analytic for $z \in \C \backslash I_k$.
	\item $e^{i \mu_k x \sig / \eps} \Phi_k(x;z) e^{i \Lambda_k(z) x \sig / \eps} = I + \bigo{z^{-1}}$ as $z \to \infty$.
	\item For $z \in I_k$, $\Phi_k(z)$ takes continuous boundary values satisfying 
$$
	\Phi_{k+}(z) = \Phi_{k-} (z) \offdiag{-1}{1} \qquad z \in I_k.
$$
	\item $(z - p)^{1/4} \Phi_k(x;z)$ is bounded as $z \to p$ where $p$ is either endpoint of $I_k$.
\end{enumerate}	
\end{prop}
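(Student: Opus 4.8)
The plan is to verify all five items by direct computation from the explicit formulas \eqref{0.5}, organised around a single structural observation. Item~1 is immediate: a one-line computation gives $\det\Ecal(z;\Acal,\Bcal)=\tfrac14[(\beta+\beta^{-1})^2-(\beta-\beta^{-1})^2]=1$, and since $e^{-i\Lambda_k x\sig/\eps}$ and $e^{-i\mu_k x\sig/\eps}$ are exponentials of trace-free matrices they too have unit determinant; each branch of $\Phi_k$ in \eqref{0.5} is a product of such factors, so $\det\Phi_k\equiv1$.

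The first real step is a branch computation on $I_k$. Tracking the quarter-root through the cut gives $\beta_{k+}=i\beta_{k-}$, whence a short matrix multiplication yields $\Ecal_{k-}^{-1}\Ecal_{k+}=\offdiag{-1}{1}$, i.e. $\Ecal_{k+}=\Ecal_{k-}(i\sigma_2)$, while simultaneously $\Lambda_{k+}=-\Lambda_{k-}$ on $I_k$. The crux of the whole proposition is then that the conjugated exponential $T_k(x;z):=\Ecal_k(z)e^{-i\Lambda_k(z)x\sig/\eps}\Ecal_k(z)^{-1}$ \emph{is entire in $z$}. Indeed $T_k$ is the fundamental solution operator, read off from the plane-wave solution \eqref{sim sol for plane waves}, of the constant-coefficient (in $x$) system obtained from \eqref{0.3a} after removing the $z$-independent gauge; as the coefficient matrix is polynomial in $z$, this operator cannot branch. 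Concretely, using $\Ecal_{k+}=\Ecal_{k-}(i\sigma_2)$, $\Lambda_{k+}=-\Lambda_{k-}$ and $\sigma_2\sig\sigma_2^{-1}=-\sig$ one checks $T_{k+}=\Ecal_{k-}e^{-i\Lambda_{k-}x\sig/\eps}\Ecal_{k-}^{-1}=T_{k-}$, so $T_k$ has no jump, while at an endpoint $p$ of $I_k$ the factor $\Lambda_k=\bigo{(z-p)^{1/2}}$ cancels the $(z-p)^{-1/2}$ blow-up of $\Ecal_k\sig\Ecal_k^{-1}$, making the endpoint singularity removable. Hence $T_k$ is entire.

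With $T_k$ entire the remaining items are bookkeeping. Rewrite the ``far'' branch of \eqref{0.5} as $\Phi_L=e^{-i\mu_R x\sig/\eps}T_R\Ecal_L$ for $x>0$ and $\Phi_R=T_L\Ecal_R$ for $x<0$, while the ``near'' branch ($x<0$ for $\Phi_L$, $x>0$ for $\Phi_R$) involves only factors branched on $I_k$. For item~2 this shows the only cut is that of the remaining $\Ecal$-factor, on $I_k$ --- which is exactly why $\Phi_R$ is analytic across $I_L\setminus I_R$ even though $I_R\subset I_L$. For item~4 the entire factor $T$ and the $z$-independent gauge cancel in $\Phi_{k-}^{-1}\Phi_{k+}$, reducing the jump to $\Ecal_{k-}^{-1}\Ecal_{k+}=\offdiag{-1}{1}$ in both regions (in the near region the diagonal exponentials also drop out because $\Ecal_{k-}^{-1}\Ecal_{k+}$ is off-diagonal and $\Lambda_{k+}=-\Lambda_{k-}$). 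For item~5, near an endpoint $p$ one has $\Ecal_k=\bigo{(z-p)^{-1/4}}$ while the companion factors ($T_k$ and the exponentials) stay bounded, so $(z-p)^{1/4}\Phi_k$ is bounded.

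Finally item~3 follows from $\Ecal_k=I+\bigo{z^{-1}}$ together with the expansion $\Lambda_k(z)=z-\mu_k+\bigo{z^{-1}}$ (the $\bigo{1}$ term being minus the average of the endpoints of $I_k$, so $\mu_L=0$, $\mu_R=\mu$). In the near region $e^{i\mu_k x\sig/\eps}\Phi_k e^{i\Lambda_k x\sig/\eps}=\Ecal_k=I+\bigo{z^{-1}}$ with no exponentials left; in the far region the stray exponentials telescope, since $\mu_R+\Lambda_R-\Lambda_L=\mu_L+\bigo{z^{-1}}=\bigo{z^{-1}}$, leaving $I+\bigo{z^{-1}}$ (along the real axis, where the off-diagonal factor $e^{-i(\Lambda_R+\Lambda_L)x\sig/\eps}$ stays bounded, which is the regime relevant to the normalization of RHP~\ref{rhp:1}). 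The one genuinely non-routine step is the entireness of $T_k$: it is what reconciles the apparent $I_R$- and $I_L$-cuts in the far branches of \eqref{0.5} with the clean analyticity, jump, and endpoint statements. The only other delicate point is fixing the quarter-root so that $\beta_{k+}=i\beta_{k-}$ (and not $-i$) consistently with the orientation of $I_k$, which is what selects the jump $\offdiag{-1}{1}$ rather than its inverse.
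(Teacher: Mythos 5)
Your verification is correct and is precisely the direct computation the paper intends---the paper offers no proof beyond ``easily verified,'' and your organizing observation that $T_k(x;z)=\Ecal_k(z)e^{-i\Lambda_k(z)x\sig/\eps}\Ecal_k(z)^{-1}$ is entire (no jump across $I_k$ since $J\sig J^{-1}=-\sig$, plus removable endpoint singularities because $\sin(\Lambda_k x/\eps)/\Lambda_k$ and $\cos(\Lambda_k x/\eps)$ are even in $\Lambda_k$) is the right way to make items 2, 4, and 5 transparent, while your caveat on item 3 (the off-diagonal factor $e^{\mp i(\Lambda_R+\Lambda_L)x/\eps}$ is only bounded for $z\to\infty$ along the real axis, or column-wise per half-plane) is the correct, honest reading of that statement. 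The one blemish is notational: since $\offdiag{-1}{1}=-i\sigma_2$, your identification $\Ecal_{k+}=\Ecal_{k-}(i\sigma_2)$ has the wrong sign relative to your (correct) jump $\Ecal_{k-}^{-1}\Ecal_{k+}=\offdiag{-1}{1}$, though nothing downstream depends on it.
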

	
From the Jost functions, we define the scattering matrix 
\begin{equation}\label{0.6}
	S(z) := \Phi_R^{-1} \Phi_L(z)  = \Ecal_R^{-1}(z) \Ecal_L(z) = 
	\begin{pmatrix}	 a(z) & b^*(z) \\ b(z) & a^*(z) \end{pmatrix}
\end{equation}
where the scattering functions and reflection coefficient are given by
\begin{equation}\label{0.7}
\begin{gathered}
	a(z) = \frac{ \beta_L(z) \beta_R(z)^{-1} + \beta_L(z)^{-1} \beta_R(z) }{2} \\
	b(z) = \frac{ \beta_L(z) \beta_R(z)^{-1} - \beta_L(z)^{-1} \beta_R(z) }{2i} \\
	r(z)  = \frac{ b(z) }{ a(z) } = -i \frac{\beta_L(z)^2 - \beta_R(z)^2}{\beta_L(z)^2 + \beta_R(z)^2}.
\end{gathered}
\end{equation}
By direct calculation, or as a consequence of \eqref{0.5}, \eqref{0.6} and Proposition~\ref{prop:1}, we see that the scattering functions are analytic in $\C \backslash \lp I_L \triangle I_R \rp$\footnote{$A \triangle B$ denotes the disjoint union of $A$ and $B$.} and satisfy the jump relations
\begin{equation}\label{0.8}
	\begin{gathered}
	a_+(z) = \phantom{-}b_-^*(z) \qquad b_+(z)  = \phantom{-}a^*_-(z) \qquad  z \in I_L \backslash ( I_L \cap I_R), \\ 
	a_+(z) = -b_-^*(z) \qquad b_+(z) = -a^*_-(z)  \qquad  z \in I_R \backslash ( I_L \cap I_R).
	\end{gathered}
\end{equation}
It follows that $r(z)$ is also analytic for $z \in \C \backslash (I_L \triangle I_R)$ and 
\begin{equation}\label{0.9a}
	r_+(z) = \frac{1}{r_-^*(z)} \qquad   z \in I_L \triangle I_R
\end{equation}	
Furthermore, from \eqref{0.7} it is easy to verify that 
\begin{equation}\label{0.9b}
	\begin{aligned}
	z \in \C \backslash (I_L \triangle I_R)& \quad \Longrightarrow \quad	 |r(z)|^2 < 1  \\
	z \in I_L \triangle I_R&  \quad \Longrightarrow \quad |r_\pm(z)| = 1
	\end{aligned}
\end{equation}

\begin{prop} 
The function $a(z)$ defined by \eqref{0.7} has no zeros in the complex plane.
\end{prop}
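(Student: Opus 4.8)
The plan is to reduce the vanishing of $a$ to a statement about the two squared factors $\beta_L^2$ and $\beta_R^2$, and then exploit their branch structure. Writing \eqref{0.7} over a common denominator gives
\[
	a(z) = \frac{\beta_L(z)^2 + \beta_R(z)^2}{2\,\beta_L(z)\,\beta_R(z)}.
\]
Since $\beta_L$ and $\beta_R$ are fourth roots that are finite and nonzero away from their branch points $\{-1,1\}$ and $\{\lam_-,\lam_+\}$, the denominator never vanishes on the domain of analyticity, so $a(z)=0$ if and only if $\beta_L(z)^2+\beta_R(z)^2=0$. Thus everything reduces to showing that $\beta_L^2$ and $\beta_R^2$ can never be negatives of one another.

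The key observation is that each squared factor is a principal square root of a M\"obius image of the plane. Indeed $\beta_L(z)^2=\left(\tfrac{z-1}{z+1}\right)^{1/2}$ and $\beta_R(z)^2=\left(\tfrac{z-\lam_+}{z-\lam_-}\right)^{1/2}$, where the square roots are principal and tend to $1$ at infinity. The map $z\mapsto \tfrac{z-1}{z+1}$ sends $\C\setminus[-1,1]$ conformally onto $\C\setminus(-\infty,0]$, carrying the cut $(-1,1)$ onto the negative real axis; composing with the principal square root therefore sends $\C\setminus[-1,1]$ into the open right half-plane $\{\re>0\}$. I will record the identical statement for $\beta_R^2$ with $[-1,1]$ replaced by $[\lam_-,\lam_+]$. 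Consequently, for any $z\in\C\setminus[-1,1]$ (so that both functions are off their cuts, using $[\lam_-,\lam_+]\subset[-1,1]$) we have $\re\beta_L(z)^2>0$ and $\re\beta_R(z)^2>0$, whence $\re\left(\beta_L^2+\beta_R^2\right)>0$ and $a(z)\neq0$.

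It remains to handle the part of the domain of analyticity of $a$ lying inside $[-1,1]$, namely the common cut $(\lam_-,\lam_+)$, across which $a$ is analytic even though the individual factors are not. This is exactly where the naive move of squaring the relation $\beta_L^2=-\beta_R^2$ fails: its algebraic root $z_0=\frac{\lam_-+\lam_+}{2+\lam_--\lam_+}$ satisfies $\lam_-<z_0<\lam_+$ (a short computation factors $z_0-\lam_-$ and $\lam_+-z_0$ and uses $-1<\lam_-<\lam_+<1$), so it sits precisely on this interval. I would evaluate $a$ there through its boundary value from above: for real $z$ with $\tfrac{z-1}{z+1}<0$ one finds $\beta_{L,+}(z)^2=i\sqrt{\tfrac{1-z}{1+z}}$, and likewise $\beta_{R,+}(z)^2=i\sqrt{\tfrac{\lam_+-z}{z-\lam_-}}$ on $(\lam_-,\lam_+)$. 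Both boundary values are purely imaginary with strictly positive imaginary part, so their sum is a nonzero purely imaginary number and $a$ does not vanish there either. The same computation shows the boundary values $a_\pm$ on the remaining cut intervals $(-1,\lam_-)$ and $(\lam_+,1)$ are nonzero, since there one factor is real and positive while the other is nonzero and purely imaginary.

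The main obstacle is not any hard estimate but the bookkeeping of branch cuts: the genuinely delicate point is the common cut $(\lam_-,\lam_+)$, where $\beta_L$ and $\beta_R$ are individually singular yet $a$ is smooth, and where the only algebraic candidate for a zero actually lives. The half-plane mapping property of the principal square root is what renders both the interior case and this boundary case transparent, and it is the one ingredient I would state carefully before assembling the three cases above.
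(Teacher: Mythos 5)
Your proof is correct and rests on exactly the same key fact as the paper's: the branched root maps the cut plane into the sector $|\arg w|<\pi/4$, equivalently $\beta_L^2$ and $\beta_R^2$ map into the open right half-plane, so the numerator $\beta_L^2+\beta_R^2$ of $a$ cannot vanish there --- the paper phrases this same positivity as $\re\lp\beta_L/\beta_R\rp>0$, hence $\re a>0$. Your explicit treatment of the cut intervals, in particular locating the unique algebraic candidate $z_0$ on the common cut $(\lam_-,\lam_+)$ and checking that both boundary values are purely imaginary with the same sign there, fills in a boundary-value verification that the paper's one-line proof passes over silently, but it is elaboration of the same argument rather than a different route.
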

\begin{proof} 
The mapping  $w= \beta(z; \Acal)$ defined for any $\Acal >0$ by \eqref{0.4b}  is a conformal map of $\C \backslash [-1,1] \to \U$, where $\U = \{ w \in \C \backslash \{0\}\, : \, |\arg w| < \frac{\pi}{4} \}$, such that 
$\beta(\C^\pm) = \U \cap \C^\pm$. Since $\beta_L(z) = \beta(z,1)$ and $\beta_R(z) = \beta(z-\mu, A)$ are just real translation and scalings of $\beta(z)$, each is such a conformal mapping into $\U$ and it follows that $\re  \frac{\beta_L(z)}{\beta_R(z)} > 0$ and, thus $\re a(z) >0$, for all $z$. 
\end{proof}    

One consequence of \eqref{0.9a}-\eqref{0.9b} is that the the transmission coefficient $1/a(z)$ does have zeros on the real axis. Indeed the squared transmission coefficient 
\begin{equation}\label{0.9c}
	\frac{1}{|a(z)|^2} = 1 - r(z) r^*(z)  
	=  \frac{ 4 \beta_L(z)^2 \beta_R(z)^2 }{( \beta_L(z)^2 + \beta_R(z)^2)^2}  
\end{equation}
is analytic for $z \in \C \backslash (\I_L \triangle \I_R)$ and vanishes as a square root at each of the four branch points. It has no other zeros or poles.

Using the time dependent Jost functions we construct the piecewise analytic function
\begin{equation}\label{0.10}
	m(z;x,t) :=  \begin{cases}
	\begin{bmatrix} \frac{ \phi_L^{(1)}(x,t;z)}{a(z)} & \phi_R^{(2)}(x,t;z) \end{bmatrix} e^{i (tz^2 + xz) \sig / \eps} & z \in \C^+ \medskip \\
	\begin{bmatrix} \phi_R^{(1)}(x,t;z) & \frac{ \phi_L^{(2)}(x,t;z)}{a^*(z)}   \end{bmatrix} e^{i(tz^2 +x z) \sig / \eps} & z \in \C^-
	\end{cases}
\end{equation}
The function $m(z;x)$ satisfies the following Riemann Hilbert problem:
\begin{rhp}[for $m(z;x, t)$]\label{rhp:1}
Find a $2\times2$ function $m(z;x,t)$ with each of the following properties:
\begin{enumerate}[1.]
	\item $m(z;x,t)$ is analytic in $\C \backslash \R$.
	\item $m(z;x,t) = I + \bigo{z^{-1}}$ as $z \to \infty$.
	\item For $z \in \R$, $m$ satisfies the jump relation $m_+(z;x,t) = m_-(z;x,t) v(z;x,t)$ where
	\begin{equation}\label{1.11}
		v(z,x,t) = \begin{cases}
			\begin{pmatrix} 1 - r r^* & -r^* e^{-2i\theta/\eps} \\ r e^{2i\theta/\eps} & 1 \end{pmatrix} & z \in \R \backslash (I_L \cup I_R) \\
			\begin{pmatrix} 0 & -r_-^* e^{-2i\theta/\eps} \\ r_+ e^{2i\theta/\eps} & 1 \end{pmatrix} & z \in I_L \backslash (I_L \cap I_R) \\
			\begin{pmatrix} (a_+ a_-^*)^{-1} & - e^{-2i\theta/\eps} \\  e^{2i\theta/\eps} & 0 \end{pmatrix} & z \in  I_R \backslash (I_L \cap I_R) \\
			\offdiag{-e^{-i\theta/\eps}}{e^{i\theta/\eps}} & z \in I_L \cap I_R
		\end{cases}
	\end{equation}
	where 
	$$
		r = r(z) \quad \text{and} \quad \theta = \theta(x,t,z) := xz + t z^2.
	$$
	\item $m(z;x,t)$ is bounded at each finite $z$ except the points $p$, $p\in\{\lambda_-, \lambda_+\}$ where it admits the singular behavior 
	\begin{equation}
		\begin{aligned}
		m(z;x,t) &= \bigo{ \begin{matrix} 
		(z-p)^{1/4} & (z-p)^{-1/4} \\ (z-p)^{1/4} & (z-p)^{-1/4} 
		\end{matrix} }, \quad z \in \C^+, \qquad p \in \{ \lambda_-, \lambda_+ \}, \\
		m(z;x,t) &= \bigo{ \begin{matrix} 
		(z-p)^{-1/4} & (z-p)^{1/4} \\ (z-p)^{-1/4} & (z-p)^{1/4} 
		\end{matrix} }, \quad z\in \C^-, \qquad p \in \{ \lambda_-, \lambda_+ \}.
		\end{aligned}
	\end{equation}  
\end{enumerate}

Let $m_{12}$(z;x,t) denote the $(1,2)$-entry of the matrix $m(z;x,t)$. If a solution of the above Riemann Hilbert problem exists, the function 
\begin{equation}\label{recovery}
	\psi(x,t) := - 2i \lim_{z \to \infty} m_{12}(z;x,t)
\end{equation}
is a solution of \eqref{0.1}.
\end{rhp}

\section{Constructing the $g$-functions of self-similar wave motion.}\label{sec:gfunctions}
One of the essential tools in the steepest descent analysis of Riemann-Hilbert problems is the construction of what is known as a $g$-function, whose role is to renormalize oscillatory or exponentially large factors in the jump matrices. As in the KdV setting \cite{Venakides90}, this function can be characterized as the log transform of the minimizing measure of a certain minimization problem. For a large class of initial data this minimizer is supported on a finite union of disjoint intervals, and the deformation of the endpoints of these intervals as $(x,t)$ vary are governed by the Whitham equations for NLS. 
Here we construct the possible genus-0 and genus-1 $g$-functions possible for self-similar motion following the method of \cite{GT02}. The method clearly generalizes to higher genus.

Suppose that we are given a set of $2G+2$ real points $\lam_1, \lam_2, \dots, \lam_{2G+2}$ ordered such that $\lam_1 > \lam_2 > \dots > \lam_{2G+2}$. Label the intervals $J_k = (\lam_{2k+1}, \lam_{2k+2})$, $k=0,\dots, G$ and $J = \bigcup_{k=0}^G J_k $. We call the intervals $J_k$ the `bands' and their complement $\R \backslash \overline{J}$ the `gaps'. Finding the $g$-function, the log transform of the minimizer of the minimization problem, is equivalent to constructing a scalar function $g(z)$ with the following properties:
\begin{table}[H]
\caption{\label{glist} Analytic properties of the $g$-function.}
\begin{enumerate}[1.]
	\item $g(z)$ is analytic in $\C \backslash ( \lam_{2G+2}, \lam_1)$.
	\item $2\theta(z;x,t) - g_+(z) - g_-(z) = 2\alpha_k$, for $z \in J_k$, $k=0,\dots , G$.
	\item $g(z) = g_\infty + \bigo{z^{-1}}$ as $|z| \to \infty$.
	\item $g(z) - \theta(z;x,t) = \bigo{ (z-\lam_k)^{q_k}}$ as $z \to \lam_k$.
	\item $\imag g(z) = 0, z \in  \R \backslash \overline{J}$.
	\end{enumerate}
\end{table}

\begin{rem} The growth condition at endpoints is often omitted in the literature, as it is generically understood to be ``3/2 vanishing"  at each endpoint. However, it is a necessary condition for uniqueness. In every cases we consider $q_k \in \{3/2, 1/2 \}$ so that the problem for $d \varphi$ following \eqref{2.6.0} always has a unique solution.
\end{rem} 

Consider the Riemann surface of genus $G \geq 0$:
\begin{align*}\label{surface}
	\SSS_G &:= \left\{ P = (z, \RR), \ \RR^2 = \prod_{j=1}^{2G+2} (z - \lam_j) \right\}, \\
	\lam_1 &> \lam_2 > \dots > \lam_{2g+2}, \lam_j \in \R
\end{align*}
The projection $\pi(P) = z$, defines $\SSS_G$ as a two-sheet cover of $C\Poly^1$.
We take our basis $\{ a_j, b_j \}_{j=1}^G$ of the homology group $H_1(\SSS_G)$ so that $a_j$ lies entirely on the upper sheet and encircles with positive (counterclockwise) orientation $\overline{J_j} = [\lam_{2j+1} , \lam_{2j+2}]$, $j=1,\dots, G$, while $b_j$ emerges from $J_0= (\lam_2, \lam_1)$ on the upper sheet passes counterclockwise to the lower sheet through $J_j = (\lam_{2j+1}, \lam_{2j+2})$ and returns to the initial point entirely on the lower sheet, see Figure~\ref{fig:homology}.

\begin{figure}[htb]
	\begin{center}
		\includegraphics[width=.8\textwidth]{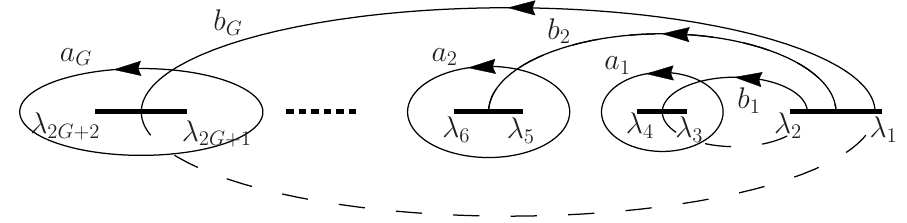}
		\caption{Our choice of homology basis $a_1,\dots, a_G$ and $b_1, \dots b_G$ associated with the genus $G$ hyperelliptic Riemann surface $\SSS_G =\left\{ P = (z, \RR)\, :\,  \RR^2= \prod_{j=1}^{2G+2} (z - \lam_j) \right\}$. 
		\label{fig:homology} 
		}
	\end{center}
\end{figure}

Let $\nu_j$, $j=1,\dots, G$ denote the canonical basis of holomorphic one-forms (Abelian differentials of the first kind) on $\Sigma_G$:
\begin{equation}
	\nu_j(z) = \frac{ c_{j,1} z^{G-1} + c_{j,2} z^{G-2} + \dots + c_{j,G}} {\RR} dz
\end{equation}
where the constants $c_{j,i}$ are uniquely determined by the normalization conditions
$$
	\oint_{a_k} \nu_j = \delta_{kj}, \quad j,k =1 ,\dots, G.
$$	

Additionally let $\omega^{(k)}$, $k=0,1$, denote the Abelian differentials of the second kind on $\SSS_G$ 
given by
\begin{equation}\label{2.8.0}
	\begin{aligned}
	\omega^{(k)} &= \frac{ P_{k}(z, \vect{\lam})}{\RR}  dz \\
	P_{k}(z; \vect{\lam}) &= z^{k+G+1} + \Gamma_1 z^{k+G} 
	+ \dots + \Gamma_{k+1}z^G+a_{k,1} z^{G-1} + \dots + a_{k,G}
	\end{aligned}
\end{equation}
where $\Gamma_j = \Gamma_j(\vect \lam)$ are the coefficients of the expansion
\begin{equation}
	\RR(z; \lam) = \lp \prod_{k=1}^{2G+2} (z - \lam_k) \rp^{1/2} = z^{G+1} \lp 1 + \frac{ \Gamma_1}{z} + \dots + \frac{\Gamma_m}{z^m} + \dots \rp,
\end{equation}
and the $a_{k,j} = a_{k,j}(\vect \lam)$ are determined by the normalization condition
\begin{equation}\label{2.7.0}
	\oint_{a_j} \omega^{(k)} = 0, \qquad j=1,\dots, G.
\end{equation}
For large arguments $\omega^{(k)}$ admits the expansion
\begin{equation}\label{2.6.0}
	\omega^{(k)} = \pm \lb z^k + \bigo{z^{-2}} \rb dz, \qquad P \to (+\infty, \pm \infty),
\end{equation}
so it has poles of order $2k+2$ at $(\infty, \pm \infty)$.


Now for given vanishing conditions $\rho_k \in (2\N_0+1)/2$ we want to construct a differential $d\varphi$ which has the following properties:
\begin{enumerate}[1.]
	\itemsep -.75em
	\item $d\varphi$ is meromorphic on $\SSS_G$ whose only poles are at $(+\infty, \pm \infty)$.
	\item $d\varphi \mp d\theta$ is locally holomorphic as $P \to (+\infty, \pm \infty)$.
	\item $\oint_{a_k} d\varphi = 0$ for $k=1,\dots, G$.
	\item $d\varphi = \bigo{ (z-\lam_k)^{\rho_k-1} dz }$ as $z \to \lam_k$.
\end{enumerate}
For any choice of moduli $\vect{\lam}$, the first three conditions define a meromorphic differential of the second kind, which given by 
\begin{equation}\label{2.10.0} 
	d\varphi = 2t\,  \omega^{(1)} + x\, \omega^{(0)}.
\end{equation}
If the fourth condition is also satisfied, then the function 
\begin{equation}
	g(z)  = \theta(z) - \theta(\lam_1) -  \int_{\lam_1}^z d \varphi,
\end{equation}
where the path of integration lies in $\C \backslash (\lam_{2G+2}, \lam_{2G})$, satisfies the conditions in Table~\ref{glist}. 

Moreover, the function 
\begin{equation}
	\varphi(z) := \int_{\lam_1}^z d \varphi
\end{equation}
is analytic in $\C \backslash \bigcup_{k=0}^G (\lam_{2k+1}, \lam_{2k+2} )$ and satisfies the jump relations
\begin{equation}
	\varphi_+(z) + \varphi_-(z) = \begin{cases}
		0 & z \in (\lam_1, \lam_2) \\
		\oint_{b_k} d \varphi, & z \in (\lam_{2k+1}, \lam_{2k+2}), \quad k=1,\dots, G
	\end{cases}
\end{equation}

For our purposes we consider the following situation. The half-plane $(x,t \geq 0)$ is divided into distinct domains $D_m$ such that in each $D_m$ we have a fixed genus $G \geq 0$ and the moduli $\lam_j$ are split into two types:

\begin{enumerate}[]
	\item \textbf{1. Hard edges:} These $\lam_j$ are known and constant for $(x,t) \in D_m$. We require that $d\varphi = \bigo{ (z-\lam_j)^{-1/2} dz }$ as $z\to \lam_j$. \smallskip
	\item \textbf{2. Soft edges:} These $\lam_j$ are allowed to move for $(x,t) \in D_m$; 
	their motion is described by the condition that $d\varphi = \bigo{ (z-\lam_j)^{1/2} dz}$ as $z\to \lam_j$. 
Using \eqref{2.7.0} and \eqref{2.8.0} the soft edge condition is equivalent to 
\begin{equation}\label{2.12.0}	
		\textbf{$\lam_j$ is a soft edge if:\ }   
		x - V_j(\vect \lam) t = 0, 
		\qquad 
		V_j(\vect \lam) = - 2 \frac{P_{1}(\lam_j, \vect \lam)}{P_{0}(\lam_j, \vect \lam)} 
\end{equation}
\end{enumerate}
\Cref{2.12.0} is simply states that the motion of the branch points $\lam_i$ are described by the self-similar solutions of the genus-G Whitham equations \eqref{self-similar Whitham system}. Furthermore, as the Whitham equations for dNLS are strictly hyperbolic \cite{Kodama99}, it follows that any self-similar solutions of the Whitham equations admits at most one soft edge. 

\begin{rem}\label{soft edge variation}
Note that if $d\varphi$ has a soft edge $\lam_s$, then the differential $d\varphi_{\lam_s}$ obtained by differentiating $d\varphi$ with respect to the parameter $\lam_s$ is identically zero. Using \eqref{2.6.0}-\eqref{2.10.0} it follows that $d\varphi_{\lam_s}$ has no poles at either infinity, and from \eqref{2.12.0} is regular at the soft edge $\lam_s$ as well. Therefore $d\varphi_{\lam_s}$ is a holomorphic differential all of whose $a$-periods vanish, \ie, $d\varphi_{\lam_s} \equiv 0$.
\end{rem}

\subsection{Self-similar genus zero g-functions}\label{sec:onecutg}
In the genus zero case $\vect \lam = (\lam_1, \lam_2)$ and the first homology group is trivial as any closed loop is homotopic to a point. The polynomials associated with our second kind differentials \eqref{2.8.0} are given by
\begin{equation}\label{2.13.0}
	\begin{aligned}
	P_{0}(z, \vect \lam) &= z - \frac{1}{2} e_1(\vect \lam), \\
	P_{1}(z, \vect \lam) &= z^2 - \frac{1}{2} e_1(\vect \lam)  z  
	+ \lp \frac{1}{2} e_2(\vect \lam) - \frac{1}{8} e_1(\vect \lam )^2 \rp,  	
	\end{aligned}
\end{equation}
where 
\begin{align*}
	e_1( \vect \lam) = \sum_{j=1}^{2G+2} \lam_j, \qquad
	e_2( \vect \lam) = \sum_{1 \leq j <k}^{2G+2} \lam_j \lam_k
\end{align*}
are the first two elementary symmetric polynomials. The genus-0 speeds $V_{j}$ in \eqref{2.12.0} are given by 
\begin{equation}\label{2.14.0}	
	V_{j}(\vect \lam) = -\frac{1}{2} e_1(\vect \lam) - \lam_j, \qquad j = 1,2.
\end{equation}
and 
\begin{equation}\label{2.10}
	d\varphi =  \frac{ 2t P_{1}(z,\vect \lam) + x P_{0}(z, \vect \lam)} { \RR(z; \vect \lam) } dz 
		=  \frac{ 2t (z-\xi_+)(z-\xi_-)} { \RR(z; \vect \lam) } dz 
\end{equation}	
where $\RR(z; \vect \lam)  = \sqrt{(z-\lam_1)(z-\lam_2)}$ is cut on $(\lam_2, \lam_1)$ and $\RR \sim z$ as $z \to \infty$.
\subsubsection{The one-cut, hard edged case (plane waves)}\label{sec:onecutg_hard} 
If we suppose that $\{\lam_1, \lam_2\}$ are known (constant) hard edges, then the stationary points, 
the zeros of $d \varphi$, are given by 
\begin{equation}\label{2.11}
	\xi_\pm = \frac{\lam_1+ \lam_2 - \tau}{4} \pm \frac{1}{4} 
	\sqrt{ (\lam_1+ \lam_2 + \tau)^2 + 2(\lam_1-\lam_2)^2}, \quad \tau = \frac{x}{t}.
\end{equation}  
Each is a monotone decreasing function of $\tau$ with the following special values: 
\begin{equation}
\renewcommand*{\arraystretch}{1.25}
\begin{array}{r|cccc}  
 \tau & -\infty & -\frac{1}{2} \lp 3\lam_1+\lam_2 \rp & -\frac{1}{2} \lp \lam_1+ 3\lam_2 \rp & \infty \\ \hline 
 \xi_-(\tau) & \frac{1}{2} \lp \lam_1+\lam_2 \rp & \frac{1}{4} \lp \lam_1+3\lam_2 \rp & \lam_2 & - \infty \\
\xi_+(\tau) & \infty & \lam_1 & \frac{1}{4} \lp 3\lam_1+\lam_2 \rp & \frac{1}{2} \lp \lam_1 + \lam_2 \rp \\
\end{array}
\end{equation}

With $d \varphi$ defined by \eqref{2.10}, the $g$-function, analytic for $z \in \C \backslash (\lam_2, \lam_1)$, is given by:
\begin{equation}\label{2.12}
	g(z) := \theta(z) - \theta(\lam_1) -  \int_{\lam_1}^z d\varphi 
\end{equation}
where the path of integration does not pass through the branch cut $(\lam_2, \lam_1)$. 
The integral term can be computed explicitly, 
\begin{equation}\label{2.22}
	\varphi(z) := 
	\int_{\lam_1}^z d\varphi =
	 t \RR(z, \vect \lam) \lp z + \frac{1}{2}(\lam_1 + \lam_2) + \tau \rp.
\end{equation}
Clearly, $g$ is bounded at infinity by virtue of the growth condition on $dg$ and 
\begin{equation}\label{2.24}
	g(\infty) = - \theta(\lam_1) +  x \lp \frac{ \lam_1 + \lam_2}{2} \rp  + 
	t \lb  \lp \frac{ \lam_1 + \lam_2}{2} \rp^2 + \frac{1}{2}  \lp \frac{ \lam_1 - \lam_2}{2} \rp^2 \rb.
\end{equation}

For $\varphi$ defined by \eqref{2.22} the structure of the imaginary signature table depends on the position of the two real stationary points $\xi_\pm(\tau)$ relative to the branch points $\lam_1$ and $\lam_2$.
For $\tau < \frac{1}{4}( \lam_1 + 3\lam_2)$, the level set $\imag \varphi =0$ consist of the real axis minus the cut and an asymptotically vertical contour through $\xi_-(\tau) < \lam_2$. For $\tau > \frac{1}{4}( 3 \lam_1 + \lam_2)$ the situation is reversed, and the vertical contour passes through $\xi_+(\tau) > \lam_1$. For $\frac{1}{4}(\lam_1 + 3\lam_2) < \tau  <  \frac{1}{4}(3\lam_1 + \lam_2)$, both $\xi_-(\tau)$ and $\xi_+(\tau)$ lie on the cut. In this case the vertical component of $\imag \varphi = 0$ passes through the point 
\begin{equation}
	\xi_0(\tau) = -\frac{1}{2}(\lam_1 + \lam_2 + 2\tau)
\end{equation}
which lies between $\xi_-(\tau)$ and $\xi_+(\tau)$. See Figure~\ref{fig:1}.

\begin{figure}[thbp]
\begin{center}
	\includegraphics[width = .32\textwidth]{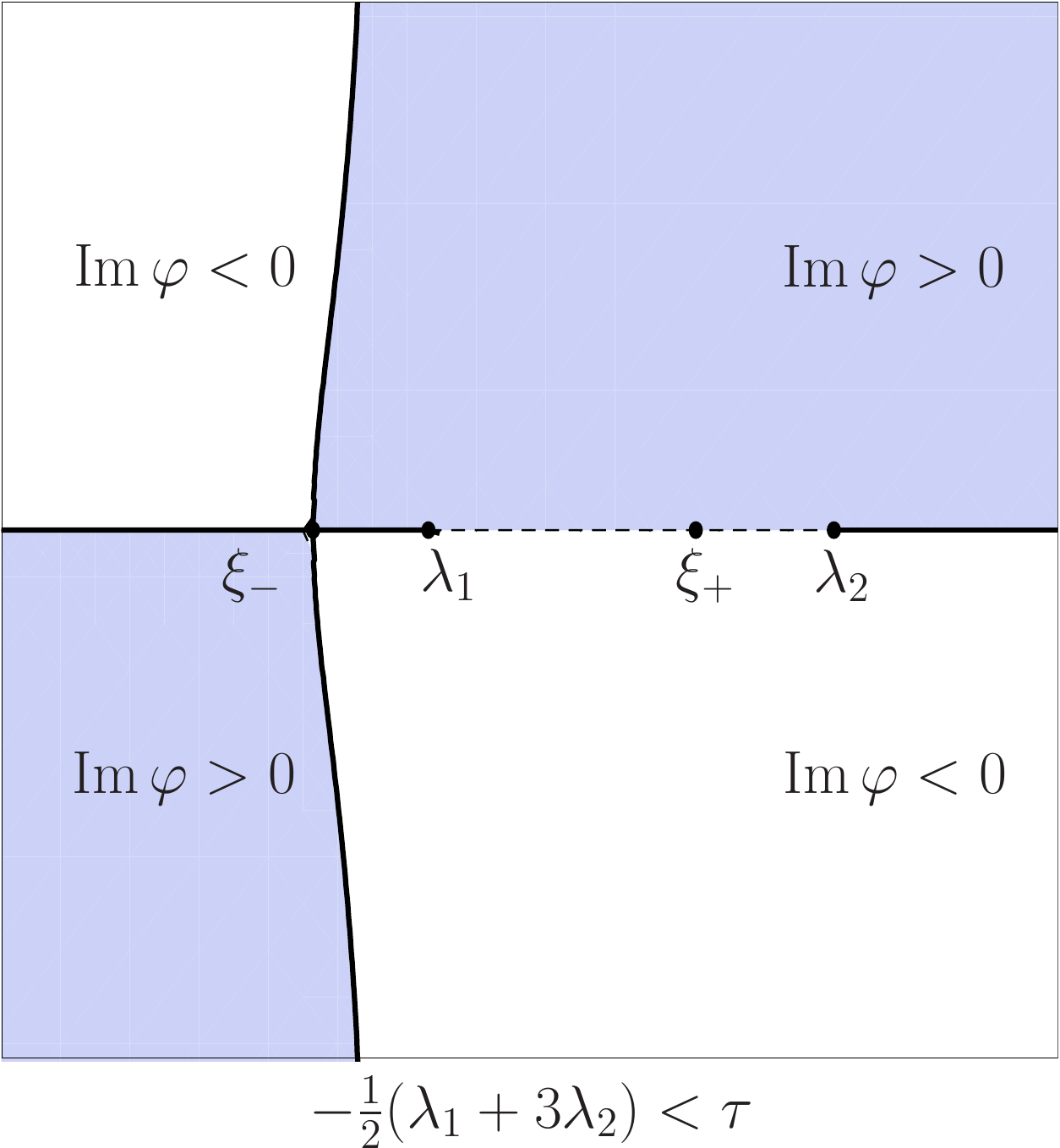}
	\includegraphics[width = .32\textwidth]{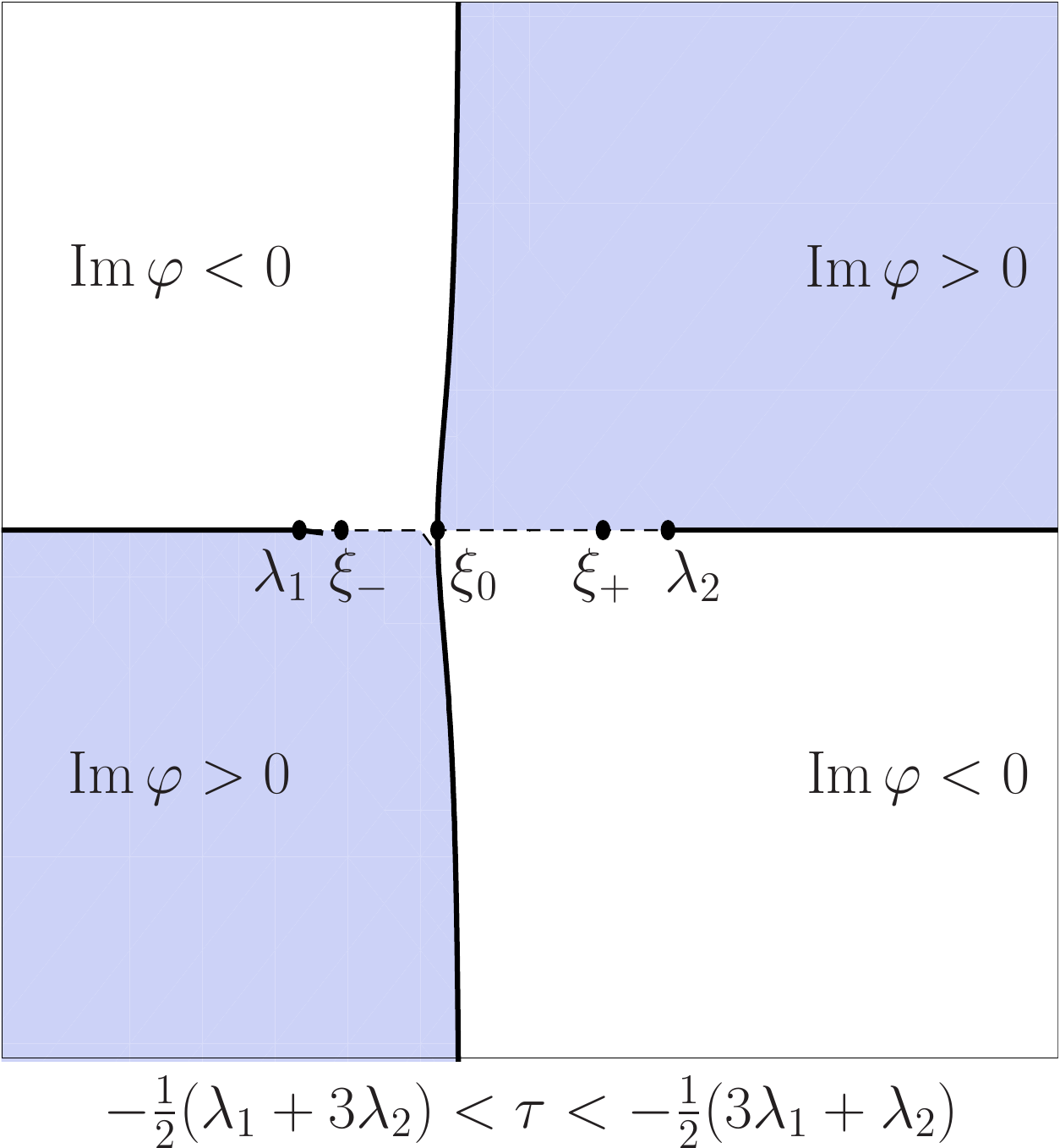}
	\includegraphics[width = .32\textwidth]{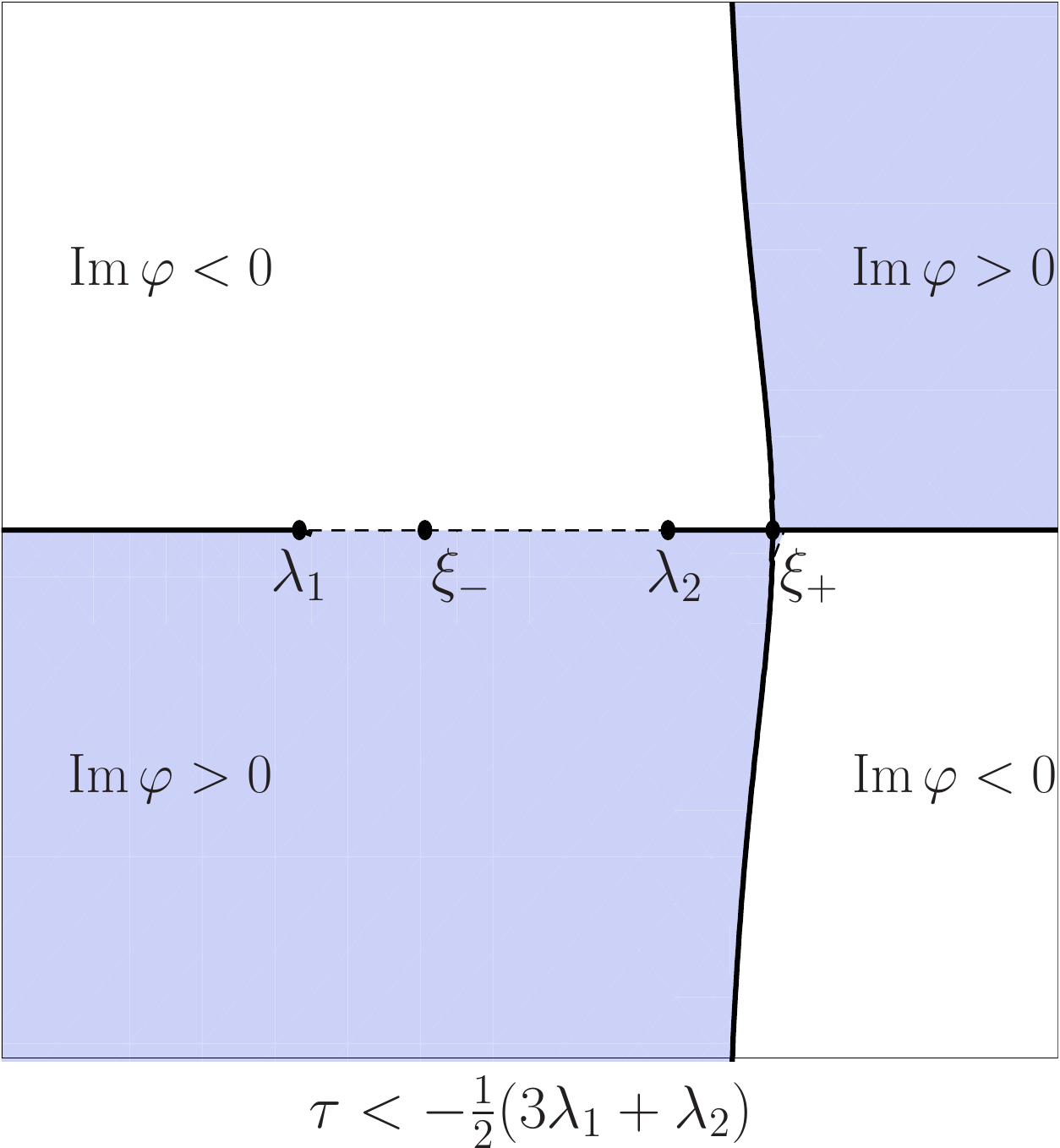}		
\caption{The topological structure of the sign table for $\imag \varphi$ bifurcates as shown as the stationary phase points $\xi_\pm(\tau)$ pass through $\alpha$ and $\beta$, the branch points of $\varphi$. 
\label{fig:1}}
\end{center}
\end{figure}

\subsubsection{The one-cut, hard/soft edge case (rarefaction waves)}\label{sec:onecutg_soft} 

If $dg$ is cut on a single interval $(\lam_1, \lam_2)$, and we suppose that one branch point is a soft edge $\lambda_s$ and the other is a known hard edge $\lambda_h$, then the conditions \eqref{2.12.0}, \eqref{2.14.0} 
effectively `pin' one zero of the numerator in \eqref{2.10} to $\lam_s$, leaving one stationary point $\xi$.
Solving these conditions gives the motion of the soft edge $\lam_s$ and stationary point $\xi$ in terms of $x,t,$ and $\lam_h$:
\begin{equation}\label{hard/soft motion}
\begin{aligned}
	\lam_s &= -\frac{1}{3} \lp 2 \tau + \lam_h \rp. \\
	\xi &=  \frac{1}{4}\lp \lam_s + 3\lam_h \rp = \frac{1}{6} \lp 4\lam_h - \tau \rp.
\end{aligned}
\end{equation}
Note that $\xi$ always lies on the branch $(\lam_2, \lam_1)$. 

In this notation $d\varphi$ has the explicit representation
\begin{equation}
	d\varphi = 2t \lp \frac{z - \lam_s}{z - \lam_h} \rp^{1/2} (z - \xi) dz  
\end{equation}
As before we define
\begin{equation}\label{hard/soft g}
	\begin{gathered}
		g(z) = \theta(z) - \theta(\lam_1) - \varphi(z), \\
		\varphi(z) = \int_{\lam_1}^z d\varphi
		= 2t \int_{\lam_1}^z \lp \frac{\lambda - \lam_s}{\lambda - \lam_h} \rp^{1/2} (\lambda - \xi) d\lambda
		= t (z-\lam_s)^{3/2}(z-\lam_h)^{1/2}. 
	\end{gathered}
\end{equation}
The zero level set of $\imag \varphi$ always consists of the real axis minus the cut $(\lam_1, \lam_2)$ and two trajectories emerging from $\lam_s$ into the upper and lower half-planes respectively. The resulting signature table for $\imag \varphi$ is given in Figure~\ref{fig:2}. Finally, we compute the limit
\begin{equation}\label{hard/soft g infty}
	g(\infty) = \frac{t}{8}  \lp \lam_h^2 -6 \lam_h \lam_s -3 \lam_s^2 \rp - \theta(\lam_1). 
\end{equation}

\begin{figure}[htbp]
\begin{center}
	\includegraphics[width = .4\textwidth]{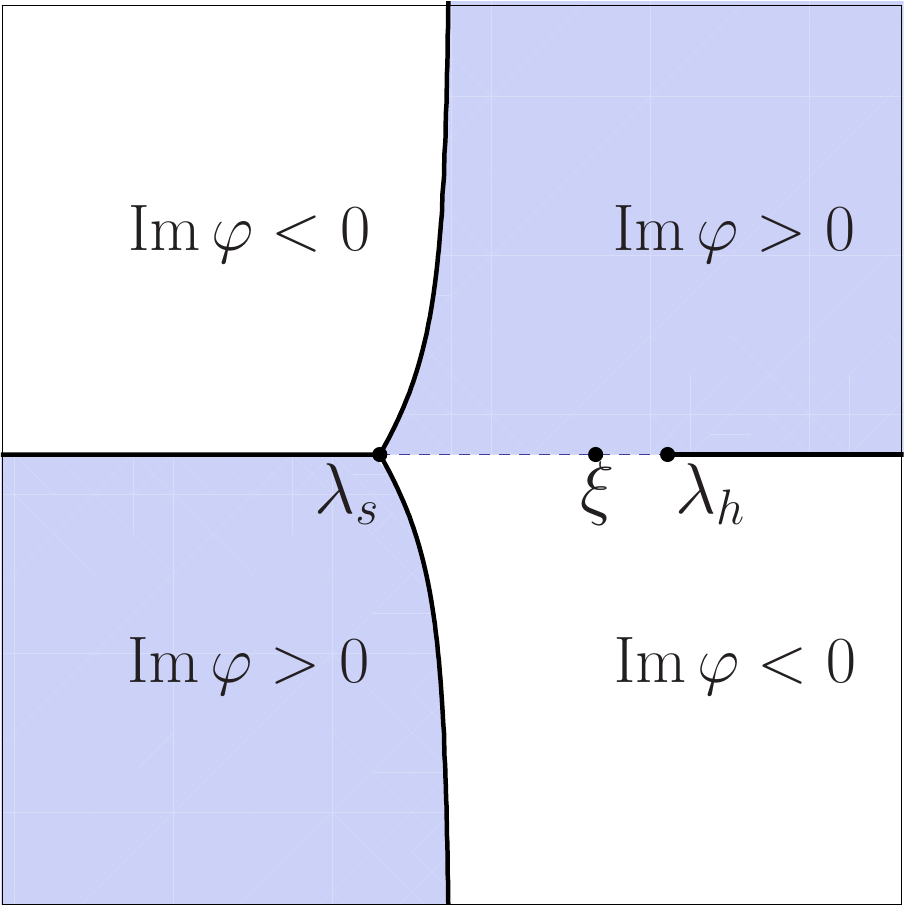}
	\hspace{.1\textwidth}
	\includegraphics[width = .4\textwidth]{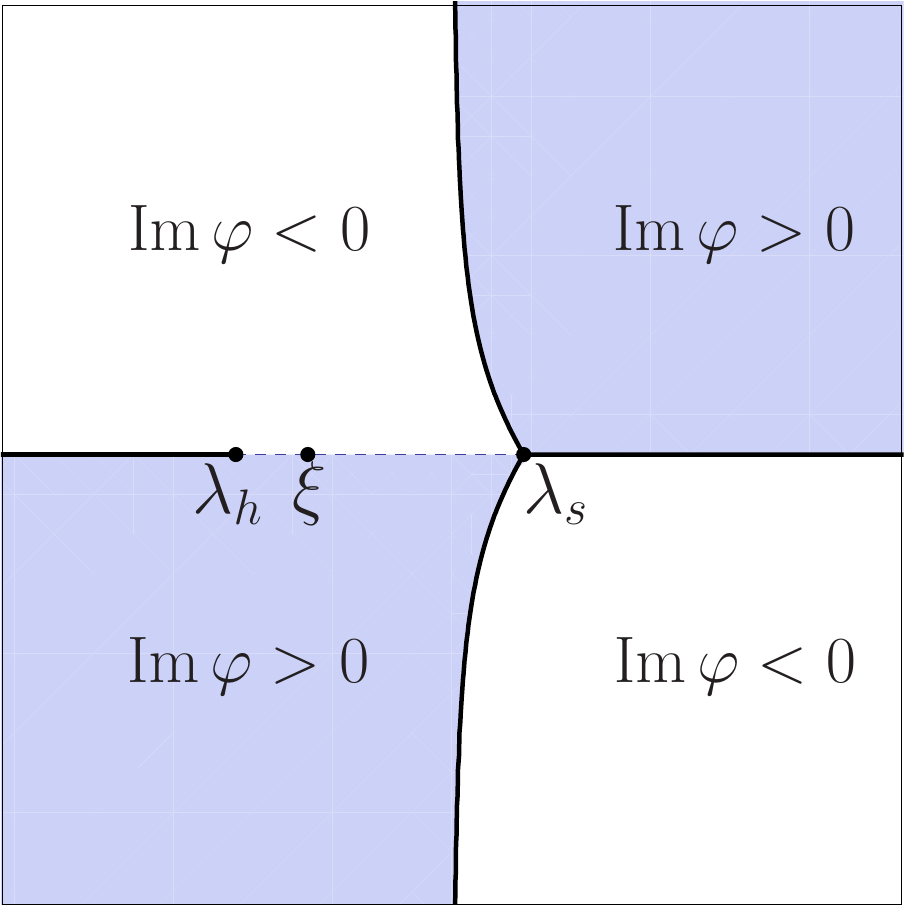}		
\caption{The topological structure of the sign table for $\imag \varphi$ corresponding to the one-cut $g$-function with hard edge, $\lam_h$, and soft edge, $\lam_s$,  see \eqref{hard/soft motion}-\eqref{hard/soft g}. The stationary phase point $\xi$ always lies along the branch cut between 
\label{fig:2}}
\end{center}
\end{figure}

\subsection{Self-similar genus one g-functions}\label{sec:twocutg}
In the genus one case, there are four ordered branch points $\vect \lam = (\lam_1, \lam_2, \lam_3, \lam_4)$, $\lam_1 > \lam_2 > \lam_3 > \lam_4$. The polynomials associated with \eqref{2.8.0} are given by
\begin{equation}\label{2.15.0}
	\begin{aligned}
	P_{0}(z, \vect \lam) &= z^2 - \frac{1}{2} e_1(\vect \lam) z 
	+ a_{0,1}, \\
	P_{1}(z, \vect \lam) &= z^3 - \frac{1}{2} e_1(\vect \lam)  z^2  
	+ \lp \frac{1}{2} e_2(\vect \lam) - \frac{1}{8} e_1(\vect \lam )^2 \rp z + a_{1,1},
	\end{aligned}
\end{equation}
and the differential \eqref{2.10.0} is given by 
\begin{equation}\label{2.32}
	d\varphi =  \frac{ 2t P_{1}(z,\vect \lam) + x P_{0}(z, \vect \lam)} { \RR(z; \vect \lam) } dz, 
\end{equation}	
where $\RR(z; \vect \lam)  = \prod_{k=1}^4\sqrt{(z-\lam_k)}$ is cut on $(\lam_4, \lam_3) \cup (\lam_2, \lam_1)$ and $\RR \sim z^2$ as $z \to \infty$.

The coefficients $a_{0,1}$ and $a_{1,1}$ in \eqref{2.15.0} can be computed explicitly from \eqref{2.7.0} \cite{BF71}:
\begin{equation}
	\begin{aligned}
		a_{0,1} &= \frac{1}{2} (\lam_1 \lam_2 + \lam_3 \lam_4) 
		- \frac{1}{2}(\lam_1- \lam_3)(\lam_2- \lam_4) \frac{ E(m)}{K(m)} \\
		a_{1,1} &= \frac{1}{8} ( \lam_1 \lam_2 - \lam_3 \lam_4)(\lam_1 + \lam_2 - \lam_3 - \lam_4) 
		- \frac{1}{8}e_1(\vect \lam)  (\lam_1- \lam_3)(\lam_2- \lam_4) \frac{ E(m)}{K(m)}. 
	\end{aligned}
\end{equation}
Here $K(m)$ and $E(m)$ are the complete elliptic integrals of the first and second kind respectively with modulus 
$$
	m =m(\vect \lam) = \frac { (\lam_1 - \lam_2)(\lam_3 - \lam_4)} {(\lam_1 - \lam_3)(\lam_2 - \lam_4)}.
$$
Clearly, $m \in (0,1)$ as $\lam_1 > \lam_2 > \lam_3 > \lam_4$. The speeds $V_{j}$ defined by \eqref{2.12.0} can be expressed as 
\begin{equation}\label{2.17.0}
	\begin{aligned}
	V_{j}(\vect \lam) &= -\frac{1}{2} e_1(\vect \lam) + \lp 2 \pd{}{\lam_j} \log L(\vect \lam) \rp^{-1} ,\\
	L(\vect \lam) &= \sqrt{2} \int_{\lam_2}^{\lam_1} \frac{ d \tau} { |\RR(\tau,\vect \lam)| } 
	= \frac{ 2\sqrt{2} K(m) }{\sqrt{(\lam_1 - \lam_3)(\lam_2 - \lam_4)}},
	\end{aligned}
\end{equation}
which are precisely the speeds of the one-phase Riemann invariants for the NLS-Whitham system \eqref{NLS-Whitham}.

\subsubsection{The two-cut, one soft edge case (modulated elliptic waves)} 
If we suppose that one of the branch points, denoted  $\lam_s$, is allowed to evolve as a soft edge while the other branch points are constant hard edges, then the cubic polynomial $2t P_{1}(z, \vect \lam) + x P_{0}(z, \vect \lam)$ has one zero in each band interval; this is a necessary consequence of the fact that $d\varphi$ has been normalized so that all of its $a$-cycles vanish. We label these zeros $\xi_-(\tau) \in (\lam_4, \lam_3)$ and $\xi_+(\tau) \in (\lam_2, \lam_1)$. The remaining zero of the cubic 
polynomial lies at the soft edge, $\lam_s$:
$$
	2 P_{1}(\lam_s, \vect \lam) + \tau P_{0}(\lam_s, \vect \lam)  = 0, \qquad \tau  = \frac{x}{t}, 
	\quad \vect \lam \backslash \lam_s \text{ constant}.
$$	 
This equation determines the motion of the soft edge and, as described by \eqref{self-similar Whitham system} and\eqref{2.17.0}, the motion is exactly that of a self-similar solution of the Whitham equations for the genus-one Riemann invariants of defocusing NLS.

Writing $2t P_{1}(z, \vect \lam) + x P_{0}(z, \vect \lam) = 2t (z-\lam_s(\tau))(z-\xi_-(\tau))(z-\xi_+(\tau))$ we find by comparing coefficients that
\begin{equation}\label{2.31}
	\begin{gathered}
 	 \xi_+(\tau) + \xi_-(\tau) = \frac{1}{2} e_1(\vect \lam) - \lam_s - \frac{\tau}{2} \\
	\xi_+(\tau)\xi_-(\tau) =\frac{1}{2} e_2(\vect \lam)  - \frac{1}{8} e_1(\vect \lam)^2 + \lp \lam_s - \frac{1}{2} e_1(\vect \lam) \rp \lp \lam_s + \frac{\tau}{2} \rp
\end{gathered}
\end{equation}
from which the motion of these station phase points are easily determined. We may write the differential $d \varphi = d\theta - dg$ as
\begin{equation}
	d\varphi = 2t \frac{ (z- \lam_s)(z-\xi_-)(z-\xi_+) }{ \prod_{k=1}^4 \sqrt{z-\lam_k}}
\end{equation}
As before we define 
\begin{equation}\label{twoband g}
	\begin{gathered} 
		g(z) = \theta(z) - \theta(\lam_1) - \varphi(z) \\
	\varphi(z) = 2t \int_{\lam_1}^z \frac{ (\lambda- \lam_s)(\lambda-\xi_-)(\lambda-\xi_+) }
	{ \prod_{k=1}^4 \sqrt{\lambda-\lam_k}} d\lambda 
	\end{gathered}
\end{equation}
so that $\varphi(z)$ is analytic in $\C \backslash ((\lam_4, \lam_3) \cup (\lam_2, \lam_1) )$ and  satisfies the relations
\begin{equation}
\begin{aligned}
	\varphi_+(z) + \varphi_-(z) &= 0 \qquad z \in(\lam_2, \lam_1), \\
	\varphi_+(z) + \varphi_-(z) &= \oint_b \varphi \qquad z \in(\lam_2, \lam_1). 
\end{aligned}
\end{equation}		

Finally, we determine the structure of the signature table for $\imag \varphi$.	
The differential $d \varphi$ is real valued on the real axis minus the bands, with vanishing $a$-cylces, and locally $d\varphi = \bigo{ (z-\lam_h)^{-1/2}}$ at each hard edge and $d\varphi = \bigo{ (z-\lam_s)^{1/2}}$ at the soft edge. 
It follows that the zero level set of $\imag \varphi$ consists of the real axis minus the bands $(\lam_4,\lam_3) \cup(\lam_2, \lam_1)$ and two trajectories emerging from the soft edge $\lam_s$ to infinity through the upper and lower half-planes respectively. 
The resulting signature table for $\imag \varphi$ is given in Figure~\ref{fig:twoband_signs}.	

\begin{figure}[t]
	\begin{center}
		\includegraphics[width=.6\textwidth]{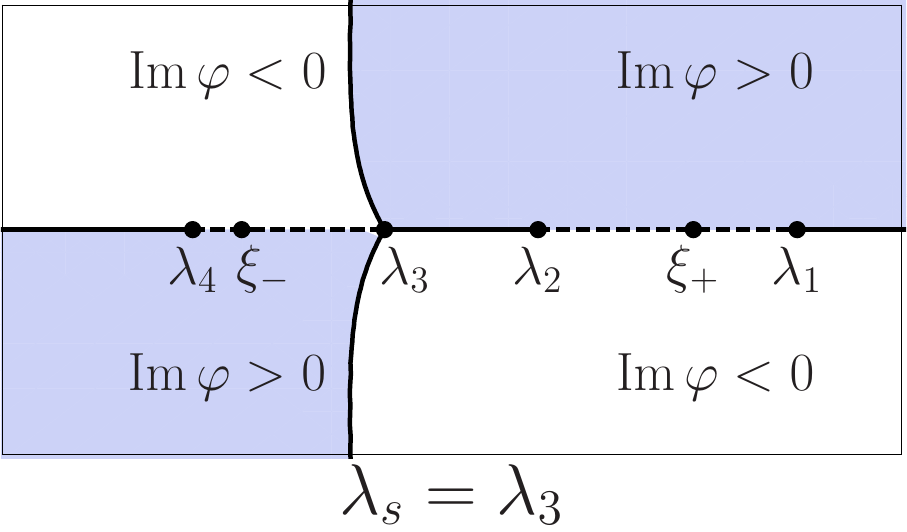}
		\caption{The topological structure of the sign table for $\imag \varphi$ corresponding
		 to the genus one self-similar $g$-function \eqref{twoband g} in the case
		  where the soft edge $\lam_s$  is $\lam_3$ and the other edges are fixed.
		 \label{fig:twoband_signs}
		 }
	\end{center}
\end{figure}

\subsubsection{The two-cut, all hard edge case (unmodulated elliptic waves)}
Though we will not need it in our analysis, the other possible genus-1 $g$-function for self-similar motion is one in which all of the branch points $\vect \lam = (\lam_1, \lam_2, \lam_3, \lam_4)$ are fixed. In this case the phase $\varphi = \int_{\lam_1}^z d\varphi$ has three stationary points at the real roots of $2tP_1(z, \vect \lam) + xP_0(z, \vect \lam)$. Necessarily one root must lie in each band $(\lam_4,\lam_3)$ and $(\lam_2, \lam_1)$, but the third root can vary across the real axis. The signature table for $\imag \varphi$ in this case consist of four components, as in the genus zero case Figure~\ref{fig:1}, but with two cut intervals along the real axis. The point at which the level set $\imag \varphi = 0$ crosses the real axis is the third root when it lies in a gap, or when the third root also lies in a band, the branches of $\imag \varphi = 0$ intersect at the zero of $\varphi$ between the two roots in that band.

\section{Steepest descent analysis}\label{sec:inverse}
We are ready to begin to study solutions of RHP~\ref{rhp:1}. Throughout the section we will refer to the constants $\lam_\pm = \mu \pm A$ which represent the constant Riemann invariants corresponding to the right half of the initial data \eqref{0.2}, and are the endpoints of the interval $\I_R$ related to the branching structure of the reflection coefficient \eqref{0.7}. The course of the inverse analysis depends on the ordering of $\lam_+$ and $\lam_-$ relative to $\pm 1$, the Riemann invariants of the left half of \eqref{0.2}. In Theorem~\ref{thm:main} we only consider the case $-1 < \lam_- < \lam_+ < 1$ and so we will only perform the inverse analysis in this case. It should be clear to the familiar reader how to adapt our calculations to the other five cases without much effort.

We begin the inverse analysis by cataloging a family of jump matrix transformations needed for the nonlinear steepest descent factorizations. We then introduce the initial jump factorizations common to each of the five asymptotic zones identified in Theorem~\ref{thm:main}. Finally, moving left-to-right, we go through the details of establishing the asymptotic behavior of the solution in each of the five zones. As we will see, in this case, when $-1 < \lam_- < \lam_+ < 1$, the initial shock is regularized by a region of rarefaction on the left and a shock wave on the right separated by a central planar plateau.

\subsection{An almanac of matrix factorizations}
Here we record several matrix factorizations that we will refer to when we deform contours onto steepest descent paths. The factorizations are grouped according to the intervals on which they will be used. The off-diagonal exponential factors are omitted but can be included by multiplying on the left and right by the appropriate diagonal factors.

For $z \in \R \backslash (\I_L \cup \I_R)$:
\begin{subequations}\label{1.1}
	\begin{align}
	\label{1.1a}
	\begin{pmatrix} 1 -r r^* & -r^*  \\ r  & 1 \end{pmatrix}
	&= \triu{-r^* } \tril {r } \\
	\label{1.1b}
	&= \tril{ \frac{r }{1-r r^*} } (1- r r^* )^{\sig}
	\triu{ \frac{-r^*}{1-r r^*} }
	\end{align}
\end{subequations}

For $z \in \I_L \backslash ( \I_L \cap \I_R)$, where $r_+ = 1/r_-^*$:
\begin{subequations}\label{1.2}
	\begin{align}
	\label{1.2a}
	\begin{pmatrix} 0 & -r_-^*  \\ r_+  & 1 \end{pmatrix}
	&= \triu{-r_-^* } \tril {r_+  } \\
	\label{1.2b}
	&= \tril{ \frac{r_-}{1-r_- r_-^*} }
	\offdiag{ - r_-^* }{ r_+ }
	\triu{ \frac{-r_+^* }{1-r_+ r_+^*} }
	\end{align}
\end{subequations}

For $z \in \I_R \cap \I_L$, where $r$ is analytic and $r^* = -r$:
\begin{subequations}\label{1.4}
	\begin{align}	
	\label{1.4a}
	\offdiag{-1}{1} 
	&= 	\triu{ -r^*}
		\begin{pmatrix} 0 & -1 \\ 1 & 0 \end{pmatrix}
		\tril{r} \\
	\label{1.4b}
	&=	\tril{ \frac{ r}{1-r r^*} }
		\begin{pmatrix} 0 & -1 \\ 1 & 0 \end{pmatrix}
		\triu{ \frac{ -r^*}{1-rr^*} }		
	\end{align}
\end{subequations}

In our main theorem, Theorem~\ref{thm:main}, we suppose that $\I_R \subset \I_L$, so the above factorizations are sufficient to perform the inverse analysis. In some of the other five cases the following factorization is also needed. 

For $z \in \I_R \backslash ( \I_L \cap \I_R)$, where $r_+ = 1/r_-^*$:
\begin{subequations}\label{1.3}
	\begin{align}
	\label{1.3a}
	\begin{pmatrix} (a_+ a_-^*)^{-1}  & -1  \\ 1  & 0 \end{pmatrix}
	&= \triu{-r_-^* }
	 \offdiag{-1}{1}
	 \tril {r_+  } \\
	\label{1.3b}
	&= \tril{ \frac{r_-}{1-r_- r_-^*} }
	( a_+ a_-^* )^{-\sig}
	\triu{ \frac{-r_+^* }{1-r_+ r_+^*} }
	\end{align}
\end{subequations}
When $a$ and $b$ are analytic, $(a a^*)^{-1} = 1 - r r^*$, the quantity $(a_+ a_-^*)^{-1}$ in the above factorization is a non-vanishing extension of $1- r r^*$ into $\I_R \backslash ( \I_L \cap \I_R)$.

\subsection{The standard sequence of matrix transformations}
In the subsequent sections we describe the steepest descent analysis for RHP~\ref{rhp:1} in each of the six possible parameter regimes. In order to streamline this procedure, we record the sequence of transformations which lead from the initial RHP to one which is amenable to asymptotic expansion. In each case the transformation is the same up to redefinition of the $g$-functions, deformations of the various domains of definition, and the transition ``times". In what follows we will define the $g$-functions and domains for each instance and point out the critical behavior at each transition time appropriate to each case. It will then remain in each case to compute the leading order behavior of the solution of RHP~\ref{rhp:1}.

The transformation to an asymptotically stable limit can be done in two steps. First, we introduce a $g$-function of genus $G$ with branch points $\lam_1 > \lam_2 > ... > \lam_{2G+2}$ by making the global change of variable $m \mapsto M$ 
\begin{equation}\label{gtrans}
	m(z) = e^{ -i g(\infty) \sig/ \eps} M(z) e^{i g(z) \sig/ \eps},
\end{equation}
which seeks to remove rapid oscillations from the problem. 
Second, we introduce steepest descent contours $\Gamma_i$, $i=1,2$ in $\C^+$ and their complex conjugate images $\Gamma_i^*$ in $\C^-$ in order to deform the jumps onto contours on which they are near identity. The exact shape of these contours is determined by the given $g$-function, but in each case $\Gamma_1$ lies to the right of $\Gamma_2$ and each returns to the real axis at exactly one point, which may or may not be distinct. This divides $\C^+$ (and $\C^-$) into three regions which we label from right-to-left as $\Omega_i, \ i=1,2,3$ (and $\Omega_i^*,\ i=1,2,3$). Using these regions we make the piecewise-analytic transformation $M \mapsto N$ defined by 
\begin{equation}\label{Lenses}
	M(z) = 
	\begin{cases}
		N(z) \tril{r(z) e^{2i(\varphi(z) + \theta(\lam_1))/\eps}} & z \in \Omega_1 \smallskip \\
		N(z) \triu{r^*(z) e^{-2i( \varphi(z) + \theta(\lam_1))/\eps}} & z \in \Omega_1^* \\
		N(z) & z \in \Omega_2 \cup \Omega_2^* \\
		N(z) \triu{ \frac{-r^*(z)}{1- r(z) r^*(z)} e^{-2i (\varphi(z)+ \theta(\lam_1))/\eps} } & z \in \Omega_3 \smallskip \\
		N(z) \tril{ \frac{-r(z) }{1- r(z) r^*(z)}e^{2i( \varphi(z)+ \theta(\lam_1))/\eps} } & z \in \Omega_3^*  
	\end{cases}
\end{equation}
the new unknown $N$ has jumps on the real axis and on each of the $\Gamma_i$'s. 

\subsection{The far left field: $\tau < -1$}

We expect that for large negative $\tau$, that is $x \ll -t$, the solution should resemble the plane wave specified by the left half of the initial data \eqref{0.2}. At the level of the RHP this means that we expect that the $g$-function should be cut on $\I _L = (-1,1)$ with two hard edges. Using the results of Section~\ref{sec:onecutg} we define the $g$-function 
\begin{gather}
	g(z) = \int_{1}^z d\theta - 2t \frac{(\lambda-\xi_-)(\lambda - \xi_+)}{\RR(\lambda; -1,1)} d\lambda
\intertext{where the stationary phase points are given by}
	\xi_\pm = \xi_\pm(\tau) = -\frac{\tau}{4} \pm \frac{1}{4} \sqrt{\tau^2 + 8}.
\end{gather}
and analytic for $z \in \C \backslash \I_L$.
	
For $\tau \leq -1$, the stationary points satisfy $\xi_+ \geq 1$ with equality only when $\tau = -1$; 
for each $\tau \leq -1$ the other stationary point $\xi_- \in (-1,0)$. As such the imaginary sign table for the function 
\begin{equation}\label{4.1phase}
	\varphi(z) =  2t \int_1^z   \frac{(\lambda-\xi_-)(\lambda - \xi_+)}{\RR(\lambda; -1,1)} d\lambda 
\end{equation}
looks like Figure~\ref{fig:1}(a). We open lens along the steepest descent paths through $\xi_+(\tau)$ as depicted in Figure~\ref{fig:3} and define the mapping from $m \mapsto N$ using \eqref{gtrans}-\eqref{Lenses}. The result is the following problem for the new unknown $N(z)$:
\begin{rhp}\label{rhp:N left}
Find a $2\times2$ matrix $N$ with the following properties
\begin{enumerate}[1.]
	\item $N(z)$ is analytic in $\C \backslash \Gamma_N$, 
	$\Gamma_N = (-\infty, \xi_+(\tau)) \bigcup_{i=1}^2 (\Gamma_i \cup \Gamma_i^*)$.
	
	\item $N(z) = I + \bigo{z^{-1} }$ as $z\to \infty$.
	
	\item $N(z)$ takes continuous boundary values on $\Gamma_N$ away from points of self intersection and branch points which satisfy the jump relation $N_+(z) = N_-(z) V_N(z)$ where
	\begin{equation}\label{Ncase1_1}
		V_N(z) = 
		\begin{cases}
			( 1- r(z) r^*(z) )^{\sig} & z \in (-\infty, \xi_+(\tau)) \backslash \I_L \\
			\offdiag{ -r_-^*(z) e^{-2i \theta(1)/\eps} }{ r_+(z) e^{2i\theta(1)/\eps} }
				& z \in \I_L \backslash \I_R \smallskip \\
			\offdiag{ e^{-2i \theta(1)/\eps} }{ e^{2i \theta(1)/\eps} } & z \in \I_R \cap \I_L	 \smallskip  \\
			\tril{r(z) e^{2i(\varphi(z) + \theta(1))/\eps}} & z \in \Gamma_1 \smallskip \\
			\triu{ \frac{ -r^*(z)}{1 - r(z) r^*(z) } e^{-2i(\varphi(z)+\theta(1))/\eps} } & z \in \Gamma_2
		\end{cases}
	\end{equation}

	\item $N(z)$ is bounded except at the points $\{ 1, -1, \lambda_+, \lambda_- \}$ where
	\begin{equation}
	\begin{aligned}	
		N(z) &= \left\{ 
		\begin{aligned}
			&\bigo{ \begin{matrix} 1 & (z-p)^{-1/2} \\ 1 & (z-p)^{-1/2} \end{matrix} } , \quad z \in \Omega_3 \\ 
			&\bigo{ \begin{matrix} (z-p)^{-1/2} & 1 \\  (z-p)^{-1/2} & 1\end{matrix} } , \quad z \in \Omega_3^* 
		\end{aligned}  
		\right. \qquad p \in \{-1,1\} \\
		N(z) &= \left\{ 
		\begin{aligned}
			&\bigo{ \begin{matrix} 
			(z-p)^{1/4} & (z-p)^{-1/4} \\ (z-p)^{1/4} & (z-p)^{-1/4} \end{matrix} }, \quad z \in \Omega_3 \\
			&\bigo{ \begin{matrix}
			(z-p)^{-1/4} & (z-p)^{1/4} \\ (z-p)^{-1/4} & (z-p)^{1/4} \end{matrix} }, \quad z\in \Omega_3^*
		\end{aligned}
		\right. \qquad p \in \{\lambda_-, \lambda_+ \}
	\end{aligned}
	\end{equation}
	
\end{enumerate}
\end{rhp}

\begin{rem}
	Throughout this section we give the jumps of the various Riemann-Hilbert problems only on the real axis and in the upper half-plane. The contours deformations we use all respect the original symmetry $m(z;x,t) = \sigma_2 m(z^*;x,t)^* \sigma_2$ of RHP~\ref{rhp:1}. It follows that the jump along a contours $\Gamma_k^* \in \C^-$ is given by $\sigma_2 v^*(z^*;x,t) \sigma_2$ where $v(z;x,t)$ is the jump defined along $\Gamma_k \in \C^+$.
\end{rem}

\begin{figure}[th]
	\begin{center}
		\includegraphics[width=.6\textwidth]{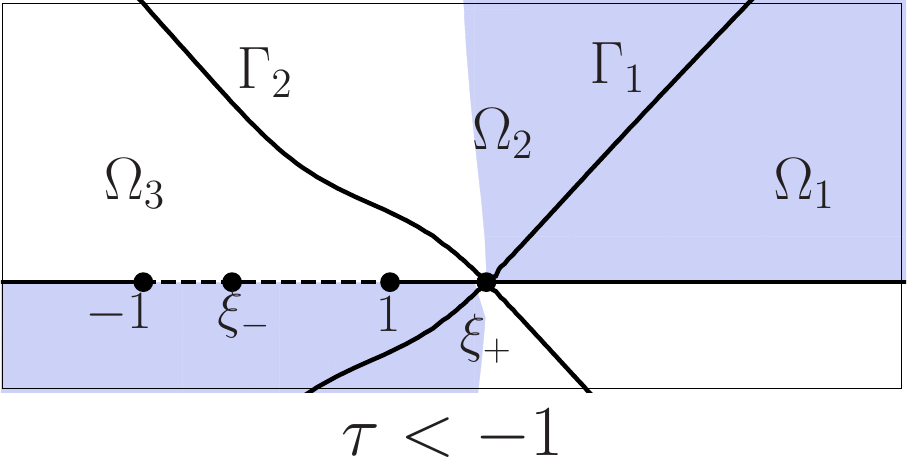}		
		\caption{The contours $\Gamma_i$ and regions $\Omega_i$ used to define the 
		map $M \mapsto N$ (c.f. \eqref{Lenses}) for $x/t = \tau$ in the left planar zone (defined above). As $\tau$ increases the stationary phase points $\xi_\pm(\tau)$ decrease, at $\tau = -1$, the boundary of the zone, $\xi_+$ collides with $1$;  $\xi_-$ lies within $(-1,1)$ for all $\tau$ in the zone. Blue regions correspond to $\imag \varphi>0$ and white regions to $\imag \varphi<0$. 
		\label{fig:3}
		}
	\end{center}
\end{figure}	
		
\subsubsection{Constructing a parametrix for $\tau < -1$}
The jumps of $N(z)$ along $\Gamma_i,\ i=1,2$ and their c.c's are all near identity at any positive distance from the real axis because the contours lie in regions in which the off diagonal entries are exponentially decaying. As a result, to leading order the solution $N(z)$ should be given by the model problem produced by neglecting the jumps off the real axis in \eqref{Ncase1_1}. 

Define
 \begin{multline}\label{Ddef}
	D(z) = \exp \lb  \frac{i \theta(1)}{\eps} + \frac{\RR(z; -1,1)}{2\pi i} \left\{ 
	 \lp \int_{-\infty}^{-1}+\int_{1}^{\xi_+}  \rp \frac{ \log( 1 - r(\lambda) r^*(\lambda))}{\RR(\lambda; -1,1)}
	 \frac{d \lambda}{\lambda -z } \right. \right. \\ + \left. \left. \lp \int_{ -1}^{\lambda_-} + \int_{\lambda_+}^1 \rp
	 \frac{  \log( r_+(\lambda)) }{\RR_+(\lambda; -1,1)}
	 \frac{d \lambda}{\lambda -z } 
	 \right\}
	 \rb
\end{multline}
As the following proposition describes, this function is constructed to remove the jumps along the real axis, or reduce to constants where they cannot be removed. Simultaneously, the growth behavior at the branch points is simplified.

\begin{prop}\label{prop:D}
The function $D: \C \backslash (-\infty, \xi_+) \to \C$ defined by \eqref{Ddef} has the following properties:
\begin{enumerate}[1.]
	\item $D$ is analytic in $ \C \backslash (-\infty, \xi_+)$, and takes continuous boundary values 
	on $(-\infty, \xi_+)$ except at the endpoints of integration in \eqref{Ddef}.
	
	\item As $z \to \infty$, $D(z) \to D(\infty)  + \bigo{z^{-1}}$ where
	\begin{equation}
		D(\infty) = e^{ i \theta(1)/\eps} 
		e^{\left[ -\frac{1}{2\pi i} \lp 
		 \lp \int_{-\infty}^{-1} + \int_1^{\xi_+} \rp
		 \frac{ \log( 1 - r(\lambda) r^*(\lambda))}{\RR(\lambda; -1,1)} d\lambda
		+ \lp \int_{-1}^{\lambda_-}+\int_{\lambda_+}^1 \rp  
		\frac{  \log( r_+(\lambda)) }{\RR_+(\lambda; -1,1)} d\lambda \rp \right]}
	\end{equation}
	
	\item For $z \in (-\infty, \xi_+(\tau))$, $D(z)$ satisfies the jump relations
	$$
	\begin{cases}
	D_+(z) / D_-(z) = 1 -r(z) r^*(z)
		& z \in (-\infty, \xi_+(\tau)) \backslash \I_L \\
	D_+(z) D_-(z) = r_+(z) e^{2i \theta(1) /\eps} & z \in \I_L \backslash \I_R \\
	D_+(z) D_-(z) = e^{2i \theta(1) /\eps} & z \in \I_L \cap \I_R
	\end{cases}
	$$
	
	\item $D(z)$ exhibits the following singular behavior at each endpoint of integration:
\begin{equation}
	\begin{cases}
		D(z) = (z - p)^{ \frac{1}{4} \sgn \imag z} D_0(z)  &  z \to p \\
		D(z) = (z - \xi_+)^{i \kappa(\xi_+)} D_0(z) & z \to \xi_+
	\end{cases}
\end{equation}
where $p \in \{ -1, 1, \lambda_-, \lambda_+\}$ is any of the four branch points, $\kappa(z) = -\frac{1}{2\pi} \log (1 - r(z) r^*(z))$, and $D_0(z)$ is a bounded function taking a definite limit as z approaches each singular point non-tangentially.
\end{enumerate}
\end{prop}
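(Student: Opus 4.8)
The plan is to write $D(z) = e^{h(z)}$ with exponent
$$
	h(z) = \frac{i\theta(1)}{\eps} + \RR(z;-1,1)\,\mathcal{C}(z),
$$
where $\mathcal{C}(z)$ denotes the sum of the two Cauchy transforms in \eqref{Ddef}, and then to read off the four properties from the standard behaviour of Cauchy integrals together with the elementary branch structure of $\RR(z;-1,1)$. For the analyticity statement (1), $\mathcal{C}$ is analytic off the union of its integration contours $(-\infty,-1)\cup(1,\xi_+)$ and $(-1,\lam_-)\cup(\lam_+,1)$, while $\RR(z;-1,1)$ is analytic off its cut $(-1,1)$; since the union of all these sets is exactly $(-\infty,\xi_+)$, both $h$ and $D=e^h$ are analytic in $\C\setminus(-\infty,\xi_+)$. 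Continuity of the boundary values away from the integration endpoints follows from the Plemelj formulae once the densities $\log(1-rr^*)/\RR$ and $\log(r_+)/\RR_+$ are seen to be H\"older continuous on the open integration intervals; here one uses that $1-rr^*\in(0,1)$ off the cut and that $|r_\pm|=1$ on $\I_L\setminus\I_R$ (so $\log r_+ = i\arg r_+$ stays bounded), both recorded in \eqref{0.9b}.

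For the behaviour at infinity (2), I would expand the kernel: since $\RR(z;-1,1)\sim z$ and $1/(\lambda-z) = -z^{-1}(1+\lambda/z+\cdots)$, one has $\RR(z;-1,1)/(\lambda - z) = -1 + \bigo{z^{-1}}$ for each fixed $\lambda$. Integrating against the densities gives $h(z) = h(\infty) + \bigo{z^{-1}}$, with $h(\infty)$ the claimed exponent of $D(\infty)$; the only point requiring care is the integral over the unbounded tail $(-\infty,-1)$, which converges absolutely because $r(z)\sim z^{-1}$ forces $\log(1-rr^*) = \bigo{z^{-2}}$ there, so $\log(1-rr^*)/\RR$ is integrable and dominated convergence controls the error.

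The jump relations (3) are the computational heart and follow from the Plemelj jump of $\mathcal{C}$ combined with the sign flip $\RR_+ = -\RR_-$ across $(-1,1)$. On $(-\infty,\xi_+)\setminus\I_L$ the factor $\RR$ is continuous and only the first Cauchy transform jumps, so $h_+ - h_- = \RR\cdot(\log(1-rr^*)/\RR) = \log(1-rr^*)$, hence $D_+/D_- = 1-rr^*$. On $\I_L\setminus\I_R$ both $\RR$ flips sign and the second transform jumps; forming $h_++h_-$ and using $\RR_- = -\RR_+$, the non-jumping parts cancel and one is left with $h_++h_- = 2i\theta(1)/\eps + \RR_+\cdot(\log r_+/\RR_+) = 2i\theta(1)/\eps + \log r_+$, i.e. $D_+D_- = r_+ e^{2i\theta(1)/\eps}$. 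On $\I_L\cap\I_R$ there is no integration contour, so only the sign flip of $\RR$ acts, $h_++h_- = 2i\theta(1)/\eps$, and $D_+D_- = e^{2i\theta(1)/\eps}$. These are exactly the three stated relations.

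The endpoint behaviour (4) is where the real work lies and is the step I expect to be the main obstacle. At $\xi_+$ the density $\log(1-rr^*)/\RR$ is finite and nonzero, so the endpoint asymptotics of a Cauchy integral give $\mathcal{C}(z) = \frac{\log(1-rr^*)(\xi_+)}{2\pi i\,\RR(\xi_+)}\log(z-\xi_+) + \bigo{1}$; multiplying by $\RR(z)\to\RR(\xi_+)$ leaves $h = \frac{1}{2\pi i}\log(1-rr^*)(\xi_+)\log(z-\xi_+)+\bigo{1}$, and since $\frac{1}{2\pi i}=-\frac{i}{2\pi}$ this is $i\kappa(\xi_+)\log(z-\xi_+)$, giving $D = (z-\xi_+)^{i\kappa(\xi_+)}D_0$. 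At the soft edges $\lam_\pm$ the mechanism is equally clean: $\RR_+$ is finite and nonzero there, the density $\log r_+/\RR_+$ is bounded, and the fourth-root structure of $\beta_R$ in \eqref{0.4b} forces $r(\lam_-)=i$, $r(\lam_+)=-i$, so $\log r_+(\lam_\mp)=\pm i\pi/2$; the endpoint log-asymptotic multiplied by $\RR_\pm(\lam)=\pm\RR_+(\lam)$, with the orientation of the endpoint accounting for the sign difference between $\lam_-$ and $\lam_+$, yields the exponent $\tfrac14$ above the axis and $-\tfrac14$ below, i.e. $(z-p)^{\frac14\sgn\imag z}$. The genuinely delicate cases are the hard edges $p\in\{-1,1\}$: there $\RR(z)\sim c(z-p)^{1/2}$ vanishes while $1-rr^*$ also vanishes like a square root by \eqref{0.9c}, so the density $\log(1-rr^*)/\RR$ carries a \emph{log-enhanced} inverse-square-root singularity $\sim(\lambda-p)^{-1/2}\log(\lambda-p)$. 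The bare $(\lambda-p)^{-1/2}$ part contributes only to the bounded factor $D_0$ after multiplication by $\RR$, while the log-enhanced part produces the surviving $\log(z-p)$. Extracting its coefficient needs the non-standard endpoint asymptotics of $\frac{1}{2\pi i}\int\frac{(\lambda-p)^{-1/2}\log(\lambda-p)}{\lambda-z}\,d\lambda$ (obtainable, e.g., by differentiating the elementary $(\lambda-p)^{-s}$-kernel formula in $s$), after which tracking the branch of $\RR$ — imaginary on the cut, real off it — converts the result into the $\frac14\sgn\imag z$ power, matching $\lam_\pm$ and completing the proof.
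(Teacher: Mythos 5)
Your proposal is correct and takes essentially the same approach as the paper: the paper's proof consists of invoking the general theory of Cauchy-type integrals (citing Muskhelishvili) together with the local behavior of $r$ and $1-rr^*$ read off from \eqref{0.7} and \eqref{0.9c}, and your argument is precisely that theory written out — the Plemelj formulas combined with the sign flip $\RR_+=-\RR_-$ for the jumps, the expansion of $\RR(z;-1,1)/(\lambda-z)$ at infinity, and the endpoint asymptotics of Cauchy integrals, including the log-enhanced inverse-square-root densities at the hard edges $\pm 1$ produced by the square-root vanishing in \eqref{0.9c}. Your identification of the values $r_+(\lam_-)=i$, $r_+(\lam_+)=-i$ and of the branch mismatch between $\RR$ and the local Cauchy kernels as the source of the exponent $\tfrac14\sgn\imag z$ is exactly the computation the paper delegates to the standard reference.
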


\begin{proof} 
Each of these properties follows immediately from the general properties of Cauchy-type integrals and the local behavior of $r$ and $1-r r^*$ at the endpoints of integration which can be read off from \eqref{0.7} and \eqref{0.9c}. For the behavior at the endpoints of integration the standard reference is \cite{Musk}.
\end{proof} 

Using the function $D(z)$ the change of variables
\begin{equation}\label{Qdef}
	Q(z) = D(\infty)^{\sig} N(z) D(z)^{-\sig}
\end{equation}
results in the following RHP for $Q$. 

\begin{rhp}{for $Q$:}\label{rhp:Q1} 
Find a $2\times2$ matrix $Q$ with the following poroperties
\begin{enumerate}[1.]
	\item $Q(z)$ is analytic in $\C \backslash \Gamma_Q$, 
	$\Gamma_Q = (-1, 1) \bigcup_{i=1}^2 (\Gamma_i \cup \Gamma_i^*)$.
	
	\item $Q(z) = I + \bigo{z^{-1} }$ as $z\to \infty$.
	
	\item $Q(z)$ takes continuous boundary values on $\Gamma_N$ away from endpoints and points of self intersection satisfying the jump relation $N_+(z) = N_-(z) V_Q(z)$ where
	\begin{equation}\label{Qcase1_1}
		V_Q(z) = 
		\begin{cases}
			\offdiag{ -1 }{ 1 } & z \in (-1,1) \smallskip \\
			\tril{r(z) D^{-2}(z) e^{2i(\varphi(z) + \theta(1))/\eps}} & z \in \Gamma_1 \smallskip \\
			\triu{ \frac{ -r^*(z) D^2(z) }{1 - r(z) r^*(z) } e^{-2i(\varphi(z)+\theta(1))/\eps} } & z \in \Gamma_2
		\end{cases}
	\end{equation}

	\item $Q(z)$ is bounded except at the points $\{ 1, -1 \}$ where it admits 1/4-root singularities in each entry.
	
\end{enumerate}
\end{rhp}

The jumps of $Q(z)$ off the real axis converge pointwise to the identity, and the limiting problem on the real axis has a simple solution. Using the small-norm theory for RHPs we can prove that the solution $Q(z)$ of RHP \ref{rhp:Q1} exists and takes the form
\begin{equation}\label{ErrDef}
	Q(z) = \begin{cases}
		E(z) P_{\xi_+}(z) & z \in \U_{\xi_+} \\
		E(z) P_\infty(z) & \text{elsewhere}
	\end{cases}
\end{equation}

The outer model $P_\infty(z)$ is the solution of the limiting problem on the real axis given by
$$
	P_\infty(z) = \Ecal(z;-1,1),
$$ 
where $\Ecal$, defined by \eqref{0.4b}, is related to the exact plane wave solution of the ZS scattering problem for the initial data produced by extending the left side of \eqref{0.2} to the entire real line. 

The outer model is a uniform approximation of $Q(z)$ except for inside a small neighborhood of $\xi_+$ where the contours $\Gamma_i$ return to the real axis. As the local behavior of the jumps is Gaussian, a local model $P_{\xi_+}$ can be constructed from parabolic cylinder functions. The construction is standard and the details are omitted, see for example the appendix to \cite{JM11}. 
The crucial fact is that the the resulting RHP for $E(z)$ has jumps which are uniformly small everywhere in the complex plane with the largest contribution coming from the boundary $\partial \U_{\xi_+}$. Small norm theory guarantees the existence of $E(z)$ and its asymptotic expansion can be computed. Once this is done, the series of transformations from $m(z)$ to $Q(z)$ can be inverted to produce the asymptotic expansion of the original problem $m(z)$. From this the leading order behavior of the solution of \eqref{0.1}-\eqref{0.2} for $\tau< -1$ is given by
\begin{equation}\label{far left field}
	\begin{gathered}
	\psi(x,t) = e^{-i t /\eps} e^{-i \phi(x/t) } + \bigo{ \sqrt{ \frac{\eps}{t}}\log \frac{\eps}{t} } \\
	\phi(\tau) = \exp \Bigg[ \frac{1}{\pi}  \Bigg(
	 \int\limits_{-\infty}^{-1} + \int\limits_1^{\xi_+(\tau)} \Bigg) \frac{ \log( 1 - r(z) r^*(z))}{\sqrt{\lambda^2-1}}
	d \lambda 
	 + \frac{1}{\pi} \int_{ \I_L  \backslash \I_R}  \frac{  \arg( r_+(\lambda)) }{\sqrt{1-\lambda^2}}
	d \lambda 
	 \Bigg]
\end{gathered}
\end{equation}

\subsection{Rarefaction zone: $-1 < \tau < -\frac{1}{2} \lp -1 +3 \lambda_+ \rp$}

\begin{figure}[th]
	\begin{center}		
		\includegraphics[width=.6\textwidth]{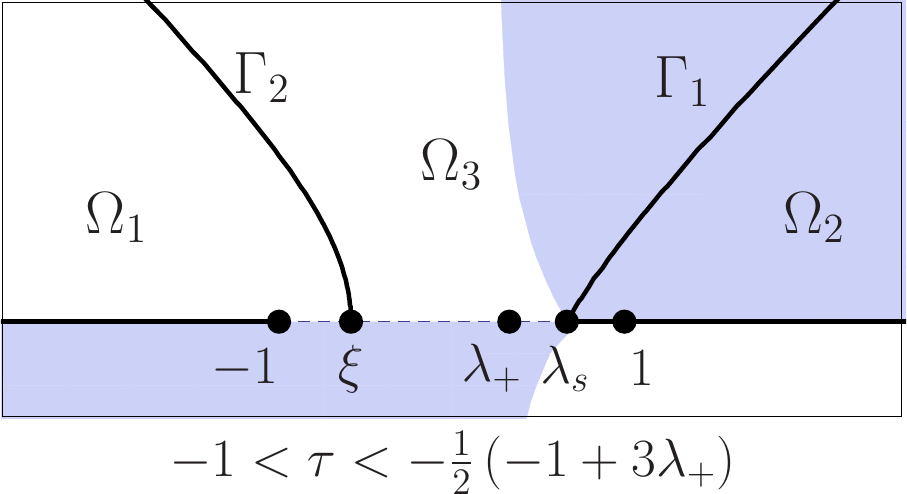}
		\caption{The regions $\Omega_k$ and contours $\Gamma_k$ used to define the transformation $M \mapsto N$ for $x/t = \tau$ in the rarefaction zone (defined above). As $x/t = \tau$ increases across the zone, $\xi(\tau)$ and $\lam_s(\tau)$ move to the right. The limits of the rarefaction zone are characterized by the soft edge $\lam_s$ colliding with $1$ and $\lambda_+$.
	Blue regions correspond to $\imag \varphi >0$ and white regions $\imag \varphi < 0$.
		\label{fig:rarefaction}
		}
	\end{center}
\end{figure}

As $\tau$ increases beyond $-1$ the stationary phase point $\xi_+(\tau)$ of the far left field phase function \eqref{4.1phase} moves inside $\I_L$ at $z=1$. When this happens, the previous factorization \eqref{Ncase1_1} creates an exponentially large jumps on the interval $(\xi_+, 1)$. So, for $\tau > -1$ we introduce a new $g$-function with a single cut $(-1, \lam_s)$ whose soft edge $\lam_s$ satisfies $\lam_s(\tau = -1) =1$. Using the results of Section~\ref{sec:onecutg_soft}, define
\begin{equation}\label{3.5}
	g(z) = \int_{\lam_s}^z d \theta - 2t \sqrt{ \frac{\lambda - \lam_s}{\lambda+1} } (\lambda - \xi) d \lambda
\end{equation}
analytic for $z \in \C \backslash (-1, \lam_s)$ where 
\begin{equation}\label{3.6}
	\begin{gathered}
		\lam_s(\tau) = -\frac{1}{3} \lp 2\tau - 1 \rp, \qquad \qquad \xi(\tau) = -\frac{1}{6} \lp 4 + \tau \rp, 
		\\
		g(\infty) = - \theta(\lam_s) + \frac{t}{6} \lp 2 - 2\tau - \tau^2 \rp 
	\end{gathered}
\end{equation}
Over the interval $-1 \leq \tau \leq -\frac{1}{2}(3\lambda_+ -1 )$, the soft edge $\lam_s(\tau)$ decreases linearly from $1$ to $\lambda_+$ and the stationary phase point $\xi(\tau)$ decreases linearly from $-1/2$ to $(\lambda_+-3)/4$. For each $\tau$ in this interval $-1 < \xi(\tau) < \lam_s(\tau) < 1$.

The modified phase function
\begin{equation}\label{3.7}
	\varphi(z) =  2t  \int_{\lam_s}^z  \sqrt{ \frac{\lambda -\lam_s}{\lambda+1} } (\lambda - \xi) d \lambda = t(z-\lam_s)^{3/2} (z+1)^{1/2}
\end{equation}
has an imaginary sign table of the form given in Figure~\ref{fig:2}b. We open lenses along the steepest descent paths through $\lam_s$ and $\xi$ which define the contours $\Gamma_i$ and regions $\Omega_i$,  see Figure~\ref{fig:rarefaction}. The resulting problem for $N(z)$ defined by \eqref{gtrans}-\eqref{Lenses} is as follows. 

\begin{rhp}{for $N$:} Find a $2\times2$ matrix $N$ with the following properties
\begin{enumerate}[1.]
	\item $N(z)$ is analytic in $\C \backslash \Gamma_N$, 
	$\Gamma_N = (-\infty, \lam_s) \bigcup_{i=1}^2 (\Gamma_i \cup \Gamma_i^*)$.
	\item $N(z) = I + \bigo{z^{-1} }$ as $z\to \infty$.
	\item $N(z)$ takes continuous boundary values on $\Gamma_N$ away from points of self intersection and branch points which satisfy the jump relation $N_+(z) = N_-(z) V_N(z)$ where
	\begin{equation}\label{Ncase1_2}
		V_N(z) = 
		\begin{cases}
			(1 - r(z)r^*(z) )^{\sig} & z \in (-\infty, -1)  \\
			T(z) & z \in (-1, \lambda_+) \\
			\tril{r(z) e^{2i(\varphi(z)+\theta(\lam_s))/\eps}} & z \in \Gamma_1 \smallskip \\ 
			\triu{ \frac{ -r^*(z)}{1 - r(z) r^*(z) } e^{-2i(\varphi(z)+\theta(\lam_s))/\eps} } & z \in \Gamma_2
		\end{cases}
	\end{equation}
	\item $N(z)$ is bounded except at the points $z = p,\, p \in \{\lambda_+, \lambda_- , -1\}$ where
	\begin{equation}\label{Nbounds2}
	\begin{aligned}	
		N(z) &= \left\{ 
		\begin{aligned}
			&\bigo{ \begin{matrix} 1 & (z+1)^{-1/2} \\ 1 & (z+1)^{-1/2} \end{matrix} } , \quad z \in \C^+ \\ 
			&\bigo{ \begin{matrix} (z+1)^{-1/2} & 1 \\  (z+1)^{-1/2} & 1\end{matrix} } , \quad z \in \C^- 
		\end{aligned}  
		\right. \\
		N(z) &= \left\{ 
		\begin{aligned}
			&\bigo{ \begin{matrix} 
			(z-p)^{1/4} & (z-p)^{-1/4} \\ (z-p)^{1/4} & (z-p)^{-1/4} \end{matrix} }, \quad z \in \C^+ \\
			&\bigo{ \begin{matrix}
			(z-p)^{-1/4} & (z-p)^{1/4} \\ (z-p)^{-1/4} & (z-p)^{1/4} \end{matrix} }, \quad z\in \C^-
		\end{aligned}
		\right. \qquad p \in \{\lambda_-, \lambda_+ \}
	\end{aligned}
	\end{equation}
	
	The precise form of the jump $T(z)$ in \eqref{Ncase1_2} depends on the position of $\xi = \xi(\tau)$ relative to $\lambda_\pm$:
	\begin{equation}
		T(z) = \begin{cases}
			\offdiag{ -r_-^*(z) e^{-2i \theta(\lam_s)/\eps} }
			{ r_+(z) e^{2i\theta(\lam_s)/\eps} } 
			& z \in \lp (-1, \lambda_-) \cup (\lambda_+, \lam_s) \rp \cap \{ z < \xi \} \smallskip \\
			\begin{pmatrix}	 
				0 &  -r^*_-(z)e^{-2i \theta(\lam_s)/\eps} \\ 
				r_+(z) e^{2i \theta(\lam_s)/\eps} & e^{-2i\varphi_+(z)} 
			\end{pmatrix} 
			& z \in \lp (-1, \lambda_-) \cup (\lambda_+, \lam_s) \rp \cap \{ z > \xi \} \smallskip \\
			\offdiag{ e^{-2i \theta(\lam_s)/\eps} }{ e^{2i \theta(\lam_s)/\eps} } & z \in (\lambda_-, \lambda_+) 
			\end{cases}
		\end{equation}
\end{enumerate}
\end{rhp}
\subsubsection{Rarefaction parametrix}
The jump matrices of the RHP for $N(z)$ take well defined asymptotic limits whose values are independent of the ordering of $\xi(\tau)$ and $\lambda_-$. 
The jumps off the real axis approach identity pointwise, and along the real axis the jumps take well defined limits, up to phase constants depending on $\eps$. As before, we first introduce a scalar function $D(z)$ which simplifies the limiting problem by reducing the limiting problem to one with constant jumps. Define
\begin{multline}\label{3.10}
	D(z) = \exp \lb \frac{i \theta(\lam_s)}{\eps} + \frac{\RR(z; -1,\lam_s)}{2\pi i} \lp 
	\int_{(-\infty, -1) } \frac{  \log(1 -r(\lambda) r^*(\lambda) )}{\RR(\lambda; -1,\lam_s)}
	 \frac{d \lambda}{\lambda -z } \right. \right.
	\\ + \left. \left. \int_{(-1,\lambda_-) \cup (\lambda_+, \lam_s)}  \frac{  \log( r_+(\lambda)) }{\RR_+(\lambda; -1,\lam_s)}
	 \frac{d \lambda}{\lambda -z } 
	 \rp
	 \rb.
\end{multline}  

\begin{prop}\label{prop:D2}
The function $D: \C \backslash (-\infty, \lam_s) \to \C$ defined by \eqref{3.10} has the following properties:
\begin{enumerate}[1.]
	\item $D$ is analytic in $ \C \backslash (-\infty, \lam_s)$, and takes continuous boundary values 
	on $(-\infty, \lam_s)$ except at the endpoints of integration in \eqref{Ddef}.
	
	\item As $z \to \infty$, $D(z) \to D(\infty)  + \bigo{z^{-1}}$ where
	\begin{equation}
		D(\infty) = e^{ i \theta(\lam_s)/\eps} 
		e^{\left[ -\frac{1}{2\pi i} \lp 
		 \int_{-\infty}^{-1} \frac{ \log( 1 - r(\lambda) r^*(\lambda))}{\RR(\lambda; -1,1)} d\lambda
		+ \lp \int_{-1}^{\lambda_-}+\int_{\lambda_+}^{\lam_s} \rp  
		\frac{  \log( r_+(\lambda)) }{\RR_+(\lambda; -1,1)} d\lambda \rp \right]}
	\end{equation}
	
	\item For $z \in (-\infty, \lam_s)$, $D(z)$ satisfies the jump relations
	$$
	\begin{cases}
	D_+(z) / D_-(z) = 1 -r(z) r^*(z)
		& z \in (-\infty, -1)\\
	D_+(z) D_-(z) = r_+(z) e^{2i \theta(\lam_s) /\eps} & z \in (-1, \lambda_-) \cup (\lambda_+, \lam_s) \\
	D_+(z) D_-(z) = e^{2i \theta(\lam_s) /\eps} & z \in (\lambda_-, \lambda+)
	\end{cases}
	$$
	
	\item $D(z)$ exhibits the following singular behavior at each endpoint of integration:
\begin{equation}
	\begin{cases}
		D(z) = (z - p)^{ \frac{1}{4} \sgn \imag z} D_0(z)  &  z \to p \\
	\end{cases}
\end{equation}
where $p \in \{ -1, \lambda_-, \lambda_+\}$ and $D_0(z)$ is a bounded function taking a definite limit as z approaches each point non-tangentially.
\end{enumerate}
\end{prop}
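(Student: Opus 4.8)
The plan is to argue exactly as in the proof of Proposition~\ref{prop:D}, since \eqref{3.10} differs from \eqref{Ddef} only in that the branch points of the relevant radical are now $\{-1,\lam_s\}$ rather than $\{-1,1\}$ and the outer integration endpoint $\xi_+$ is replaced by the soft edge $\lam_s$. I would write $D(z) = e^{i\theta(\lam_s)/\eps}\,e^{h(z)}$, where $h(z) = \RR(z;-1,\lam_s)\,C(z)$ and $C(z)$ is the sum of the two Cauchy integrals in \eqref{3.10}. By \eqref{0.7} and \eqref{0.9c} the densities $\log(1-rr^*)/\RR$ and $\log r_+/\RR_+$ are real-analytic in the interior of each integration arc, so $C$ is analytic off $(-\infty,\lam_s)$ and takes H\"older-continuous boundary values away from the endpoints $\{-1,\lam_-,\lam_+,\lam_s\}$; combined with the analyticity of $\RR(\cdot;-1,\lam_s)$ off $(-1,\lam_s)$, this yields Property~1.

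For Property~2 I expand the kernel, $\RR(z;-1,\lam_s)/(\lambda-z) = -1 + \bigo{z^{-1}}$ as $z\to\infty$ (using $\RR(z;-1,\lam_s)=z+\bigo{1}$), so $h(z)\to h(\infty) = -\frac{1}{2\pi i}\big[\int_{-\infty}^{-1}\frac{\log(1-rr^*)}{\RR(\lambda;-1,\lam_s)}\,d\lambda + (\int_{-1}^{\lam_-}+\int_{\lam_+}^{\lam_s})\frac{\log r_+}{\RR_+(\lambda;-1,\lam_s)}\,d\lambda\big]$, which is the claimed $D(\infty)$. Property~3 is a Sokhotski--Plemelj computation organized by whether $\RR$ is continuous or changes sign across the interval in question: on $(-\infty,-1)$ the radical is continuous, so $h_+-h_- = \RR\,(C_+-C_-) = \log(1-rr^*)$ and $D_+/D_- = 1-rr^*$; on $(-1,\lam_-)\cup(\lam_+,\lam_s)$ one has $\RR_+=-\RR_-$, giving $h_++h_- = \RR_+(C_+-C_-) = \log r_+$ and $D_+D_- = r_+e^{2i\theta(\lam_s)/\eps}$; and on the gap $(\lam_-,\lam_+)$, where $C$ is analytic but the radical still flips sign, $h_++h_- = (\RR_++\RR_-)C = 0$, leaving $D_+D_- = e^{2i\theta(\lam_s)/\eps}$.

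The substance, and the main obstacle, is Property~4, which I would settle by Muskhelishvili's local theory of Cauchy integrals \cite{Musk} together with the local data at each branch point: $\RR(\cdot;-1,\lam_s)\sim(z-p)^{1/2}$ at $p\in\{-1,\lam_s\}$, the square-root vanishing of $1-rr^*$ at $-1$ recorded in \eqref{0.9c}, and the unit-modulus relation $|r_\pm|=1$ from \eqref{0.9b} at $\lam_\pm$. At the hard edges $-1,\lam_-,\lam_+$ these combine to give the $(z-p)^{\frac14\sgn\imag z}$ behavior exactly as in Proposition~\ref{prop:D}. The point requiring genuine care is the soft edge $\lam_s$: there the density $\log r_+/\RR_+$ carries a $(\lambda-\lam_s)^{-1/2}$ singularity, so $C$ acquires a $(z-\lam_s)^{-1/2}$ singularity, but this is precisely cancelled by the prefactor $\RR(z;-1,\lam_s)\sim(z-\lam_s)^{1/2}$, leaving $h$ (hence $D$) bounded with a definite non-tangential limit at $\lam_s$. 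This cancellation explains the absence of $\lam_s$ from the list in Property~4 and is the single-soft-edge manifestation of the soft-edge vanishing condition built into the $g$-function.
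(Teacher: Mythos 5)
Your proposal is correct and is essentially the paper's own argument: the paper treats Proposition~\ref{prop:D2} exactly as Proposition~\ref{prop:D}, namely by the Plemelj formulae for Cauchy-type integrals plus Muskhelishvili's local endpoint analysis, which is precisely the structure of your proof. Your extra care at the soft edge---noting that the $(z-\lam_s)^{-1/2}$ singularity of the Cauchy integral is cancelled by the vanishing of the prefactor $\RR(z;-1,\lam_s)$, so that $D$ remains bounded there and $\lam_s$ correctly drops out of Property~4---is the one point where this proposition genuinely differs from Proposition~\ref{prop:D}, and you resolve it correctly (you also implicitly correct the statement's typo $\RR(\lambda;-1,1)$, which should read $\RR(\lambda;-1,\lam_s)$ in the formula for $D(\infty)$).
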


Using $D(z)$, the change of variables
\begin{equation}\label{Qdef2}
	D(\infty)^\sig Q(z) D(z)^{-\sig}
\end{equation}
results in the following RHP for Q:

\begin{rhp}{for $Q$:}\label{rhp:Q2}
Find a $2\times2$ matrix $Q$ with the following poroperties
\begin{enumerate}[1.]
	\item $Q(z)$ is analytic in $\C \backslash \Gamma_Q$, 
	$\Gamma_Q = (-1, \lam_s) \bigcup_{i=1}^2 (\Gamma_i \cup \Gamma_i^*)$.
	
	\item $Q(z) = I + \bigo{z^{-1} }$ as $z\to \infty$.

	\item $Q(z)$ takes continuous boundary values on $\Gamma_N$ away from endpoints and points of self intersection satisfying the jump relation $N_+(z) = N_-(z) V_Q(z)$ where
	\begin{equation}\label{Ncase1_2}
		V_Q(z) = 
		\begin{cases}
			\offdiag{-1}{1} & z \in (\lambda_-, \lambda_+) \smallskip \\
			\offdiag{-1}{1} & z \in \lp (-1, \lambda_-) \cup (\lambda_+, \lam_s) \rp \cap \{ z < \xi \} \smallskip\\
			\begin{pmatrix} 0 & -1 \\ 1 & \frac{D_+(z)}{D_-(z)} e^{-2 i \varphi_+(z)/\eps} \end{pmatrix}
			& z \in \lp (-1, \lambda_-) \cup (\lambda_+, \lam_s) \rp \cap \{ z > \xi \}  \smallskip \\
			\tril{r(z) D^{-2}(z) e^{2i(\varphi(z) + \theta(1))/\eps}} & z \in \Gamma_1 \smallskip \\
			\triu{ \frac{ -r^*(z) D^2(z) }{1 - r(z) r^*(z) } e^{-2i(\varphi(z)+\theta(1))/\eps} } & z \in \Gamma_2.
		\end{cases}
	\end{equation}
		
	\item $Q(z)$ is bounded except at the points $z=-1$ where it admits 1/4-root singularities in each entry.
	
\end{enumerate}
\end{rhp}

The jumps of $Q(z)$ off the real axis converge pointwise to identity, and on the real axis the jump of $Q(z)$ is either constant, or uniformly exponentially close to the same constant. Using the small norm theory for RHPs we can prove that the solution $Q(z)$ of RHP \ref{rhp:Q2} exists and takes the form 
\begin{equation}\label{ErrDef2}
	Q(z) = \begin{cases}
		E(z) P_{\lam_s}(z) & z \in \U_{\lam_s} \\
		E(z) P_\infty(z) & \text{elsewhere}
	\end{cases}
\end{equation}

The outer model $P_\infty(z)$ is the solution of the limiting problem on the real axis given by
$$
	P_\infty(z) = \Ecal(z;-1,\lam_s),
$$ 
where $\Ecal$, defined by \eqref{0.7}, is related to the Jost functions for the plane wave initial data whose (scaled) Riemann invariants are -1 and the linearly evolving $\lam_s = \lam_s(\tau)$ given by \eqref{3.6}. The outer model is a uniform approximation of $Q(z)$ except for a small neighborhood of $\lam_s$ where the contours $\Gamma_i$ return to the real axis. The local $3/2$-vanishing indicates that the local model $P_{\lam_s}$ should be constructed from Airy functions. The construction is standard \cite{DKMVZ} and the details are omitted. 
The crucial fact is that the the resulting RHP for $E(z)$ has jumps which are uniformly small everywhere in the complex plane with the largest contribution coming from the boundary $\partial \U_{\lam_s}$. Small norm theory guarantees the existence of $E(z)$ and its asymptotic expansion can be computed. 

Once this is done, the series of transformations from $m(z)$ to $Q(z)$ can be inverted to produce the asymptotic expansion of the original problem $m(z)$. From this the leading order behavior of the solution of \eqref{0.1}-\eqref{0.2} for $-1< \tau< \frac{1}{2} (1-3\lambda_+)$ is given by

\begin{equation}\label{3.11}
	\begin{gathered}
	\psi(x,t) = \lp \frac{2-\tau}{3} \rp  e^{-it(2- 2\tau - \tau^2)/3\eps}  e^{ - i \phi(x/t) } + \bigo{ \eps } \\
	\phi(\tau) = 
	\frac{1}{\pi} \lp \int_{-\infty}^{-1}  \frac{\log(1 -r(\lambda) r^*(\lambda) )}{\sqrt{(\lambda+1)(\lambda-\lam_s)}} d\lambda +
	\int_{(-1,\lambda_-) \cup (\lambda_+, \lam_s)}  \frac{ \arg( r_+(\lambda)) }{\sqrt{(\lambda+1)(\lam_s-\lambda)}} d\lambda \rp
	\end{gathered}
\end{equation}

\subsection{The central plateau: $-\frac{1}{2} \lp -1 + 3\lambda_+ \rp < \tau < -\frac{1}{2} \lp -1 +  \lambda_+ + 2 \lambda_- \rp$ }\ \\
For $\tau = -\frac{1}{2} \lp -1 + 3\lambda_+ \rp$ the soft edge $\lam_s$ defined by \eqref{3.6} of the rarefaction $g$-function \eqref{3.5} collides with $\lambda_+$, the upper boundary of $\I_R$.  
If $\lam_s< \lambda_+$ then the factorization \eqref{Lenses} leaves a non-vanishing component in the (1,1)-entry of $V_N$ on $(\lam_s, \lambda_+)$ which is exponentially large. The $g$-function must be modified to account for this. For $\tau > -\frac{1}{2} \lp -1 + 3\lambda_+ \rp$ we use the results of Section~\ref{sec:onecutg_hard} to define a $g$-function, with a single fixed cut $(-1, \lam_+)$: 
\begin{equation}\label{3.12}
	g(z) = \int_{\lambda_+}^z d\theta - 2t \frac{(\lambda- \xi_-)(\lambda - \xi_+)}
	{\RR(\lambda;-1, \lambda_+)} d \lambda
	= \theta \big|_{\lambda_+}^z  - t \RR(z,-1,\lambda_+)(z-\xi_0),
\end{equation}
where 
\begin{equation}\label{3.13}
	\begin{gathered}
	\xi_0 = -\frac{1}{2} \lp -1 + \lambda_+ \rp - \tau \\
	\xi_\pm = \xi_\pm(\tau) = \frac{1}{4} \lp -1 + \lambda_+ - \tau \pm \sqrt{ (-1+\lambda_+ + \tau)^2 + 2	(\lambda_+ + 1)^2} \rp.	
	\end{gathered}
\end{equation}	
are ordered such that -$1 < \xi_- < \xi_0  < \xi_+ < \lambda_+$ for $-\frac{1}{2} \lp -1 + 3\lambda_+ \rp < \tau < -\frac{1}{2} \lp -1 +  \lambda_+ + 2 \lambda_- \rp$. As such, both of the stationary points of the modified phase function
\begin{equation}\label{3.14}
	\varphi(z) = 2t \int_{\lambda_+}^z  \frac{(\lambda- \xi_-)(\lambda - \xi_+)}
	{\RR(\lambda;-1, \lambda_+)} d \lambda = - t \RR(z;-1,\lambda_+)(z-\xi_0)
\end{equation}
lie on its branch cut and the transition point for the signature of $\imag \varphi$ occurs at $\xi_0$ which lies between them, see Figure~\ref{fig:4}. The lens contours $\Gamma_i$ used to define \eqref{Lenses} are taken as the steepest descent contours through $\xi_\pm$. The contours $\Gamma_i$ and corresponding regions $\Omega_i$ are as depicted in Figure~\ref{fig:4}. 

\begin{figure}[t]
	\begin{center}
		\includegraphics[width = .6\textwidth]{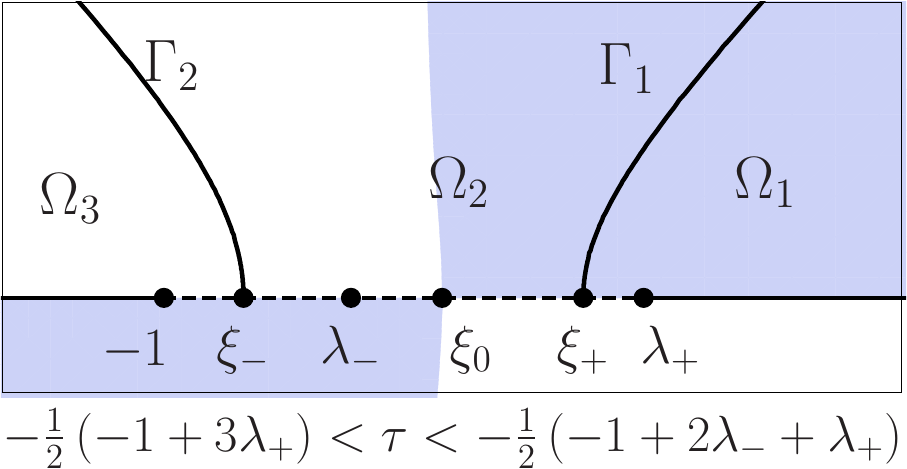}
		\caption{The contours $\Gamma_i$ and regions $\Omega_i$ used to define the 
		map $M \mapsto N$ (c.f. \eqref{Lenses}) for $x/t = \tau$ in the central plateau (defined above).
		As $\tau$ varies across the region, $-1< \xi_-(\tau)< \xi_0(\tau) < \xi+(\tau)< \lam_+$ 
		are each decreasing. The lower bound on $\tau$ in this region is characterized 
		by the collision $\xi_+(\tau) = \lam_+$ and the upper bound by $\xi_0(\tau) = \lam_-$. 
		The lesser stationary phase point $\xi_-(\tau)$ may lie on either side of $\lambda_-$ 
		for allowed values of $\tau$. Blue regions correspond to $\imag \varphi >0$ 
		and white regions $\imag \varphi < 0$.
		\label{fig:4}
		}
	\end{center}
\end{figure}

The result of \eqref{gtrans}-\eqref{Lenses} using \eqref{3.12} is the following RHP for $N(z)$:

\begin{rhp} Find a $2\times2$ matrix-valued function $N$ with the following properties
\begin{enumerate}[1.]
	\item $N(z)$ is analytic in $\C \backslash \Gamma_N$, 
	$\Gamma_N = (-\infty, \lambda_+) \bigcup_{i=1}^2 (\Gamma_i \cup \Gamma_i^*)$.
	\item $N(z) = I + \bigo{z^{-1} }$ as $z\to \infty$.
	\item $N(z)$ takes continuous boundary values on $\Gamma_N$ away from points of self intersection and branch points which satisfy the jump relation $N_+(z) = N_-(z) V_N(z)$ where
	\begin{align}\label{3.15}
		V_N(z) = 
		\begin{cases}
			(1 - r(z) r^*(z) )^{\sig} & z \in (-\infty, -1)  \\
			T(z)	& z \in (-1, \xi_+) \\
			\begin{pmatrix} 0 & -e^{-2i \theta(\lambda_+)/\eps} \\
		 	e^{2i \theta(\lambda_+)/\eps} & 0 
			\end{pmatrix} & z \in (\xi_+,  \lambda_+) \smallskip \\
			\tril{r(z) e^{2i (\varphi(z)+ \theta(\lambda_+))/\eps}} & z \in \Gamma_1 \smallskip \\
			\triu{ \frac{ -r^*(z)}{1 - r(z) r^*(z) } e^{-2i (\varphi(z) + \theta(\lambda_+))/\eps} } & z \in \Gamma_2
		\end{cases}
	\end{align}
	
	\item $N(z)$ is bounded except at the points $z = \{-1,\lambda_-, \lambda_+\}$ where the local growth bound at each point are given by
\begin{equation}\label{Nbounds3}
	\begin{aligned}	
		N(z) &= \left\{ 
		\begin{aligned}
			&\bigo{ \begin{matrix} 1 & (z+1)^{-1/2} \\ 1 & (z+1)^{-1/2} \end{matrix} } , \quad z \in \C^+ \\ 
			&\bigo{ \begin{matrix} (z+1)^{-1/2} & 1 \\  (z+1)^{-1/2} & 1\end{matrix} } , \quad z \in \C^- 
		\end{aligned}  
		\right. \\
		N(z) &= \left\{ 
		\begin{aligned}
			&\bigo{ \begin{matrix} 
			(z-\lambda_-)^{1/4} & (z-\lambda_-)^{-1/4} \\ 
			(z-\lambda_-)^{1/4} & (z-\lambda_-)^{-1/4} \end{matrix} }, \quad z \in \C^+ \\
			&\bigo{ \begin{matrix}
			(z-\lambda_-)^{-1/4} & (z-\lambda_-)^{1/4} \\ 
			(z-\lambda_-)^{-1/4} & (z-\lambda_-)^{1/4} \end{matrix} }, \quad z\in \C^-
		\end{aligned}
		\right. \\
		N(z) & = \bigo{\begin{matrix} 
			(z-\lambda_+)^{-1/4} & (z-\lambda_+)^{-1/4} \\ 
			(z-\lambda_+)^{-1/4} & (z-\lambda_+)^{-1/4}
			\end{matrix}}
	\end{aligned}
	\end{equation}
\end{enumerate}
\end{rhp}

$T(z)$ is one of the following sets of twist matrices, which depends on the ordering of $\xi_-$ and $\lambda_-$:

If $\xi_- > \lambda_-$ then
\begin{subequations}\label{Tdef}
\begin{equation}\label{Tdefa}
	T(z) =
	\begin{cases}
		\offdiag{ -r_-^*(z) e^{-2i \theta(\lambda_+)/\eps} }{ r_+(z) e^{2i\theta(\lambda_+)/\eps} }
			& z \in (-1, \lambda_-) \smallskip \\
		\offdiag{ -e^{-2i \theta(\lambda_+)/\eps} }{ e^{2i \theta(\lambda_+)/\eps} } 
		& z \in (\lambda_-, \xi_+),	
	\end{cases}	
\end{equation}
or if $\xi_- < \lambda_-$, then
\begin{equation}\label{Tdefb}
	T(z) = 
	\begin{cases}
		\offdiag{ -r_-^*(z) e^{-2i \theta(\lambda_+)/\eps} }{ r_+(z) e^{2i\theta(\lambda_+)/\eps} }
			& z \in (-1, \xi_-) \smallskip \\
		\begin{pmatrix}  0 & -r_-^*(z) e^{-2i \theta(\lambda_+) / \eps} \\ 
		r_+(z) e^{2i \theta(\lambda_+) / \eps} & e^{-2i \varphi_+(z)}
		\end{pmatrix} & z \in (\xi_-, \lambda_-) \smallskip \\
		\offdiag{ -e^{-2i \theta(\lambda_+)/\eps} }{ e^{2i \theta(\lambda_+)/\eps} } 
		& z \in (\lambda_-, \xi_+).	
	\end{cases}
\end{equation}
\end{subequations}

Examining $T(z)$, $N(z)$ has near identity jump matrices only if $\xi_0 > \lambda_-$; if $\xi_0 < \lambda_-$, then on the segment $(\xi_0, \lambda_-) \subset (\xi_-, \lambda_-)$ the jump \eqref{Tdefb} is exponentially large in the $(2,2)$-entry. This defines the upper boundary of the central plateau region: the upper boundary is the unique $\tau$ such that $ \xi_0 = \lambda_-$:
$$
	\tau = -\frac{1}{2} \lp -1 +  \lambda_+ + 2 \lambda_- \rp 
	\quad \Longleftrightarrow \quad 
	\xi_0(\tau) = \lambda_-,
$$	
Provided that $\xi_0 > \lambda_-$, \ie, $-\frac{1}{2} \lp -1 + 3\lambda_+ \rp < \tau < -\frac{1}{2} \lp -1 +  \lambda_+ + 2 \lambda_- \rp$, the limiting value of the jump matrices of $N$ are the same in all cases
$$
	V_N(z) \sim 
	\begin{cases}
		(1 - r(z) r^*(z))^{\sig} & z \in (-\infty, -1) \\ 
		\offdiag{ -r_-^*(z) e^{-2i \theta(\lambda_+)/\eps} }{ r_+(z) e^{2i\theta(\lambda_+)/\eps} }
			& z \in (-1, \lambda_-) \smallskip \\
		\offdiag{ -e^{-2i \theta(\lambda_+)/\eps} }{ e^{2i \theta(\lambda_+)/\eps} } 
		& z \in (\lambda_-, \lambda_+).
	\end{cases}	
$$	

\subsubsection{Constructing the parametrix in the central plateau}
As before the RHP for $N(z)$ has a well defined asymptotic limit--independent of the ordering of $\xi_-(\tau)$ and $\lambda_-$. 
The jumps off the real axis approach identity pointwise, and the jumps along the real axis take well defined limits, up to phase constants depending on $\eps$. Again we introduce a scalar function $D(z)$ which reduces the limiting problem to one with constant jumps. Define
\begin{multline}\label{3.10}
	D(z) = \exp \lb \frac{i \theta(\lambda_+)}{\eps} + \frac{\RR(z; -1, \lambda_+)}{2\pi i} \lp 
	\int_{(-\infty, -1) } \frac{  \log(1 -r(\lambda) r^*(\lambda) )}{\RR(\lambda; -1, \lambda_+)}
	 \frac{d \lambda}{\lambda -z } \right. \right.
	\\ + \left. \left. \int_{(-1,\lambda_-)}  \frac{  \log( r_+(\lambda)) }{\RR_+(\lambda; -1,\lambda_+)}
	 \frac{d \lambda}{\lambda -z } 
	 \rp
	 \rb.
\end{multline}  

\begin{prop}\label{prop:D3}
The function $D: \C \backslash (-\infty, \lambda_+) \to \C$ defined by \eqref{3.10} has the following properties:
\begin{enumerate}[1.]
	\item $D$ is analytic in $ \C \backslash (-\infty, \lambda_+)$, and takes continuous boundary values 
	on $(-\infty, \lambda_+)$ except at the endpoints of integration in \eqref{Ddef}.
	
	\item As $z \to \infty$, $D(z) \to D(\infty)  + \bigo{z^{-1}}$ where
	\begin{equation}
		D(\infty) = e^{ i \theta(\lambda_+)/\eps} 
		e^{\left[ -\frac{1}{2\pi i} \lp 
		 \int_{-\infty}^{-1} \frac{ \log( 1 - r(\lambda) r^*(\lambda))}{\RR(\lambda; -1,\lambda_+)} d\lambda
		+ \int_{-1}^{\lambda_-}
		\frac{  \log( r_+(\lambda)) }{\RR_+(\lambda; -1,\lambda_+)} d\lambda \rp \right]}
	\end{equation}
	
	\item For $z \in (-\infty, \lambda_+)$, $D(z)$ satisfies the jump relations
	$$
	\begin{cases}
	D_+(z) / D_-(z) = 1 -r(z) r^*(z)
		& z \in (-\infty, -1)\\
	D_+(z) D_-(z) = r_+(z) e^{2i \theta(\lambda_+) /\eps} & z \in (-1, \lambda_-) \\
	D_+(z) D_-(z) = e^{2i \theta(\lambda_+) /\eps} & z \in (\lambda_-, \lambda+)
	\end{cases}
	$$
	
	\item $D(z)$ exhibits the following singular behavior at each endpoint of integration:
\begin{equation}
	\begin{cases}
		D(z) = (z - p)^{ \frac{1}{4} \sgn \imag z} D_0(z)  &  z \to p \\
	\end{cases}
\end{equation}
where $p \in \{ -1, \lambda_- \}$ and $D_0(z)$ is a bounded function taking a definite limit as z approaches each point non-tangentially.
\end{enumerate}
\end{prop}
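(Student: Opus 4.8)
The plan is to treat $\log D$ as a scalar Cauchy-type integral and read off each of the four properties from the Plemelj--Sokhotski formulas, exactly as in the proof of Proposition~\ref{prop:D}. Write $\log D(z) = \tfrac{i\theta(\lambda_+)}{\eps} + \RR(z;-1,\lambda_+)\,\mathcal{C}(z)$, where $\mathcal{C}(z)$ is the Cauchy transform of the density equal to $\tfrac{\log(1-r r^*)}{\RR(\lambda;-1,\lambda_+)}$ on $(-\infty,-1)$ and to $\tfrac{\log r_+}{\RR_+(\lambda;-1,\lambda_+)}$ on $(-1,\lambda_-)$. Since $\mathcal{C}$ is analytic off the support of its density and $\RR(\cdot;-1,\lambda_+)$ is analytic off its cut $(-1,\lambda_+)$, property~1 is immediate with combined singular set $(-\infty,\lambda_+)$. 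For property~2 I expand $(\lambda-z)^{-1} = -z^{-1} + \bigo{z^{-2}}$ and $\RR(z;-1,\lambda_+) = z + \bigo{1}$ as $z\to\infty$, so that $\RR(z;-1,\lambda_+)\mathcal{C}(z) \to -\tfrac{1}{2\pi i}\int h\,d\lambda$, which reproduces the stated $D(\infty)$ once the two pieces of the density are separated.

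For property~3 I use that $\RR(\cdot;-1,\lambda_+)$ is continuous and real on $(-\infty,-1)$ but satisfies $\RR_+ = -\RR_-$ across its cut $(-1,\lambda_+)$. On $(-\infty,-1)$ the jump of $\mathcal{C}$ contributes $\tfrac{\log(1-r r^*)}{\RR(z;-1,\lambda_+)}$, and multiplication by the continuous factor $\RR(z;-1,\lambda_+)$ gives $D_+/D_- = 1-r r^*$. On the cut I write $\log(D_+D_-) = \tfrac{2i\theta(\lambda_+)}{\eps} + \RR_+(\mathcal{C}_+-\mathcal{C}_-)$, using $\RR_+ = -\RR_-$ to retain only the jump of $\mathcal{C}$; on $(-1,\lambda_-)$ this jump equals $\tfrac{\log r_+}{\RR_+}$ and yields $D_+D_- = r_+ e^{2i\theta(\lambda_+)/\eps}$, while on $(\lambda_-,\lambda_+)$, where the density vanishes, one gets $D_+D_- = e^{2i\theta(\lambda_+)/\eps}$. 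This is where the choice to integrate $\log r_+$ only over $(-1,\lambda_-)$ pays off, matching the anti-diagonal jump of $V_N$ on $\I_R$.

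The only delicate point---and the one I expect to be the main obstacle---is property~4, the singular behaviour at the integration endpoints. The points $-\infty$ and $\lambda_+$ are harmless: the density decays at $-\infty$ because $r \sim z^{-1}$ forces $\log(1-r r^*) = \bigo{z^{-2}}$, while at $\lambda_+$ the transform $\mathcal{C}$ is regular and $\RR(\cdot;-1,\lambda_+)$ vanishes, so $D(\lambda_+) = e^{i\theta(\lambda_+)/\eps}$ is finite and nonzero (this is why $\lambda_+$ is absent from property~4). At $p\in\{-1,\lambda_-\}$ I would invoke Muskhelishvili's local analysis of Cauchy integrals near an endpoint \cite{Musk}, combined with the known local structure of the scattering data: from \eqref{0.9c} the factor $1-r r^*$ vanishes like a square root at the branch points, and from \eqref{0.7} $r$ carries fourth-root branching at $\lambda_-$. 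Tracking these powers through $\RR(z;-1,\lambda_+)\mathcal{C}(z)$ produces the claimed $(z-p)^{\frac14\sgn\imag z}$ behaviour, with a bounded, nonvanishing remainder $D_0$ possessing a definite nontangential limit. The bookkeeping of which branch of $\RR$ and which boundary value of $r$ enter on each side of the cut is the genuinely case-specific part, but it is identical in spirit to Proposition~\ref{prop:D} and requires no new ideas.
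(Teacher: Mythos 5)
Your proposal is correct and follows essentially the same route as the paper: the paper proves Proposition~\ref{prop:D} precisely by appealing to the general properties of Cauchy-type integrals (Plemelj formulas) together with Muskhelishvili's endpoint analysis and the local behavior of $r$ and $1-r r^*$ read off from \eqref{0.7} and \eqref{0.9c}, and Proposition~\ref{prop:D3} is the direct analogue, stated without separate proof. One detail to fix in your endpoint bookkeeping: $r$ itself has square-root (not fourth-root) branching at $\lambda_-$, since it is built from $\beta_R^2$; what actually produces the exponent $\tfrac14$ is the limiting value $r_+ \to i$, i.e.\ $\arg r_+(\lambda_-) = \tfrac{\pi}{2}$, fed through the factor $\tfrac{1}{2\pi i}\log(z-\lambda_-)$ in the Cauchy integral near that endpoint.
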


Using $D(z)$, the change of variables
\begin{equation}\label{Qdef3}
	D(\infty)^\sig Q(z) D(z)^{-\sig}
\end{equation}
results in the following RHP for Q:

\begin{rhp}{for $Q$:}\label{rhp:Q3}
Find a $2\times2$ matrix $Q$ with the following poroperties
\begin{enumerate}[1.]
	\item $Q(z)$ is analytic in $\C \backslash \Gamma_Q$, 
	$\Gamma_Q = (-1, \lambda_+) \bigcup_{i=1}^2 (\Gamma_i \cup \Gamma_i^*)$.
	
	\item $Q(z) = I + \bigo{z^{-1} }$ as $z\to \infty$.
	
	\item $Q(z)$ takes continuous boundary values on $\Gamma_N$ away from endpoints and points of self intersection satisfying the jump relation $N_+(z) = N_-(z) V_Q(z)$ where
	\begin{equation}\label{Qplateau}
		V_Q(z) = 
		\begin{cases}
			\begin{pmatrix} 0 & -1 \\ 1 & \one_{\xi_-(\tau) < z < \lambda_-} \frac{D_+(z)}{D_-(z)} e^{-2 i \varphi_+(z)/\eps} \end{pmatrix}
			& z \in (-1, \lambda_+) \smallskip \\
			\tril{r(z) D^{-2}(z) e^{2i(\varphi(z) + \theta(1))/\eps}} & z \in \Gamma_1 \smallskip \\
			\triu{ \frac{ -r^*(z) D^2(z) }{1 - r(z) r^*(z) } e^{-2i(\varphi(z)+\theta(1))/\eps} } & z \in \Gamma_2.
		\end{cases}
	\end{equation}
	Here, $\one_{a<z<b}$ is the indicator function of the set $(a,b)$. If $b<a$ than this is the indicator of the empty set and the function is identically zero.
	
	\item $Q(z)$ is bounded except at the points $z=-1, \lambda_+$ where it admits 1/4-root singularities in each entry.
	
\end{enumerate}
\end{rhp}

The jumps of $Q(z)$ off the real axis converge pointwise to identity, and the limiting problem on the real axis, regardless of the position of $\xi_-(\tau)$ relative to $\lambda_-$, is uniformly exponentially near a constant twist. Using the small norm theory for RHPs we can prove that the solution $Q(z)$ of RHP \ref{rhp:Q3} exists and takes the form 
\begin{equation}\label{ErrDef3}
	Q(z) = E(z) P_\infty(z) 
\end{equation}
The outer model $P_\infty(z)$ is the solution of the limiting problem on the real axis given by
$$
	P_\infty(z) = \Ecal(z; -1, \lambda_+),
$$ 
where $\Ecal$, defined by \eqref{0.4b}, is related to the solution of the ZS system \eqref{0.3} for a plane wave potential $\psi(x)$ whose Riemann invariants \eqref{0-phase riemann invariants} are -1 and $\lambda_+$.


The outer model in this case is
\begin{equation}\label{3.16}
	P^\infty(z) = D(\infty)^{-\sig}  \Ecal(z; -1, \lambda_+)  D(z)^\sig
\end{equation}	
where now 
\begin{multline}\label{3.17}
	D(z) = \exp \lb \frac{i \theta(\lambda_+)}{\eps} + \frac{\RR(z; -1,\lambda_+)}{2\pi i} \lp 
	\int_{(-\infty, -1) } \frac{  \log( 1 - r(\lambda)r^*(\lambda))}{\RR(\lambda; -1,\lambda_+)}
	 \frac{d \lambda}{\lambda -z } \right. \right.
	\\ + \left. \left. \int_{(-1,\lambda_-) }  \frac{  \log( r_+(\lambda)) }{\RR_+(\lambda; -1,\lambda_+)}
	 \frac{d \lambda}{\lambda -z } 
	 \rp
	 \rb.
\end{multline}

The outer model is uniformly accurate for each fixed $\tau \in  \lp -\frac{1}{2} \lp -1 + 3\lambda_+ \rp,  -\frac{1}{2} \lp -1 +  \lambda_+ + 2 \lambda_- \rp \rp$, in the entire complex plane, provided that $\xi_+(\tau) < \lambda_+ -\delta$ and $\xi_0 > \lambda_- + \delta$ for any fixed constant $\delta>0$.

The resulting behavior of the solution of \eqref{0.1}-\eqref{0.2} is 
\begin{equation}\label{3.24}
	\begin{gathered}
	\psi(x,t) = \lp \frac{\lambda_+ + 1}{2} \rp  e^{-i (k x - \omega t)/\eps}  e^{ - i \phi_0 } + \bigo{ e^{-ct/\eps} } \\
	 k = \lambda_+ -1 \qquad \omega = -\frac{1}{2} (\lambda_+ -1)^2 - \frac{1}{4} \lp \lambda_+ + 1 \rp^2 \\
	\phi_0 = 
	\frac{1}{\pi} \lp \int_{-\infty}^{-1}  \frac{\log(1 -r(\lambda) r^*(\lambda) )}{\sqrt{(\lambda+1)(\lambda-\lambda_+)}} d\lambda +
	\int_{-1}^{\lambda_-} \frac{ \arg( r_+(\lambda)) }{\sqrt{(\lambda_+ - \lambda)(\lambda+1)}} d\lambda \rp
	\end{gathered}
\end{equation}

\subsection{The modulation zone: \\
{ $  -\frac{1}{2} \lp -1 +  \lambda_+ + 2 \lambda_- \rp < \tau <  
-\frac{1}{2} \lp \lambda_+  + \lambda_- - 2 \rp 
+ \frac{2(1+\lambda_-)(1+\lambda_+)} {\lambda_+ + \lambda_- + 2} $} }\  \\

When $\tau$ increases beyond $-\frac{1}{2}(-1 + 2\lambda_- + \lambda_+)$ the point $\xi_0(\tau)$ (defined by \eqref{3.13}) lies to the right of $\lambda_-$ this makes the (2,2) entry of the jump $V_N$ defined by \eqref{3.15} exponentially large on the interval $(\xi_0, \lambda_-)$. To arrive at a stable limit problem we modify the $g$-function to include a gap interval below $\lambda_-$, with a soft upper edge. Following Section~\ref{sec:twocutg} define the $g$-function, analytic for $z \in \C \backslash \lp (-1, \lam_s) \cup (\lambda_-, \lambda_+) \rp $:
\begin{equation}\label{3.25}
	g(z) = \int_{\lambda_+}^z d\theta - 2t \frac{(\lambda- \lam_s)(\lambda- \xi_-)(\lambda-\xi_+) }{ \RR(\lambda; -1, \lam_s, \lambda_-, \lambda_+)} d \lambda.
\end{equation}
The motion of the soft edge $\lam_s = \lam_s(x/t)$ is given by the self-similar solution of the Whitham equations:
\begin{equation}\label{3.26}
	\begin{gathered}
	\frac{x}{t} = V_3(\lam_+, \lam_-, \lam_s,-1) = 
	-\frac{1}{2} (-1 + \lam_s + \lambda_- + \lambda_+) 
	- \frac{ \lam_s + 1}{1 - \frac{ \lambda_- +1}{\lambda_- - \lam_s} \frac{ E(m)}{K(m)} } \\
	m = \frac{ (\lambda_+ - \lambda_-)(\lam_s+1)}{ (\lambda_+ - \lam_s)(\lambda_- + 1)} 
	\end{gathered}
\end{equation}
where $K(m)$ and $E(m)$ are the complete elliptic integrals of the first and second kind respectively.
The above equation is solvable for each $\lam_s \in (-1, \lambda_-)$. Using \eqref{3.26} it's easy to verify the two-band solution degenerates when:
\begin{equation}
	\begin{aligned}
		\lam_s \to  \lambda_- &, \qquad \tau \to -\frac{1}{2} \lp \lambda_+ + 2\lambda_- - 1 \rp, \\
		\lam_s \to  -1 &, \qquad \tau \to -\frac{1}{2} \lp \lambda_+ + \lambda_- - 2 \rp 
		+ \frac{2(\lambda_+ + 1)(\lambda_- + 1)}{\lambda_+ + \lambda_- +2}, 
	\end{aligned}
\end{equation}
which define the transition from the modulation zone to the plane wave zones which it separates. The two stationary phase points $\xi_-(\tau)$, and $\xi_+(\tau)$ which lie one in each band can be computed from \eqref{2.31}. 

The phase function
\begin{equation}
	\varphi(z) = 2t \int_{\lambda_+}^z 
	\frac{ ( \lambda - \lam_s)(\lambda - \xi_-)(\lambda-\xi_+)}
	{\RR(\lambda; -1, \lam_s, \lambda_-, \lambda_+)} d \lambda 
\end{equation}
is analytic in $\C \backslash (-1, \lam_s) \cup (\lambda_-, \lambda_+)$ and satisifies the jump relation
\begin{equation}
	\varphi_+(z) + \varphi_-(z) = \begin{cases}
		0 & z \in (\lambda_-, \lambda_+) \\
		\gamma = 2(\varphi_+(\lam_s) - \varphi(\lambda_-)) & z \in (-1, \lam_s)
	\end{cases}
\end{equation}

\begin{figure}[t]
	\begin{center}
		\includegraphics[width=.8\textwidth]{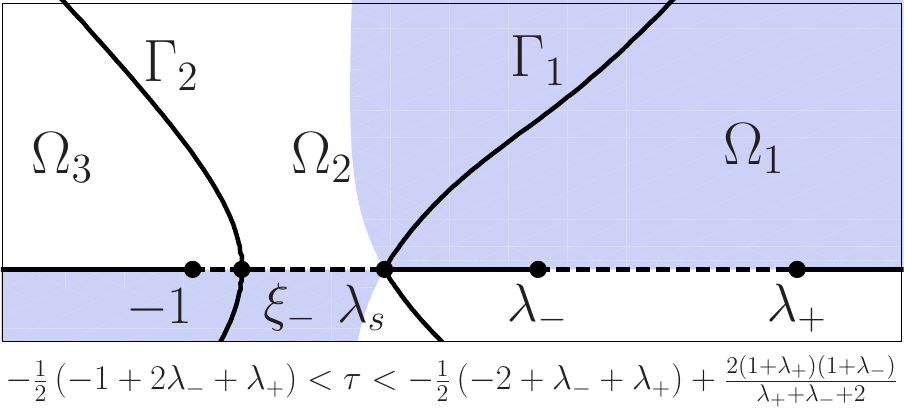}
		\caption{The contours $\Gamma_i$ and regions $\Omega_i$ used to define the map 
		$M \mapsto N$ (c.f. \eqref{Lenses}) for $\tau=x/t$ in the modulation zone (given above). 
		As $\tau = x/t$ varies across the modulation zone the soft edge $\lam_s(\tau)$ 
		decreases according to the Whitham evolution \eqref{3.26}. At the lower and upper bounds 
		of the modulation zone $\lambda_s(\tau)$ collides with $\lam_-$ and $-1$ respectively. 
		The lower collision ($\lam_s = \lam_-$) is the soliton limit of the modulated 
		 wavefront and the upper collision ($\lam_s = -1$) is the zero amplitude limit.
		Blue regions correspond to $\imag \varphi >0$ and white regions $\imag \varphi < 0$.
		\label{fig:twoband_contours}
		}
	\end{center}	
\end{figure}

The structure of the zero level set of $\imag \varphi$ resembles that in Figure~\ref{fig:twoband_signs}. In order to define the mapping from $m \mapsto N$ given by \eqref{gtrans}-\eqref{Lenses} in the two band case we open lenses from $\xi_-(\tau)$ (opening to the left) and from $\lam_s(\tau)$ (opening to the right) as shown in Figure~\ref{fig:twoband_contours}. The result of \eqref{gtrans}-\eqref{Lenses} with $g$ given by \eqref{3.25} is the following RHP for N(z):

\begin{rhp} Find a $2\times2$ matrix-valued function $N$ with the following properties
\begin{enumerate}[1.]
	\item $N(z)$ is analytic in $\C \backslash \Gamma_N$, 
	$\Gamma_N = (-\infty, \lam_s(\tau)) \cup (\lambda_-, \lambda_+) \bigcup_{i=1}^2 (\Gamma_i \cup \Gamma_i^*)$.
	\item $N(z) = I + \bigo{z^{-1} }$ as $z\to \infty$.
	\item $N(z)$ takes continuous boundary values on $\Gamma_N$ away from points of self intersection and branch points which satisfy the jump relation $N_+(z) = N_-(z) V_N(z)$ where
	\begin{align}\label{3.30}
		V_N(z) = 
		\begin{cases}
			(1 - r(z) r^*(z) )^{\sig} & z \in (-\infty, -1 )  \\
			\begin{pmatrix}
				0 & -r^*(z) e^{-i \gamma/\eps} e^{-2i\theta(\lambda_+)/\eps} \\
				r(z) e^{i \gamma/\eps} e^{2i\theta(\lambda_+)/\eps} &
				0
			\end{pmatrix} & z \in (-1,\xi_-(\tau)) \smallskip \\
			\begin{pmatrix}
				0 & -r^*(z) e^{-i \gamma/\eps} e^{-2i\theta(\lambda_+)/\eps} \\
				r(z) e^{i \gamma/\eps} e^{2i\theta(\lambda_+)/\eps} &
				e^{i\gamma/\eps} e^{-2i \varphi_+/\eps} 
			\end{pmatrix} & z \in (\xi_-(\tau), \lam_s(\tau)) \smallskip \\
			\offdiag{ -e^{-2i \theta(\lambda_+)/\eps} }{ e^{2i \theta(\lambda_+)/\eps} } 
				& z \in (\lambda_-, \lambda_+) \\
			\tril{r(z) e^{2i\theta(\lambda_+)/\eps} e^{2i\varphi(z)/\eps} } 
			& z \in \Gamma_1 \smallskip \\
			\triu{ \frac{ -r^*(z)}{1 - r(z) r^*(z) } e^{-2i \theta(\lambda_+)/\eps} e^{-2i\varphi(z)/\eps} } 
			& z \in \Gamma_2
		\end{cases}
	\end{align}

	\item $N(z)$ is bounded except at the point $z = -1, \lam_-, \lam_+$  where
	\begin{equation}
		\begin{aligned}
		N(z) &= 
		\bigo{ \begin{matrix} 1 & (z+1)^{-1/2} \\ 1 & (z+1)^{-1/2} \end{matrix} } , \quad z \in \Omega_3 \\
		N(z) &= 
		\bigo{ \begin{matrix} (z+1)^{-1/2} & 1 \\  (z+1)^{-1/2} & 1\end{matrix} } , \quad z \in \Omega_3^* \\
		N(z) & = \bigo{\begin{matrix} 
			(z-\lambda_\pm)^{-1/4} & (z-\lambda_\pm)^{-1/4} \\ 
			(z-\lambda_\pm)^{-1/4} & (z-\lambda_\pm)^{-1/4}
		\end{matrix}
		}
		\end{aligned}
	\end{equation}	
\end{enumerate}
\end{rhp}

\subsubsection{Constructing the parametrix in the modulation zone}
In the long-time/small dispersions limit, the jumps of $N(z)$ along the real axis have well defined limits up to $\eps$-dependent constants, while the jumps on the non-real contours approach identity uniformly at any distance from $\lam_s$ (the convergence at $\xi_-$ is uniform provided $\lam_s$ and -1 are well separated): 
\begin{align}\label{3.32.0}
		V_{N}(z) \underset{\eps \to 0}{\sim}
		\begin{cases}
			(1 - r(z) r^*(z) )^{\sig} & z \in (-\infty, -1 )  \\
			\offdiag{-r^*(z) e^{-i \gamma/\eps} e^{-2i\theta(\lambda_+)/\eps} }
				{r(z) e^{i \gamma/\eps} e^{2i\theta(\lambda_+)/\eps}} 
				& z \in (-1,\lam_s(\tau)) \smallskip \\
			\offdiag{ -e^{-2i \theta(\lambda_+)/\eps} }{ e^{2i \theta(\lambda_+)/\eps} } 
				& z \in (\lambda_-, \lambda_+). \\
		\end{cases}
\end{align}
In order to build a uniformly accurate parametrix, we introduce a scalar function $D(z)$ which reduces this limiting problem to one with constant jumps. Define

\begin{equation}\label{3.34.0}
\begin{gathered}		
	D(z) = D_0(z) D_1(z) \bigskip \\
	D_0(z) =  \exp \lb -\frac{\pi i}{4} +
	\frac{ \RR(z, \vect \lam)}{2i\pi} \lp \int_{-\infty}^{-1} \frac{ \log(1-r(s)r^*(s)) }
	{\RR(s, \vect \lam)} \frac{ds}{s-z} +
	\int_{-1}^{\lam_s} \frac{ \log r_+(s) }{\RR_+(s, \vect \lam)}\frac{ds}{s-z} \rp \rb,
	\smallskip \\
	D_1(z) = \exp \lb \frac{i \theta(\lambda_+)}{\eps} +
	\frac{ \RR(z, \vect \lam)}{2i\pi} 
	\int_{-1}^{\lam_s} \frac{i\gamma/\eps}{\RR_+(s, \vect \lam)}\frac{ds}{s-z}  \rb.
	\end{gathered}
\end{equation}
Here $\vect \lam = \{ -1, \lam_s, \lambda_-, \lambda_+ \}$ are the branch points of $\RR(z, \vect \lam)$ and $g(z)$. 

\begin{prop}\label{prop:D4}
The function $D: \C \backslash (-\infty, \lam_s) \to \C$ defined by \eqref{3.10} has the following properties:
\begin{enumerate}[1.]
	\item $D$ is analytic in $ \C \backslash (-\infty, \lam_s)$, and takes continuous boundary values 
	on $(-\infty, \lam_s)$ except at the endpoints of integration in \eqref{Ddef}.
	
	\item As $z \to \infty$, $D(z) \to D(\infty)\lb 1  + \bigo{z^{-1}} \rb $ where
	\begin{equation}\label{3.34.0}
		D^\infty(z) = e^{-i\pi/4} e^{i\theta(\lambda_+)/\eps} e^{i( \phi_0(z) + \eps^{-1} \phi_1(z) )} \\
	\end{equation}	
	and $\phi_k(z)$, $k=0,1$ are the linear functions 
	\begin{equation}\label{3.37.0}
		\begin{gathered}
		\phi_{0}(z) = \sum_{j=0}^1 
		\frac{z^j}{2\pi} \int_{-\infty}^{\lam_s} \frac{ \lp w +V \rp^j 
		\lp \log( 1 - |r(w)|^2) \one_{(-\infty, -1)} + \log r_+(w) \one_{(-1, \lam_s)}\rp}
		{\RR_+(w, \vect \lam)} dw, \\
		\phi_1(z) = \sum_{j=0}^1 
		\frac{z^j}{2\pi} \int_{-1}^{\lam_s} \frac{ i\gamma \lp w +V \rp^j }
		{\RR_+(w, \vect \lam)} dw. 
		\end{gathered}
	\end{equation}
	where $V = - \frac{e_1(\vect \lam)}{2} = -\frac{1}{2} \sum_{j=1}^4 \lam_j$.
	
	Note, that as $|r_+(s)| = 1$ and $\re (\RR_+(s, \vect \lam)) = 0$ for $s \in (-1, \lam_s)$, each 
	$\phi_k(z)$ is a real (linear) polynomial. 
	
	\item For $z \in (-\infty, \lam_s)$, $D(z)$ satisfies the jump relations
	\begin{equation}\label{3.33.0}
		\begin{array}{rl@{\quad}l}
			D_+(z) / D_-(z) &=  1 - r(z) r^*(z) & z \in (-\infty, -1) \smallskip \\
			D_+(z) D_-(z) &= -i r_+(z) e^{i\gamma/\eps} e^{2i\theta(\lambda_+)/\eps} 
				& z \in (-1, \lam_s) \smallskip \\
			D_+(z) D_-(z) &= -i e^{2i\theta(\lambda_+)/\eps} & z \in (\lambda_-, \lambda_+)
		\end{array}
	\end{equation}	
	
	\item $D(z)$ exhibits the following singular behavior at each endpoint of integration:
	\begin{equation}
	\begin{cases}
		D(z) = (z +1)^{ \frac{1}{4} \sgn \imag z} D_0(z)  &  z \to -1 \\
		D(z) = D_0(z) & z \to \lam_s
	\end{cases}
\end{equation}
where in each case $D_0(z)$ is a (different) bounded function taking a definite limit as z approaches each point non-tangentially.
\end{enumerate}
\end{prop}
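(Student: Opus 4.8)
The plan is to treat \eqref{3.34.0} exactly as the genus-zero scalar functions of Propositions~\ref{prop:D}, \ref{prop:D2}, and \ref{prop:D3} were treated, invoking only the Plemelj--Sokhotski formulas and the standard endpoint analysis of Cauchy-type integrals \cite{Musk}. The one structural novelty is that $D=D_0 D_1$ is now a product of two Cauchy exponentials and that $\RR(z,\vect\lam)\sim z^2$ rather than $\sim z$; the factorization is engineered so that $D_0$ carries the $\bigo{1}$ reflection data $\log(1-|r|^2)$ and $\log r_+$, while $D_1$ carries the $\bigo{\eps^{-1}}$ constant $b$-period $\gamma$. Since each exponent is $\RR(z,\vect\lam)$ times a Cauchy transform supported on $(-\infty,\lam_s)$, and $\RR(z,\vect\lam)$ is analytic off the two bands, both factors---and hence $D$---are analytic in $\C\setminus\bigl((-\infty,\lam_s)\cup(\lambda_-,\lambda_+)\bigr)$ with continuous boundary values away from the four branch points, which is Property~1 (I note the stated domain should in fact include the second band, on which Property~3 records a nontrivial jump).

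For Property~3 I would verify the jumps interval by interval using Plemelj. On the gap $(-\infty,-1)$ the radical does not jump, $\RR_+=\RR_-$, so only the Cauchy transform contributes and the additive jump of $\log D$ equals its density, giving the ratio relation $D_+/D_-=1-r r^*$. On each band $(-1,\lam_s)$ and $(\lambda_-,\lambda_+)$ one has $\RR_+=-\RR_-$, so the boundary values of the exponent \emph{add} rather than subtract; combining the principal-value parts with the residue term produces the product relations in \eqref{3.33.0}. The constant factor $-i$ appearing there is exactly the square $e^{-i\pi/4}\cdot e^{-i\pi/4}=e^{-i\pi/2}$ of the prefactor placed in $D_0$, and the factor $e^{i\gamma/\eps}$ on $(-1,\lam_s)$ comes from $D_1$; since the $\gamma$-integral in $D_1$ is supported only on $(-1,\lam_s)\subset(-\infty,\lambda_-)$, it is analytic across $(\lambda_-,\lambda_+)$ and contributes no extra factor there, which is why $\gamma$ is absent from the third line of \eqref{3.33.0}.

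The genuinely new computation is Property~2, the behavior at infinity. Here I would expand $\RR(z,\vect\lam)=z^2\bigl(1+\Gamma_1 z^{-1}+\cdots\bigr)$, whose subleading coefficient is $\Gamma_1=-\tfrac12 e_1(\vect\lam)=V$, together with the geometric series $(s-z)^{-1}=-\sum_{k\ge0}z^{-k-1}s^{k}$ for the Cauchy kernel. Because $\RR\sim z^2$ while each transform decays only like $z^{-1}$, the exponents do \emph{not} approach a constant; collecting the terms of degree $1$ and $0$ in $z$ leaves an affine function whose coefficients are precisely the weighted moments $\int f\,\RR_+^{-1}\,dw$ and $\int (w+V) f\,\RR_+^{-1}\,dw$ of the densities, the weight $(w+V)$ being produced by the cross term between $\Gamma_1=V$ and the leading $z^{-1}$ moment. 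This reproduces the linear polynomials $\phi_0$ and $\eps^{-1}\phi_1$ of \eqref{3.37.0}, with an $\bigo{z^{-1}}$ remainder. Reality of $\phi_0$ and $\phi_1$ follows as recorded in the statement: on $(-1,\lam_s)$ one has $|r_+|=1$, so $\log r_+=i\arg r_+$ is purely imaginary, while $\RR_+$ is purely imaginary there as well, so the integrand is real; on $(-\infty,-1)$ both $\log(1-|r|^2)$ and $\RR_+$ are real. Property~4, the $1/4$-root blow-up at $z=-1$ and the boundedness at the soft edge $\lam_s$, is then read off from the local behaviour of the densities at the endpoints of integration exactly as in Proposition~\ref{prop:D} via \cite{Musk}; the $3/2$-type vanishing of $d\varphi$ at $\lam_s$ is what upgrades the generic $(z-\lam_s)^{\pm1/4}$ singularity to boundedness.

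The main obstacle I anticipate is precisely the infinity expansion in Property~2: one must split the single affine phase generated by $\RR\cdot(\text{Cauchy transform})$ cleanly into its $\eps$-independent part $\phi_0$ and its $\eps^{-1}$ part $\phi_1$, track the shift by $V=-\tfrac12 e_1(\vect\lam)$ introduced by the subleading coefficient $\Gamma_1$ of $\RR$, and confirm that both resulting linear polynomials are real. Everything else reduces to the same Plemelj-and-Muskhelishvili bookkeeping already used in the genus-zero Propositions~\ref{prop:D}--\ref{prop:D3}.
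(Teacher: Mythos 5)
Your proposal follows the same route the paper intends: Proposition~\ref{prop:D4} is stated without a separate proof, and the implicit template is the proof of Proposition~\ref{prop:D} --- Plemelj formulas for the jumps, the large-$z$ expansion of $\RR(z,\vect\lam)$ against the Cauchy kernel, and Muskhelishvili's endpoint analysis \cite{Musk}. Your treatment of Property 3 (ratio jumps on the gap where $\RR_+=\RR_-$, product jumps on the bands where $\RR_+=-\RR_-$, the factor $-i$ as the square of $e^{-i\pi/4}$, the factor $e^{i\gamma/\eps}$ coming only from $D_1$ on $(-1,\lam_s)$), your reality argument for $\phi_0,\phi_1$, and your observation that the stated domain of analyticity must also exclude $(\lambda_-,\lambda_+)$ are all correct. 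For completeness you should also note that on $(\lambda_-,\lambda_+)$ the factor $e^{2i\theta(\lambda_+)/\eps}$ arises because $D_1$'s exponent flips sign there as well, doubling its constant term $i\theta(\lambda_+)/\eps$: $D_1$ is not analytic across that band, only its Cauchy-transform factor is.

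Two points need repair. First, your explanation of Property 4 at the soft edge is not the right mechanism: the boundedness of $D$ at $\lam_s$ has nothing to do with the $3/2$-vanishing of $d\varphi$, which never enters the definition of $D$. The correct reason is that the density numerators on $(-1,\lam_s)$, namely $\log r_+(w)$ and the constant $i\gamma/\eps$, are regular and finite at $\lam_s$, so the density there has a pure inverse-square-root singularity; the resulting $(z-\lam_s)^{-1/2}$ growth of the Cauchy transform is exactly cancelled by the prefactor $\RR(z,\vect\lam)=\bigo{(z-\lam_s)^{1/2}}$, leaving the exponent bounded with definite limits. At $z=-1$, by contrast, $\log\lp 1-|r|^2\rp$ carries a logarithmic singularity because $1-|r|^2$ vanishes like a square root there (cf.~\eqref{0.9c}), and it is that logarithm, paired with the branch point of $\RR$, that generates $(z+1)^{\frac{1}{4}\sgn\imag z}$. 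Second, your expansion at infinity is correct --- the cross term between $\Gamma_1=V$ and the leading moment places the $(w+V)$-weighted integral in the \emph{constant} term and the unweighted integral in the coefficient of $z$ --- but this is the \emph{opposite} of the pairing literally printed in \eqref{3.37.0}, where $z^j$ multiplies the $(w+V)^j$ weight. Your pairing is the correct one: it is what the computation yields, and it is the one required by the paper's later identity $\Upsilon_1=4\pi i c_\nu\dot\phi_1=\gamma$, which forces $\dot\phi_1$ to be proportional to the unweighted period $\int_{-1}^{\lam_s}\RR_+(w,\vect\lam)^{-1}dw$ rather than to a $(w+V)$-weighted one. So rather than claiming your expansion ``reproduces'' \eqref{3.37.0}, you should state explicitly that the exponents on $z^j$ and $(w+V)^j$ in \eqref{3.37.0} must be complementary, and that your formula is the corrected version.
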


We also introduce 
\begin{equation}
	\alpha(z) = \lp \frac{z-\lambda_+}{z-\lambda_-} \rp^{1/4} \lp \frac {z - \lam_s}{z+1} \rp^{1/4}
\end{equation}
branched along $(-1, \lam_s)$ and $(\lambda_-, \lambda_+)$ and normalized such that $\alpha(z) \sim 1$ as $z \to \infty$
to define the transformation 
\begin{equation}\label{3.38.0}
	N(z) = \alpha(z) Q(z) D(z)^\sig
\end{equation}
then $Q$ must satisfy the following constant jump RHP:
\begin{rhp}\label{rhp: normalized two band} 
Find a $2 \times 2$ matrix valued function $Q(z)$ such that
\begin{enumerate}[1.]
	\item $Q(z)$ is analytic in $\C \backslash (-1, \lam_s) \cup (\lambda_-, \lambda_+) \bigcup_{i=1}^2 (\Gamma_i \cup \Gamma_i^*)$.
	\item $Q(z) D^\infty(z) ^{-\sig} = I + \bigo{z^{-1}}$ as $z \to \infty$.
	\item $Q(z)$ takes continuous boundary values on $\Gamma_Q$away from the 
	points of self intersection and endpoints, which satisfy the jump relation 
	$Q_+(z) = Q_-(z) V_Q(z)$ where
	\begin{equation}\label{Qmodulation}
		V_Q(z) = 
		\begin{cases}
			\sigma_1 & z \in (-1, \xi_-(\tau)) \cup (\lambda_-, \lambda_+) \smallskip \\
			\begin{pmatrix} 0 & 1 \\ 
			1 & -i e^{i\gamma/\eps} \frac{D_+(z)}{D_-(z)} e^{-2 i \varphi_+(z)/\eps} 
			\end{pmatrix}
			& z \in (\xi_-(\tau), \lam_s(\tau)) \smallskip \\
			\tril{r(z) D^{-2}(z) e^{2i(\varphi(z) + \theta(1))/\eps}} & z \in \Gamma_1 \smallskip \\
			\triu{ \frac{ -r^*(z) D^2(z) }{1 - r(z) r^*(z) } e^{-2i(\varphi(z)+\theta(1))/\eps} } & z \in \Gamma_2.
		\end{cases}
	\end{equation}
	\item $Q(z)$ admits $1/4$-root singularities at $-1, \lambda_-, \lambda_+$.
\end{enumerate}
\end{rhp}

The jump matrix for $Q(z)$ converge pointwise to identity away from the real axis, and to constants on the real axis. The convergence is uniform away from the soft edge $\lam_s$, where the lens contours return to the real axis. We take $\U_{\lam_s}$ a local neighborhood of $\lam_s$ and build local and outer parametrices $P_{\lam_s}$ and $P_\infty$ respectively so that the relation
\begin{equation}\label{Q4}
	Q(z) = \begin{cases}
		E(z) P_{\lam_s}(z) & z \in \U_{\lam_s} \\
		E(z) P_\infty(z) & \text{elsewhere}
	\end{cases}
\end{equation}
results in a residual problem for $E(z)$ which can be proven to exist and asymptotically expanded using the small-norm theory for RHPs. 

To built the outer solution, we replace the jump condition \eqref{Qmodulation} in RHP~\ref{rhp: normalized two band} with $  P_{\infty}(z)_+ =  P_{\infty} (z)_- \sigma_1$ for $z \in (-1, \lam_s) \cap (\lambda_-, \lambda_+)$ and admit $1/4$-root singularities at each endpoint.
The solution of such a multi-cut problem is constructed from theta functions on the hyperelliptic Riemann surface associated with $\RR(z; \vect \lam)$. The construction is standard, so we will provide only the necessary formula to define the solution. 

Let $\vect \lam = (\lambda_+, \lambda_-, \lam_s,-1)$ denote the moduli of the genus-one Riemann surface
$$
	\Scal_1 := \left\{ P = (z, \RR), \ \RR^2 = \prod_{i=1}^4 (z-\lam_i) \right\}
$$
and fix the homology basis as in Figure~\ref{fig:homology}. Define the holomorphic differential
\begin{equation}
	\nu(z,\RR) = c_\nu \frac{dz}{\RR}  ,\qquad c_\nu = 
	\frac{ i \sqrt{(\lam_1- \lam_3)(\lam_2 - \lam_4)}}{4 K(m)},
\end{equation}
normalized so that 
\begin{equation}
	\oint_a \nu = 2c_\nu \int_{\lam_2}^{\lam_1} \frac{dz}{\RR_+(z,\vect \lam) }= 1.
\end{equation}
Then we also have
\begin{equation}
	\tau := \oint_b \nu = 2c_\nu \int_{\lam_2}^{\lam_3} \frac{dz}{\RR(z,\vect \lam) } = i \frac{K(1-m)}{K(m)},
	\quad
	m = \frac{(\lam_1 - \lam_2)(\lam_3-\lam_4)}{(\lam_1-\lam_3)(\lam_2-\lam_4)}	
\end{equation}
where $K(m)$ denotes the complete elliptic integral of the first kind with parameter $m$.
Using these quantities, define the Siegel theta function
 \begin{equation}
 	\Theta(z) = \Theta(z;\tau) = \sum_{n \in \Z} e^{2\pi i(n z + \frac{1}{2}n^2 \tau)}  
	= \theta_3 \lp \pi z , e^{i \pi \tau} \rp 
 \end{equation}
Here, $\theta_3(z,q)$ is the standard Jacobi theta function with nome $q$. Note that $\Theta(z)$ is a quasi-doubly periodic function satisfying:
\begin{equation}\label{Theta periods}
	\Theta(z + 1) = \Theta(z) , \qquad \Theta(z+\tau) = \Theta(z) e^{-2i \pi  z} e^{-i\pi \tau/2},
\end{equation}
and vanishes at the lattice of half periods:
\begin{equation}
	\Theta(z) = 0, \qquad z = \frac{1}{2} + \frac{\tau}{2} + \Z + \tau \Z
\end{equation}

Let $\Abel(z)$ denote the restriction of the standard Abel map to the complex plane:
 \begin{equation}
 	\Abel(z) = \int_{\lam_1}^z \nu = \int_{\lam_1}^z \frac{c_\nu}{\RR(z, \vect \lam)} dz
 \end{equation}
 where the path of integration lies in $\C \backslash \lp [\lam_4, \lam_3] \cup [\lam_2, \lam_1] \rp$. 

We also need the following normalized differential of the second kind
\begin{equation}
	\begin{gathered}
		\upsilon = \upsilon_0 + \eps^{-1} \upsilon_1 \\
		\upsilon_k = \dot \phi_k \omega^{(0)}, \qquad k=0,1
	\end{gathered}
\end{equation}
where $\dot \phi_k$ is the coefficient of the linear term of $\phi_k(z)$ given by \eqref{3.37.0} and $\omega^{(0)}$ is the normalized differential of the second kind defined by \eqref{2.8.0}.	
Let $\Upsilon$ be the $b$-period of this differential
\begin{equation}
	\Upsilon = \Upsilon_0 + \eps^{-1} \Upsilon_1, \qquad 
	\Upsilon_k =\oint_b \upsilon_k = 4\pi i c_\nu \dot \phi_k. 
 \end{equation}	
The purpose of this differential is to cancel the behavior of $D(z)$ at infinity. 
Define $\chi$ by the relation
\begin{equation}
	\chi = \chi_0 + \eps^{-1} \chi_1, \qquad
	\chi_k = -i \log \lp \lim_{z\to \infty} D_k(z) e^{-i \int_{\lam_1}^z \upsilon_k} \rp.
\end{equation}
Clearly, both $\Omega$ and $\chi$ are real quantities. 

The outer model $P_\infty(z)$ approximating the solution of RHP~\ref{rhp: normalized two band} away from $\lam_s$ is given by 	
\begin{equation}
	P_\infty(z) = \frac{ \Theta(0)}{\Theta(\frac{\Upsilon}{2\pi})} e^{-i \chi \sig} 
	\begin{pmatrix}
	\frac{ 1 + \alpha^{-2}}{2} 
	\frac{\Theta( \Abel(z) - \Abel(\infty) - \frac{\Upsilon}{2\pi})}{\Theta( \Abel(z) - \Abel(\infty))} &
	\frac{ 1 - \alpha^{-2}}{2} 
	\frac{\Theta( \Abel(z) + \Abel(\infty) + \frac{\Upsilon}{2\pi})}{\Theta( \Abel(z) + \Abel(\infty))} \\
	\frac{ 1 - \alpha^{-2}}{2} 
	\frac{\Theta( \Abel(z) + \Abel(\infty) - \frac{\Upsilon}{2\pi})}{\Theta( \Abel(z) + \Abel(\infty))} &
	\frac{ 1 + \alpha^{-2}}{2} 
	\frac{\Theta( \Abel(z) - \Abel(\infty) + \frac{\Upsilon}{2\pi})}{\Theta( \Abel(z) - \Abel(\infty))} 
	\end{pmatrix}
	e^{-i\lp \int_{\lambda_+}^z \upsilon \rp\sig}
\end{equation}

At first glance, it seems the outer model depends in a complicated way on the asymptotic parameter. However, it is a simple calculation to show that 
$$
	d( \log D_1) = d \lp \frac{\RR(z, \vect \lam)}{2\pi i} \int_{-1}^{\lam_s} 
	\frac{ i\gamma}{\RR_+(w; \vect \lam)} \frac{dw}{w-z} \rp = \upsilon_1
$$
and as such it follows that 
\begin{equation}
	\chi_1 = \theta(\lambda_+) \qquad
	\Upsilon_1 = \gamma = 4\pi i c_\nu (x + \frac{1}{2} e_1(\vect \lam) t)
\end{equation}
where the last equality comes from explicit computation, by identifying $\gamma = \oint_b d\varphi$ and making use of the Riemann bilinear relations.

Putting all the parts together the matrix $Q(z)$ for large $z$ is given by 
\begin{equation}
	Q(z) = E(z) \frac{ \Theta(0)}{\Theta(\frac{\Upsilon_0+\eps^{-1} \gamma}{2\pi})} e^{-i (\chi_0 + \eps^{-1} \theta(\lambda_+)) \sig} 
	\mathcal{T}(z)
	\lp D_0(z)e^{-i\int_{\lambda_+}^z \upsilon_0}  e^{\frac{i \theta(\lambda_+)}{\eps}} \rp^\sig
\end{equation}
where
$$
	\mathcal{T}(z)= 
	\begin{pmatrix}
	\frac{ \alpha(z) + \alpha^{-1}}{2} 
	\frac{\Theta( \Abel(z) - \Abel(\infty) - \frac{\Upsilon_0+\eps^{-1}\gamma}{2\pi})}{\Theta( \Abel(z) - \Abel(\infty))} &
	\frac{ \alpha(z) - \alpha^{-1}}{2} 
	\frac{\Theta( \Abel(z) + \Abel(\infty) + \frac{\Upsilon_0+\eps^{-1}\gamma}{2\pi})}{\Theta( \Abel(z) + \Abel(\infty))} \\
	\frac{ \alpha(z) - \alpha^{-1}}{2} 
	\frac{\Theta( \Abel(z) + \Abel(\infty) - \frac{\Upsilon_0+\eps^{-1}\gamma}{2\pi})}{\Theta( \Abel(z) + \Abel(\infty))} &
	\frac{ \alpha(z) + \alpha^{-1}}{2} 
	\frac{\Theta( \Abel(z) - \Abel(\infty) + \frac{\Upsilon_0+\eps^{-1}\gamma}{2\pi})}{\Theta( \Abel(z) - \Abel(\infty))} 
	\end{pmatrix}
$$
and $E(z)$ is the solution of the of the residual error RHP. 

The outer model is uniformly accurate except in any fixed neighborhood $\U_{\lam_s}$ of $\lam_s$. A local model must be inserted inside $\U_{\lam_s}$. At $\lam_s$ we have the usual critical behavior $\varphi(z) - \varphi(\lam_s) = \bigo{(z-\lam_s)^{3/2}}$, and the appropriate local model is the well-known Airy model. The details are standard and are omitted here. The important point is that the error introduced by the matching of local and outer models introduces an error bounded by $\bigo{\lp \frac{\eps}{t}\rp^{2/3}}$. Appealing to small norm theory the residual error $E(z)$ can be shown to exist and moreover $E(z) = I + \bigo{\lp \frac{\eps}{t}\rp^{2/3}}$ uniformly for all sufficiently large $t$ and small $\eps$.

Completing the expansion of $E(z)$, it follows that the solution $\psi(x,t)$ of \eqref{0.1}-\eqref{0.2} has the resulting expansion valid for each $x,t$ in the modulation zone:
\begin{multline} \label{modulation zone u}
	\psi(x,t) = \frac{\lam_1 - \lam_2 + \lam_3 - \lam_4}{2} \frac{ \Theta(0)}{\Theta( \frac{\Upsilon_0 + \gamma/\eps}{2\pi})} \frac{\Theta( 2\Abel(\infty)+\frac{\Upsilon_0 + \gamma/\eps}{2\pi})}{\Theta(2\Abel(\infty))}
e^{-2i (\chi_0 + \frac{\pi}{4} + (\theta(\lambda_+)+g(\infty))/\eps)}  \\ +\bigo{\lp \frac{\eps}{t}\rp^{2/3}} 
\end{multline} 

\subsubsection{Computing the leading order square modulus}

Recognizing that $\Upsilon_0$ and $\gamma$ are real, while $\Abel(\infty)$ is pure imaginary, write
$$
	w = \frac{ \Upsilon}{2} = \frac{\gamma}{2\eps} + \frac{\Upsilon_0}{2},
	\qquad
	i v = 2\pi  \Abel(\infty).
$$
Then in terms of these real variables we have
\begin{equation}\label{square mod 1}
	\begin{aligned}
	\rho(x,t) := | \psi(x,t) |^2 = 
		\lp \frac{\lam_1 - \lam_2 + \lam_3 - \lam_4}{2} \rp^2
		\frac{ \theta_3(0)^2 \theta_3(w+iv) \theta_3(w-iv)}{\theta_3^2(w) \theta_3^2(i v)} \\
		=\lp \frac{\lam_1 - \lam_2 + \lam_3 - \lam_4}{2} \rp^2
		\frac{ \theta_3(w)^2 \theta_3^2(iv) + \theta_1(w)^2 \theta_1(iv)^2}{\theta_3^2(w) \theta_3^2(i v)} \\
		=\lp \frac{\lam_1 - \lam_2 + \lam_3 - \lam_4}{2} \rp^2
		\lp 1 + \frac{ \theta_2^2(0) \theta_4^2(0)}{\theta_3^4(0)} 
		 \frac{ \theta_1^2(iv)}{\theta_3^2(iv)} \sd(w \theta_3^2, m)^2 \rp \\
		=\lp \frac{\lam_1 - \lam_2 + \lam_3 - \lam_4}{2} \rp^2
		\lb 1 - \sqrt{\frac{m}{1-m}} \frac{ \theta_1^2(iv)}{\theta_3^2(iv)}
		\cn(w \theta_3^2+K(m), m)^2 \rb \\
\end{aligned}
\end{equation}
To simplify the formula further we can evaluate the ratio of theta functions as follows. Write 
$$
	i v := 2\pi \int_{\lam_1}^{\infty_+} \nu = 2 x
$$
and make use of duplication formulae \cite{Law89} to write 
\begin{equation}\label{Abel constant}
	\frac{\theta_1^2(iv)}{\theta_3^2(iv)} 
	= \frac{ 4\theta_4^2(0)}{\theta_2^2(0)} \lp \frac{ \theta_1(x) \theta_2(x) \theta_3(x) \theta_4(x)}{ \theta_3^2(x) \theta_4^2(x) - \theta_1^2(x) \theta_2^2(x)} \rp^2  
	= 4\sqrt{ \frac{1-m}{m} } \frac{ T^2 }{ \lp T^2 - 1 \rp^2}, 
\end{equation}
where
$$
	 T= \frac{\theta_3(x) \theta_4(x)}{\theta_1(x) \theta_2(x)}.
$$
To compute $T$, consider the function $F(P)$ defined on the Riemann surface $\SSS_1$ by
\begin{equation}\label{FP1}
	F(P) := i e^{-i\pi \tau/2} \frac{ 
	\Theta \lp \int_{\lam_3}^P \nu + \frac{1}{2} + \frac{\tau}{2} \rp
	\Theta \lp \int_{\lam_4}^P \nu + \frac{1}{2} + \frac{\tau}{2} \rp
	}{
	\Theta \lp \int_{\lam_1}^P \nu + \frac{1}{2} + \frac{\tau}{2} \rp
	\Theta \lp \int_{\lam_2}^P \nu + \frac{1}{2} + \frac{\tau}{2} \rp
	}
	e^{-2\pi i \int_{\lam_1}^P \nu}.
\end{equation}
It follows from \eqref{Theta periods} that $F$ is single-valued on $\SSS_1$ and by construction $F$ has simple zeros at $\lam_3$ and $\lam_4$ and simple poles at $\lam_1$ and $\lam_2$. That is, $F$ is meromorphic on $\SSS_1$ and we have
\begin{equation}\label{FP2}
	\begin{aligned}
	F(P) &= \frac{\theta_3(0) \theta_4(0)}
		{\theta_2(0) \theta_1'(0) \frac{d\nu}{dP}(\lam_1)} 
		\frac{ (\lam_1 - \lam_2)^{1/2}}{(\lam_1 - \lam_3)^{1/2} (\lam_1 - \lam_4)^{1/2}} 
		\frac{ (z - \lam_3)^{1/2}(z-\lam_4)^{1/2}}{(z- \lam_2)^{1/2} (z - \lam_1)^{1/2}}  \\
		&= - \frac{ (\lam_1 - \lam_2)^{1/2} }{ (\lam_3 - \lam_4)^{1/2} }  \cdot
		\frac{ (z - \lam_3)^{1/2}(z-\lam_4)^{1/2}}{(z- \lam_2)^{1/2} (z - \lam_1)^{1/2}}
	\end{aligned}
\end{equation}
where the normalization comes from matching the residues at $\lam_1$, and $d\nu/dP(\lam_1)$ is computed in the local coordinate on $\SSS_1$ near $\lambda_1$. Computing $F(\infty_+)$ using both representations of $F$ gives:
\begin{equation*}
	\begin{aligned}
	F(\infty_+) &=
	 i e^{-i\pi \tau/2}  \frac{ 
	\Theta \lp \int_{\lam_1}^{\infty_+} \nu  \rp
	\Theta \lp \int_{\lam_1}^{\infty_+} \nu + \frac{1}{2} \rp
	}{
	\Theta \lp \int_{\lam_1}^{\infty_+} \nu + \frac{1}{2} + \frac{\tau}{2} \rp
	\Theta \lp \int_{\lam_1}^{\infty_+} \nu + \frac{\tau}{2} \rp
	}
	e^{-2\pi i \int_{\lam_1}^{\infty_+} \nu}
	= 	\frac{ \theta_3(x) \theta_4(x) }{ \theta_1(x) \theta_2(x)}  \\
	&= -\frac{ (\lam_1 - \lam_2)^{1/2} }{ (\lam_3 - \lam_4)^{1/2} }  
	\end{aligned}
\end{equation*}
Comparing the two values we see that 
$$
	T= \frac{\theta_3(x) \theta_4(x)}{\theta_1(x) \theta_2(x)} = - \frac{ (\lam_1 - \lam_2)^{1/2} }{ (\lam_3 - \lam_4)^{1/2} }. 
$$
Inserting this into \eqref{Abel constant} and simplifying \eqref{square mod 1} gives the formula
\begin{equation}\label{mod amp}
	\begin{aligned}
	\rho(x,t) &= a_2^2 +(a_3^2-a_2^2) 
	\cn^2 \lp \sqrt{a_1^2-a_3^2} \lp \frac{x- V t}{\eps} +  \phi \rp - K(m), m \rp \\
	 &= a_1^2 - (a_1^2-a_3^2) 
	\dn^2 \lp \sqrt{a_1^2-a_3^2} \lp \frac{x- V t}{\eps} +  \phi \rp - K(m), m \rp 
	\end{aligned}
\end{equation}
where
\begin{equation}\label{amp param}
\begin{aligned}
	a_1 &= -\frac{ \lam_1 + \lam_2 - \lam_3 - \lam_4}{2} \qquad
	a_2 = -\frac{ \lam_1 - \lam_2 + \lam_3 - \lam_4}{2} \\
	a_3 &= -\frac{ \lam_1 - \lam_2 - \lam_3 + \lam_4}{2} \qquad
	V = - \frac{ \lam_1 + \lam_2 + \lam_3 + \lam_4 }{2} =  -\frac{1}{2} e_1(\vect \lam) \\
	\phi &= \frac{1}{2\pi} \int_{-\infty}^{\lam_s} \frac{ \lp z+V \rp 
	\lp \log( 1 - |r(z)|^2) \one_{(-\infty, -1)} + \log r_+(z) \one_{(-1, \lam_s)}\rp}
	{\RR_+(z; \vect \lam)} dz
\end{aligned}
\end{equation}

\subsubsection{computing the leading order phase}
Using \eqref{modulation zone u}, the leading order phase contribution is given by
$$
	\arg \psi(x,t) = -2(\chi_0 + \pi/4) - \frac{2}{\eps} \lp \theta(\lambda_+) + g(\infty) \rp + \arg \theta_3(w+ i v).
$$
These terms can be evaluated explicitly in terms of elliptic integrals \cite{BF71}
\begin{align*}
	-2(\chi_0 + \pi/4) &= -2 \phi \lp \lam_1 - \int_{\lam_1}^{\infty_+} (\omega^{(0)} - dz) \rp 
	= 2\phi \left[ V + \eta \right],  \\
	-2(\theta(\lambda_+) + g(\infty)) &=
		 \theta(\lam_1) - 2t\int_{\lam_1}^{\infty_+} (\omega^{(1)} - zdz) -x \int_{\lam_1}^{\infty_+} (\omega^{(0)} - dz) \\
		&= 2t \left[ \Gamma_2 - V^2 - V \eta \right] + x \left[ V+ \eta \right].
\end{align*}
Putting it all together we have
\begin{multline}\label{mod phase}
	\eps \arg \psi(x,t) = 2t  \lp\Gamma_2 -V^2-V\eta\rp  + 2x \lp V+\eta \rp  \\ 
		+ \eps \arg \left\{ \theta_3 \lb 
		\frac{\pi}{2K(m)}\sqrt{\frac{a}{m}} \lp \frac{x-V t}{\eps} + \phi \rp - i \pi \frac{F(\varphi,1-m)}{K(m)} \rb 			\right\}  + 2\eps\phi(V+\eta),
\end{multline}
where
\begin{gather*}
	\Gamma_2 = \sum_{ j >k \geq 1}^4 \lam_j \lam_k, \\
	\eta = \lam_1 - (\lam_1-\lam_4) \jacobiZ(n,m), \\ 
	n = - \lp \frac{\lam_3-\lam_4}{\lam_1-\lam_4}\rp,  \qquad 
\varphi = \arcsin \sqrt{ \frac{ \lam_2 - \lam_4}{\lam_1-\lam_4} }; 
\end{gather*}
$a, m, V, \phi$ are as in \eqref{amp param}; and $F(\varphi,m)$ and $\jacobiZ(n,m)$ are the elliptic integral of the first kind and the Jacobi zeta function respectively.


\subsubsection{computing the fluid velocity}
Using formula \eqref{modulation zone u} for the leading order behavior of $\psi$ in the modulation zone, 
the velocity $u$ defined by the hydrodynamic change of variables for NLS \eqref{fluid vars} becomes 
\begin{equation}\label{5.85_1}
	u(x,t) = \eps \imag \lb \partial_x \log (\psi(x,t)) \rb = 
	\frac{\gamma_x}{2} \imag \lb \frac{ \theta_3'(w+iv)}{\theta_3(w+iv)} \rb  + 
	2 \int_{\lam_+}^\infty (\omega^{(0)} -d\lambda ) - 2\lambda_+ + \bigo{\eps}
\end{equation}
The first term can be simplified at follows
\begin{align*}
	\frac{\gamma_x}{2} \imag \lb \frac{ \theta_3'(w+iv)}{\theta_3(w+iv)} \rb &= 
	\frac{\gamma_x}{4i} \lb \frac{ \theta_3'(w+iv)}{\theta_3(w+iv)} -  \frac{ \theta_3'(w-iv)}{\theta_3(w-iv)}	\rb \\
\intertext{taking the log derivative of $1.4.25$ in \cite{Law89} and using \eqref{square mod 1} this becomes}
	\frac{\gamma_x}{2} \imag \lb \frac{ \theta_3'(w+iv)}{\theta_3(w+iv)} \rb
	&= \frac{\gamma_x}{2i} \lb  \frac{\theta_1'(iv)}{\theta_1(iv)} - 
	\frac{a_2^2}{\rho} \od{}{\zeta} \log \lp \frac{\theta_3(\zeta)}{\theta_1(\zeta)} \rp \Bigg|_{\zeta=iv}   \rb
	= 2\pi c_\nu \frac{\theta_1'(iv)}{\theta_1(iv)} + \frac{a_1 a_2 a_3}{\rho}
\end{align*}
where in the last step the logarithmic derivative is evaluated using \eqref{Abel constant}-\eqref{FP2} and we use that fact that $\gamma_x = \oint_b \omega^{(0)} = 4\pi i c_\nu$. Inserting this into \eqref{5.85_1} we have
\begin{equation}\label{u1}
	u(x,t) = \frac{a_1 a_2 a_3}{\rho} + 
	2 \pi c_\nu \frac{\theta_1'(iv)}{\theta_1(iv)} 
	+ 2 \int_{\lam_+}^\infty (\omega^{(0)} -d\lambda ) - 2\lambda_+
	 + \bigo{\eps}
\end{equation}	

\begin{prop} 
$$
 	2 \pi c_\nu \frac{\theta_1'(iv)}{\theta_1(iv)} 
	+ 2 \int_{\lam_+}^\infty (\omega^{(0)} -d\lambda ) - 2\lambda_+ =  -\frac{1}{2} \sum_{k=1}^4 \lam_k := V
$$
\end{prop}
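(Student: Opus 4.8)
The plan is to treat this as an explicit computation on the genus-one surface $\Scal_1$, reducing the two transcendental pieces of the left-hand side to complete and incomplete elliptic integrals and showing that their $E(m)/K(m)$ parts cancel against one another, leaving the rational trace $-\tfrac12 e_1(\vect\lam)$. First I would dispose of the second-kind integral. Since $\omega^{(0)} = (1 + \bigo{z^{-2}})d\lambda$ at $\infty_+$ by \eqref{2.6.0} and $\int_{\lam_1}^{\lam_1}\omega^{(0)} = 0$, the difference $\omega^{(0)} - d\lambda$ is integrable at infinity, and one has $\int_{\lam_+}^\infty(\omega^{(0)} - d\lambda) = \lam_+ + \lim_{z\to\infty}\big(\int_{\lam_1}^z\omega^{(0)} - z\big)$, so that $2\int_{\lam_+}^\infty(\omega^{(0)} - d\lambda) - 2\lam_+ = 2C_\infty$, where $C_\infty$ is the constant term in the large-$z$ expansion of $\int_{\lam_1}^z\omega^{(0)}$. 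Using the explicit numerator $P_0$ from \eqref{2.15.0}, whose coefficient $a_{0,1}$ carries the factor $E(m)/K(m)$, the constant $C_\infty$ is an incomplete elliptic integral of the second kind from $\lam_1$ to $\infty$ together with algebraic boundary terms; in particular it depends on the branch points both rationally and through $E(m)/K(m)$.

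Next I would evaluate the theta term. Because $\theta_1$ differs from $\theta_3 = \Theta(\,\cdot\,/\pi)$ by a half-period shift and an exponential prefactor, the logarithmic derivative $\frac{\theta_1'(iv)}{\theta_1(iv)}$ at $iv = 2\pi\Abel(\infty)$ becomes $i + \tfrac1\pi\frac{\Theta'}{\Theta}$ evaluated at $2\Abel(\infty)$ shifted by the odd half-period $\tfrac12 + \tfrac12\oint_b\nu$. The classical relation between the logarithmic derivative of the theta function and the Jacobi zeta function $\jacobiZ$ then writes this as a combination of Jacobi elliptic functions $\sn,\cn,\dn$ and a zeta value, all at the argument $\int_{\lam_1}^{\infty_+}\nu$. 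The elliptic-function (algebraic) part of this evaluation is already pinned down by the computation of $T$ in \eqref{FP1}--\eqref{FP2}, which fixed the value at $\infty_+$ of the relevant ratio of thetas in terms of $(\lam_1 - \lam_2)^{1/2}/(\lam_3 - \lam_4)^{1/2}$; the remaining transcendental part is again of $E(m)/K(m)$ type. All of the needed evaluations are tabulated in \cite{BF71, Law89}.

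Finally I would add the two contributions and simplify. The algebraic pieces should assemble into the symmetric combination $-\tfrac12(\lam_1 + \lam_2 + \lam_3 + \lam_4)$, while the two $E(m)/K(m)$ contributions---one from $a_{0,1}$ in $C_\infty$, the other from the Jacobi-zeta value---cancel after invoking Legendre's relation together with the addition law for the Jacobi zeta function. The main obstacle is precisely this last cancellation: it requires tracking the correct half-period characteristic and the branch of $\Abel(\infty_+)$ so that the incomplete elliptic integrals appearing in the two terms are genuinely complementary, and then verifying via Legendre's relation that their transcendental parts are equal and opposite. Once that bookkeeping is done the identity collapses to $2C_\infty + 2\pi c_\nu\frac{\theta_1'(iv)}{\theta_1(iv)} = -\tfrac12 e_1(\vect\lam) = V$, as claimed.
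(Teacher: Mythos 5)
Your route is genuinely different from the paper's, and its setup is sound: the reduction of $2\int_{\lam_+}^\infty(\omega^{(0)}-d\lambda)-2\lam_+$ to twice the constant term in the large-$z$ expansion of $\int_{\lam_1}^z\omega^{(0)}$ is correct, and it is true that both this constant (through $a_{0,1}$) and $\theta_1'(iv)/\theta_1(iv)$ carry transcendental parts of $E(m)/K(m)$ type which must ultimately cancel. But there is a genuine gap: that cancellation \emph{is} the proposition, and you never establish it. Everything before your last paragraph is neutral setup; the last paragraph asserts that the algebraic pieces ``should assemble'' into $-\tfrac12 e_1(\vect\lam)$ and that the two $E(m)/K(m)$ contributions cancel ``after invoking Legendre's relation together with the addition law for the Jacobi zeta function,'' without exhibiting either identity. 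This is not a routine omission. Since $iv=2\pi\Abel(\infty_+)$ is purely imaginary, evaluating $\theta_1'/\theta_1$ there forces the imaginary (complementary-modulus) transformation of the Jacobi zeta function, in which Legendre's relation enters with a definite coefficient and every half-period characteristic and square-root branch contributes a sign; a single sign slip destroys the cancellation, so the claim cannot be checked from what is written. Note also that the computation of $T$ in \eqref{FP1}--\eqref{FP2} pins down the value of a \emph{ratio} of theta functions at the Abel image of $\infty_+$; ratios have purely elliptic (algebraic) logarithmic derivatives, whereas the single quotient $\theta_1'/\theta_1$ retains the zeta-type transcendental part, so that computation alone does not deliver even the algebraic part of your theta term without further work on $\cn$, $\dn$ and their branches.

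For comparison, the paper's proof avoids all elliptic-integral evaluation by a Liouville-type argument on $\SSS_1$: the combination $G(P)=2\pi c_\nu\,\theta_1'\bigl(2\pi\int_{\lam_+}^P\nu\bigr)/\theta_1\bigl(2\pi\int_{\lam_+}^P\nu\bigr)+2\int_{\lam_+}^P(\omega^{(0)}-d\lambda)$ is single-valued (the $b$-period of $\omega^{(0)}-d\lambda$ cancels the monodromy of $d\log\theta_1$) and meromorphic, with exactly the same five simple poles and residues as the elementary function $\widetilde G(P)=\frac{dR}{d\lambda}-2\bigl(\lambda(P)-\lam_+\bigr)$; their difference is a constant which vanishes at $\lam_+$, so $G=\widetilde G$, and evaluating $\widetilde G$ at $\infty_+$ gives $2\lam_+-\tfrac12 e_1(\vect\lam)$, which is the proposition. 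If you want to keep your computational route, you must actually carry out and verify the two elliptic-integral evaluations and their cancellation (Byrd--Friedman style, with all characteristics tracked); the far shorter fix is to repackage your two terms into precisely this function $G(P)$ and run the residue comparison, which replaces the Legendre/zeta bookkeeping entirely.
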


\begin{proof}
The function
\begin{equation*}
	G(P) =  2\pi c_\nu \frac{\theta_1' \lp2\pi \int_{\lam_+}^P \nu \rp}{\theta_1 \lp2\pi \int_{\lam_+}^P \nu \rp} + 2\int_{\lam_+}^P (\omega^{(0)} - d\lambda) 
	= \sum_{k=1}^4 \frac{\theta_k' \lp \pi \int_{\lam_+}^P \nu \rp}{\theta_k \lp \pi \int_{\lam_+}^P \nu \rp} + 2\int_{\lam_+}^P (\omega^{(0)} - d\lambda)
\end{equation*}
is a single valued on the Riemann surface $\SSS_1$ and by definition $\lim_{P \to \infty_+} G(P) = K$. The single-valuedness follows from the relations $d\log \theta_k(x+n\pi + m \pi \tau) = d\log \theta_k(x) - 2im$, $\oint_a (\omega^{(0)} - d\lambda) = 0$ and $\oint_b (\omega^{(0)} - d\lambda) = -2\pi i c_\nu$. From the second representation for $G(P)$ above it is clear that $G$ is meromorphic over $\SSS_1$ with five simple poles at $\lam_k,\ k=1,\dots, 4,$ and $\infty_-$, with residues
$$
	\res_{P=\lam_k} G(P) = \frac{1}{2} \prod_{j \neq k} (\lam_k - \lam_j)^{1/2}, \qquad
	\res_{P=\infty_2} G(P) = -4.
$$
The function 
$$
	\widetilde G(P ) = \od{R}{\lam} (\lam(P)) - 2(\lam(P) - \lam_+)
$$ 
is also meromorphic on $\SSS_1$ with the same poles and residues, so that the difference $G(P) - \widetilde G(P)$ is constant. However, expanding the difference at $\lam_+$ we find that $G(P) - \widetilde G(P) = \bigo{ (\lam(P) - \lam_+)^{1/2} }$ so $G(P) = \widetilde G(P)$. The result follows from observing that $\lim_{P \to \infty_+} \widetilde G(P) = V$.
\end{proof}

It immediately follows from the proposition that 
\begin{equation}\label{u2}
	u(x,t) = \frac{a_1 a_2 a_3}{\rho} + V,
\end{equation}
which is in perfect agreement with the Whitham theory for the genus one self-similar solutions of NLS \eqref{one phase solution}.

\subsection{The far right field: $\tau > -\frac{1}{2} \lp \lambda_+  + \lambda_- - 2 \rp 
	+ \frac{2(1+\lambda_-)(1+\lambda_+)} {\lambda_+ + \lambda_- + 2} $}
	
When the moving branch point of the two cut $g$-function collides with $-1$, the left cut closes and what remains is a one-cut $g$-function on the interval $(\lambda_-, \lambda_+)$, as one would expect for the far right field. 
The right field $g$-function is given by
\begin{equation}\label{3.20}
	g(z) = \int_{\lambda_+}^z d\theta - 2t \frac{ (\lambda-\xi_-)(\lambda-\xi_+)}{\RR(\lambda; \lambda_-,\lambda_+)} d\lambda 
	= \theta \big|^z_{\lambda_+} - t \RR(\lambda; \lambda_-,\lambda_+)  (\lambda - \xi_0)
\end{equation}
where
\begin{equation}
	\begin{gathered}
	\xi_0 = -\frac{1}{2} \lp \lambda_+ + \lambda_- + 2\tau \rp \\
	\xi_\pm = \frac{1}{4} \lp \lambda_+ + \lambda_- - \tau \pm 
	\sqrt{ (\lambda_+ + \lambda_- + \tau)^2 + 2(\lambda_+ - \lambda_-)^2} \rp
	\end{gathered}
\end{equation}

In order for the two-cut $g$-function to degenerate continuously into this equation we need $\xi_-(\tau) = -1$. This is exactly the condition which bounds the far-right field:
$$
	\xi_-(\tau) < -1 
	\quad \Longleftrightarrow \quad
	\tau > -\frac{1}{2} \lp \lambda_+  + \lambda_- - 2 \rp 
	+ \frac{2(1+\lambda_-)(1+\lambda_+)} {2+\lambda_- + \lambda_+}.
$$
As such the modified phase function
\begin{equation}
	\begin{aligned}
	\varphi(z) 
	&= 2t \int_{\xi_-}^z  \frac{ (\lambda-\xi_-)(\lambda-\xi_+)}{\RR(\lambda; \lambda_-,\lambda_+)} d\lambda \\
	&= t \, \RR(z;\lambda_-, \lambda_+)(z + \frac{\lambda_- + \lambda_+}{2} + \tau)
	\end{aligned}
\end{equation}
has an imaginary sign table resembling Figure~\ref{fig:1}a. We open contours $\Gamma_i, \, i=1,2$ from $\xi_-(\tau)$ which divide $\C^+$ into three sectors $\Omega_i, \, i=1,2,3$ as shown in Figure~\ref{fig:case1fig5}.

\begin{figure}[t]
	\begin{center}
		\includegraphics[width=.75\textwidth]{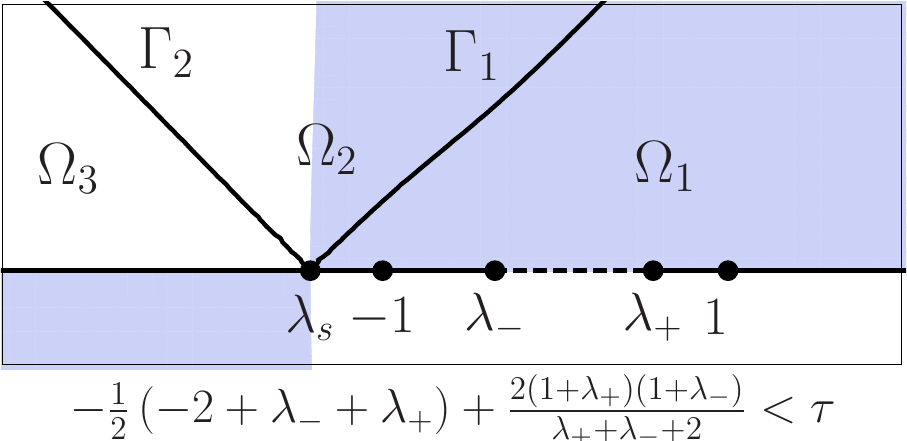}
		\caption{
		The contours $\Gamma_i$ and regions $\Omega_i$ used to define the 
		map $M \mapsto N$ (c.f. \eqref{Lenses}) for $\tau =x/t$ in the right planar zone 
		(defined above). The stationary point $\xi_-(\tau)$ is a decreasing function of 
		$\tau$; the lower boundary of the zone is characterized by the collision $\xi_-(\tau) = -1$.
		Blue regions correspond to $\imag \varphi >0$ and white regions $\imag \varphi < 0$.
		\label{fig:case1fig5}
		}
	\end{center}	
\end{figure}	

The result of \eqref{gtrans}-\eqref{Lenses} using \eqref{3.20} is the following RHP for $N(z)$:

\begin{rhp}\label{rhp:N right} 
Find a $2\times2$ matrix-valued function $N$ with the following properties
\begin{enumerate}[1.]
	\item $N(z)$ is analytic in $\C \backslash \Gamma_N$, 
	$\Gamma_N = (-\infty, \xi_-(\tau)) \cup (\lambda_-, \lambda_+) \bigcup_{i=1}^2 (\Gamma_i \cup \Gamma_i^*)$.
	\item $N(z) = I + \bigo{z^{-1} }$ as $z\to \infty$.
	\item $N(z)$ takes continuous boundary values on $\Gamma_N$ away from points of self intersection and branch points which satisfy the jump relation $N_+(z) = N_-(z) V_N(z)$ where
	\begin{align}\label{3.15}
		V_N(z) = 
		\begin{cases}
			(1 - r(z) r^*(z) )^{\sig} & z \in (-\infty, \xi_-(\tau))  \\
			\offdiag{ -e^{-2i \theta(\lambda_+)/\eps} }{ e^{2i \theta(\lambda_+)/\eps} }
				& z \in (\lambda_-, \lambda_+) \\
			\tril{r(z) e^{2i (\varphi(z) + \theta(\lambda_+))/\eps}} & z \in \Gamma_1 \smallskip \\
			\triu{ \frac{ -r^*(z)}{1 - r(z) r^*(z) } e^{-2i( \varphi(z) + \theta(\lambda_+))/\eps} } 
			& z \in \Gamma_2
		\end{cases}
	\end{align}
	\item $N(z)$ is bounded except at the points $\lambda_+$ and $\lambda_- $ where
	\begin{equation}
		N(z) = \bigo{ \begin{matrix} 
				(z-p)^{-1/4} & (z-p)^{-1/4} \\ 
				(z-p)^{-1/4} & (z-p)^{-1/4} 
				\end{matrix} },	\quad  p \in  \{\lambda_+, \lambda_- \}.
	\end{equation}	
\end{enumerate}
\end{rhp}

\subsubsection{Constructing a parametrix for the far right field}

Clearly, the jump matrices along $\Gamma_i$, $i=1,2$ are near identity at any fixed distance from $\xi_-$. The remaining jumps on the real axis can be dealt with as before. In fact, comparing RHP~\ref{rhp:N right} to RHP~\ref{rhp:N left} we see that the problems in the far right field is a simpler version of that for the left field, the twist jumps along $(-1,1)$ are exchanged for a simpler twist along $(\lam_-, \lam_+)$, and the diagonal jump $(1-|r(z)|^2)^\sig$ lies only on $(-\infty, \xi_-(\tau))$ which is separated from the twist. As such the parametrix is constructed in the same way, but with less effort needed to construct the scalar function $D(z)$. 

Define
\begin{equation}
	D(z) = \exp \lb  \frac{ i \theta(\lambda_+)}{\eps} + 
	\frac{\RR(z; \lambda_+, \lambda_-)}{2\pi i} \int_{-\infty}^{\xi_-(\tau)} 
	\frac{\log \lp 1 - |r(\lambda)|^2 \rp}{\RR(\lambda; \lambda_+, \lambda_-)} 
	\frac{d \lambda}{\lambda -z} \rb.
\end{equation}
so that 
\begin{equation}
	D(\infty) = \exp \lb  \frac{ i \theta(\lambda_+)}{\eps} - 
	\frac{1}{2\pi i} \int_{-\infty}^{\xi_-(\tau)} 
	\frac{\log \lp 1 - |r(\lambda)|^2 \rp}{\RR(\lambda; \lambda_+, \lambda_-)} 
	d\lambda \rb
\end{equation}

Then introducing a fixed neighborhood $\U_{\xi_-}$ of the stationary point $\xi_-(\tau)$ which remains bounded away from $-1$, we can write the solution $N(z)$ to RHP~\ref{rhp:N right} in the form
\begin{equation}
	N(z) = \begin{cases} 
		D(\infty)^{-\sig} E(z) \Ecal(z;\lam_+, \lam_-) D(z)^\sig 
		& z \in \C \backslash \U_{\xi_-(\tau)} \\
		D(\infty)^{-\sig} E(z) \Psi(z) D(z)^\sig 
		& z \in \U_{\xi_-} 
		\end{cases}
\end{equation}	
where $\Ecal(z;\lam_+,\lam_-)$, defined by \eqref{0.4b}, is related to the plane wave solution of the Lax-Pair \eqref{0.3} for a constant plane wave with Riemann invariants $\lam_\pm$. The need for a different model $\Psi$ in the neighborhood $\U_{\xi_-}$ of $\xi_-$ is that inside this neighborhood the jump matrices of $N(z)$ along each $\Gamma_i$ cannot be uniformly approximated by identity. Nevertheless, a local model can be constructed which exactly matches the jump matrices along each $\Gamma_i$, at the cost of introducing a matching error on the boundary $\partial \U_{\xi_-}$ of the neighborhood. The construction of this model from parabolic cylinder functions is standard in the Riemann-Hilbert literature, see \cite{JM11} for details of its construction.
The important result is jump along $\partial \U_{\xi_-}$ is $\bigo{ \sqrt{\frac{\eps}{t}} \log \frac{\eps}{t}}$. As such the Riemann-Hilbert problem for the resulting error matrix $E(z)$ is in the small norm class and using standard estimates one can show that $E(z) = I + \bigo{ \sqrt{\frac{\eps}{t}} \log \frac{\eps}{t}}$, moreover $E(z)$ admits an asymptotic expansion whose terms, given sufficient effort, can be explicitly computed.

The resulting behavior of the solution of \eqref{0.1}-\eqref{0.2} for $\tau > -\frac{1}{2} \lp \lambda_+  + \lambda_- - 2 \rp + \frac{2(1+\lambda_-)(1+\lambda_+)} {\lambda_+ + \lambda_- + 2} $ is given by 
\begin{equation}
	\begin{gathered}
	\psi(x,t) = \sqrt{\rho}  e^{i (k x - \omega t)/\eps}  e^{ - i \chi(x/t) } + \bigo{ \sqrt{\frac{\eps}{t}} \log \frac{\eps}{t}  } \\
	 \rho = \lp \frac{\lambda_+ - \lambda_-}{2} \rp^2, 
	 \qquad k = -(\lambda_+ + \lambda_-), 
	 \qquad  
	 \omega = \frac{1}{2} k^2 + \rho \\
	\chi(\tau) = 
	\frac{1}{\pi}  \int_{-\infty}^{\xi_-(\tau)}  \frac{\log(1 -|r(z)|^2 )}{\sqrt{(\lambda_+-z)(\lambda_- - z)}} dz 
	\end{gathered}
\end{equation}

\subsection*{Acknowledgments} The author was partially supported by the European Research Council Advanced Grant FroM-PDE, by PRIN 2010-11 Grant ``Geometric and analytic theory of Hamiltonian systems in finite and infinite dimensions" of Italian Ministry of Universities and Researches and by the FP7 IRSES grant RIMMP ``Random and Integrable Models in Mathematical Physics". The author would also like to thank Prof. Tamara Grava for comments which undoubtably improved the manuscript.

\bibliographystyle{abbrv}
\bibliography{bibliography}

\end{document}